\theoremstyle{plain}
\newtheorem{theorem}[thm]{Theorem}
\newtheorem{proposition}[thm]{Proposition}
\newtheorem{lemma}[thm]{Lemma}
\newtheorem{corollary}[thm]{Corollary}
\newtheorem{remark}[thm]{Remark}
\newcommand{\no}[1]{}
\newcommand{\stress}[1]{}
\newcommand{\note}[1]{}
\newcommand{\mnote}[1]{}
\DeclareFontFamily{OT1}{pzc}{}
\DeclareFontShape{OT1}{pzc}{m}{it}%
             {<-> s * [1.130] pzcmi7t}{}
\DeclareMathAlphabet{\mathpzc}{OT1}{pzc}%
                                 {m}{it}
\newcommand{\msf}[1]{\ensuremath{\mathsf{#1}}\xspace}
\newcommand{\aA}{\mathbf{a}} 
\newcommand{\bB}{\mathbf{b}} 
\newcommand{\dD}{\mathbf{d}} 
\newcommand{\tT}{\mathbf{t}} 
\newcommand{\data}{\mathit{data}} 
\newcommand{\conc}{{\cdot}} 
\newcommand{\midd}{\mathrel{\;\mid\;}}
\newcommand{\pos}{\msf{POS}}
\newcommand{\tpos}{\msf{pos}}
\newcommand{\talph}{\msf{alph}}
\newcommand{\Trees}{\textit{Trees}}
\newcommand{\xml}{\textsc{xml}\xspace}
\newcommand{\tup}[1]{\langle #1 \rangle}
\newcommand{\dbracket}[1]{[\![ #1 ]\!]}
\newcommand{\abracket}[1]{[\![ #1 ]\!]}
\newcommand{\mbb}[1]{\ensuremath{\mathbb{#1}}\xspace}
\newcommand{\N}{\mbb{N}}
\newcommand{\A}{\mbb{A}}
\newcommand{\B}{\mbb{B}}
\newcommand{\E}{\mbb{E}}
\newcommand{\Ealt}{\mathbb{F}}
\newcommand{\exptime}{\textsc{ExpTime}\xspace}
\newcommand{\expspace}{\textsc{ExpSpace}\xspace}
\newcommand{\D}{\ensuremath{\mathbb{D}}\xspace}
\newcommand{\eps}{\varepsilon}
\newcommand{\coloneqq}{\mathrel{\mathop:}=}
\newcommand{\Coloneqq}{\mathrel{{\mathop:}{\mathop:}}=}
\newcommand{\cl}[1]{\ensuremath{\mathpzc{#1}}\xspace}
\newcommand{\rtrig}{{\rightarrow^*}}
\newcommand{\rtlef}{{{}^*\!\!\leftarrow}}
\newcommand{\upw}{{\uparrow}}
\newcommand{\rtupw}{{\uparrow^*}}
\newcommand{\dow}{{\downarrow}}
\newcommand{\rtdow}{{\downarrow_*}}
\newcommand{\opguess}{\msf{guess}}
\newcommand{\opspread}{\msf{spread}}
\newcommand{\atra}{\msf{ATRA}}
\newcommand{\nsubf}{\msf{nsub}}
\newcommand{\psubf}{\msf{psub}}
\newcommand{\down}{\downarrow}
\newcommand{\td}{\downarrow^{\!*}\!}
\newcommand{\xpath}{\msf{XPath}}
\newcommand{\rxpath}{\msf{regXPath}}
\newcommand{\vxpath}{\ensuremath{\xpath(\frak V, =)}\xspace}
\newcommand{\ara}{\msf{ARA}}
\newcommand{\wsts}{\textsc{wsts}\xspace}
\newcommand{\wqo}{\msf{wqo}}%
\newcommand\BUDTA{\msf{BUDA}}
\newcommand{\budta}{\BUDTA}
\newcommand\set[1]{\ensuremath{\{#1\}}\xspace}
\newcommand\subsets{\ensuremath{\wp}\xspace}
\newcommand{\prd}{\hspace{1.3pt}{\otimes}\hspace{1.3pt}}
\newcommand{\DataMonoid}{\ensuremath{\Gamma}\xspace}
\newcommand{\DataState}{\ensuremath{\Delta}\xspace}
\newcommand{\paragr}[1]{\vspace{.25em}\par\noindent\textbf{#1}\hspace{3pt}}
\def\ie{{\em i.e.}}
\def\cf{{\em cf.}}
\begin{document}

\title{Bottom-up automata on data trees and vertical XPath}

\author{Diego Figueira}	
\address{CNRS, LaBRI, and University of Edinburgh}	

\author{Luc Segoufin}	
\address{INRIA and ENS Cachan, LSV}	


\begin{abstract}
  \noindent A data tree is a finite tree whose every node carries a label from a
  finite alphabet and a datum from some infinite domain. We introduce a new
  model of automata over unranked data trees with a decidable emptiness
  problem.  It is essentially a bottom-up alternating automaton with one
  register that can store one data value and can be used to perform equality
  tests with the data values occurring within the subtree of the current
  node. We show that it captures the expressive power of the vertical fragment
  of \xpath{} ---containing the child, descendant, parent and ancestor axes---
  obtaining thus a decision procedure for its satisfiability problem.
\end{abstract}


\maketitle

 \section{Introduction}
We study formalisms for data trees.  A data tree is a finite tree where each position
carries a label from a finite alphabet and a \emph{datum} from some infinite
domain.  This structure has been considered in the realm of semistructured
data, timed automata, program verification, and generally in systems
manipulating data values. Finding decidable logics or automata models over
data trees is an important quest when studying data-driven systems.

In particular data trees can model \xml documents. There exist many formalisms
 to specify or query \xml documents. For static analysis or optimization
purposes it is often necessary to test whether two properties or queries over
\xml documents expressed in some formalism are equivalent. This problem usually
boils down to a satisfiability question. One such formalism to express
properties of \xml documents is the logic \xpath ---the most widely used node selection language for \xml. Although satisfiability of
\xpath in the presence of data values is undecidable, there are some known
decidable data-aware
fragments~\cite{F09,Fig10,Fig11,Fig13,BFG08,BMSS09:xml:jacm}. Here, we investigate a rather big fragment that
nonetheless is decidable. \emph{Vertical} \xpath is the fragment that contains
all downward and upward axes, but no \emph{horizontal} axis is allowed.

We introduce a novel automaton model that captures vertical \xpath. We show
that the automaton has a decidable emptiness problem and therefore that the
satisfiability problem of vertical \xpath is decidable. The {\bf B}ottom-{\bf
  U}p {\bf D}ata {\bf A}utomata (or \budta) are bottom-up alternating tree
automata with one register to store and compare data values.  Further, these
automata can compare the data value currently stored in the register with the
data value of a descendant node, reached by a downward path satisfying a given
regular property. Hence, in some sense, it has a two-way behavior. However,
they cannot test horizontal properties on the siblings of the tree, like ``the
root has exactly three children''.

Our main technical result shows the decidability of the emptiness problem of
this automaton model. We show this through a reduction to the coverability
problem of a well-structured transition system (\wsts~\cite{FS01}), that is,
the problem of whether, given two elements $x,y$, an element greater or equal
to $y$ can be reached starting from $x$. Each \BUDTA automaton is associated
with a transition system, in such a way that a derivation in this transition
system corresponds to a run of the automaton, and vice-versa. The domain of the
transition system consists in the \emph{extended configurations} of the
automaton, which contain all the information necessary to preserve from a
(partial) bottom-up run of the automaton in a subtree in order to continue the
simulation of the run from there.  On the one hand, we show that \BUDTA can be
simulated using an appropriate transition relation on sets of extended
configurations. On the other hand, we exhibit a well-quasi-order (\wqo) on
those extended configurations and show that the transition relation is
``monotone'' relative to this \wqo. This makes the coverability problem (and
hence the emptiness problem) decidable.

Our decision algorithm is not primitive recursive. However it follows
from~\cite{FS09} that there cannot be a primitive recursive decision algorithm
for vertical \xpath.

In terms of expressive power, we show that \BUDTA can express any node
expression of the vertical fragment of \xpath{}.  \msf{Core\text{-}XPath} (term
coined in~\cite{GKP05}) is the fragment of \xpath~1.0 that captures its
navigational behavior, but cannot express any property involving data. It is
easily shown to be decidable. The extension of this language with the
possibility to make equality and inequality tests between data values is named
\msf{Core\text{-}Data\text{-}XPath} in \cite{BMSS09:xml:jacm}, and it has an
undecidable satisfiability problem~\cite{GF05}.  By ``vertical \xpath'' we
denote the fragment of \msf{Core\text{-}Data}-\xpath that can only use the
downward axes \textsc{child} and \textsc{descendant} and the upward axes
\textsc{parent} and \textsc{ancestor} (no navigation among siblings is
allowed).  It follows from our work that vertical \xpath is decidable, settling
an open question \cite[Question~5.10]{BK08}.

\bigskip

\paragr{\bf Related work.}
A model of \emph{top-down} tree automata with one register and alternating
control (\atra) is introduced in~\cite{JL11}, where the decidability of its
emptiness problem is proved. \atra are used to show the decidability of
temporal logics extended with a ``freeze'' operator.  This model of automata
was extended in~\cite{Fig10} with the name $\atra(\opguess,\opspread)$ in order
to prove the decidability of the \emph{forward} fragment of \xpath, allowing
only axes navigating downward or rightward (\textsc{next-sibling} and
\textsc{following-sibling}). The \BUDTA and \atra automata models are incomparable: \atra
can express all regular tree languages, but \BUDTA cannot; while \BUDTA can
express unary inclusion dependency properties (like ``the data values labeled
by $a$ is a subset of those labeled by $b$''), but \atra cannot. In order to
capture vertical \xpath, the switch from top-down to bottom-up seems necessary
to express formulas with upward navigation, and this also makes the
decidability of the emptiness problem considerably more difficult.
In~\cite{Fig10}, the decidability of the forward fragment of \xpath is also
obtained using a \wsts. However, the automata model and therefore also the
transition system derived from it, are significantly  different from \BUDTA and
the transition system we derive from it. In particular they cannot traverse a
tree in the same way.

Another decidable fragment of \xpath on data trees is the \emph{downward} fragment of XPath, strictly contained the vertical fragment treated here, where navigation can be done only through the child and descendant axes. This fragment is known to be decidable, \exptime-complete \cite{F09,Fig12b}. In \cite{Fig13} it is shown the decidability of the satisfiability problem for XPath where navigation can be done going downwards, rightwards or leftwards in the XML document but using only reflexive-transitive axes. That is,  where navigation is done using the \xpath axes $\rtdow$, $\rtrig$, and $\rtlef$. The complexity is of 3\expspace, and this in sharp contrast with the fact that having strict (non-reflexive) transitive axes makes the satisfiability problem undecidable.

The paper~\cite{BK08} contains a comprehensive survey
of the known decidability results for various fragments of \xpath, most of
which cannot access data values. In the presence of data values, the notable
new results since the publication of \cite{BK08} are the downward \cite{F09}
and the forward \cite{Fig10} fragments, as well as the fragment containing only
the successor axis~\cite{BMSS09:xml:jacm} (the latter is closely related to
first-order logic with two variables), or containing reflexive-transitive relations (such as descendant, or the reflexive-transitive closure of the next/previous sibling relation) \cite{Fig11,Fig13}. As already mentioned, this paper solves
one of the remaining open problems of~\cite{BK08}.


\bigskip

\paragr{\bf Organization.} In
Section~\ref{section-automata} we introduce the \BUDTA model and we show  that it
captures vertical \xpath in Section~\ref{section-xpath}. The associated well-structured transition system and
the proof to show the decidability of its reachability is in
Section~\ref{sec:wsts-budta}. 

This paper is a journal version of~\cite{FigueiraS11}. Compared to the conference paper, we have modified and simplified significantly the automata model and the associated \wsts.


 \section{Preliminaries}\label{sec-prelim}

\newcommand{\rdc}{\msf{rdc}}
\newcommand{\aSet}{S}
\newcommand{\aSetk}{K}
\newcommand{\aSetb}{T}
\newcommand{\aSetc}{U}
\newcommand{\aSetd}{V}
\newcommand{\lqdom}{\leq_{\subsets}}
\newcommand{\lqemb}{\sqsubseteq}
\newcommand{\subsetsf}{\wp_{<\infty}} 

\newcommand{\attrxpath}{\msf{attrXPath}}

\newtheorem{dicksonsl}{Dickson's Lemma}
\newtheorem{higmansl}{Higman's Lemma}
\newcommand{\Np}{\N_{+}} 
\newcommand{\anAut}{\ensuremath{\cl{A}}\xspace}

\paragr{\bf Basic notation.} Let $\subsets(S)$ denote the set of subsets of
$S$, and $\subsetsf(S)$ be the set of \emph{finite} subsets of $S$.  Let
$\N=\set{0,1,2,\dotsc}$, $\Np=\set{1,2,3,\dotsc}$, and let $[n] \coloneqq
\set{1, \dotsc, n}$ for any $n \in \Np$. We fix once and for all $\D$ to be any
infinite domain of data values; for simplicity in our examples we will consider
$\D = \N$. In general we use letters $\A$, $\B$ for finite alphabets, the
letter $\D$ for an infinite alphabet and the letters $\E$ and $\Ealt$ for any
kind of alphabet.  By $\E^*$ we denote the set of finite sequences over $\E$,
by $\E^+$ the set of finite sequences with at least one element over $\E$, and
by $\E^\omega$ the set of infinite sequences over $\E$. We write $\epsilon$ for
the empty sequence and `$\conc$' as the concatenation operator between
sequences.  We write $| S |$ to denote the length of $S$ (if $S$ is a finite
sequence), or its cardinality (if $S$ is a set).

\paragr{\bf Regular languages.}
We denote by $\text{REG}(\A)$ the set of regular expressions over the finite
alphabet \A.  We make use of the many characterizations of regular languages
over a finite alphabet \A. 
In particular, we use that a word language $\mathscr{L} \subseteq
\A^*$ is regular if{f}  it satisfies one the following equivalent
properties:
\begin{itemize}
\item there is a deterministic (or non-deterministic) finite automaton   recognizing $\mathscr{L}$,
\item it is described by a regular expression,
\item there is a finite monoid $(M, \conc )$ with a distinguished subset $T \subseteq  M$, and a monoid homomorphism $h : \A^* \to M$ such that $w \in \mathscr{L}$ if{f} $h(w)\in T$,
\item there is a finite semigroup $(S, \conc )$ with a distinguished subset $T \subseteq  S$, and a semigroup homomorphism $h : \A^* \to S$ such that for all $w$ with $|w|>0$, $w \in \mathscr{L}$ if{f} $h(w)\in T$.
\end{itemize}

Depending on the section, in order to clarify the presentation, we will use the characterization that fits the best
our needs.

\newcommand{\descendant}{\succeq} \newcommand{\descendantstr}{\succ}
\paragr{Unranked finite trees.}
By $\Trees(\E)$ we denote the set of finite ordered and unranked trees over an
alphabet $\E$.  We view each \emph{position} in a tree as an element of
$(\Np)^*$. Formally, we define $\pos \subseteq \subsetsf((\Np)^*)$ as the set
of sets of finite tree positions, such that: $X \in \pos$ if{f} (a) $X
\subseteq (\Np)^*, |X| < \infty$; (b) $X$ is prefix-closed; and (c) if $n\conc
(i+1) \in X$ for $i\in\Np$, then $n \conc i \in X$ for $n \in \Np$.  A tree is then a mapping
from a set of positions to letters of the alphabet $\Trees(\E) \coloneqq \set{
  \tT : P \to \E \mid P \in \pos}$. By $\tT|_x$ we denote the subtree of $\tT$
at position $x$: $\tT|_x( y) = \tT(x \conc y)$.  The root's position is the
empty string and we denote it by `$\epsilon$'. The position of any other node
in the tree is the concatenation of the position of its parent and the node's
index in the ordered list of siblings. In this work we use $v,w,x, y, z$
as variables for positions, and $i,j,k,l,m,n$ as variables for numbers. Thus two positions $x,y$ are sibling if they are of the form $x = z\cdot i$ and $y = z \cdot j$ for some $z,j$; whereas $x$ is the parent of $y$ (resp.\ $y$ is the child of $x$) if $y$ is of the form $x \cdot i$ for some $i$. Note that from the notation $x \conc i$ one knows that it is a position which is not the root, that has $x$ as parent position, and that has $i-1$ siblings to the left.

Given a tree $\tT \in \Trees(\E)$, $\tpos(\tT)$ denotes the domain of $\tT$,
which consists of the set of positions of the tree, and $\talph(\tT)=\E$
denotes the alphabet of the tree. From now on, we informally refer by `node' to
a position $x$ together with the value $\tT(x)$. 

Given two trees $\tT_1 \in \Trees(\E)$, $\tT_2 \in \Trees(\Ealt)$ such that
$\tpos(\tT_1)=\tpos(\tT_2)=P$, we define $\tT_1 \otimes \tT_2 : P \to (\E
{\times} \Ealt)$ as $(\tT_1 \otimes \tT_2) (x) = (\tT_1(x),\tT_2(x))$.

\smallskip

\begin{figure}[h]
\centering
    \includegraphics[scale=0.55]{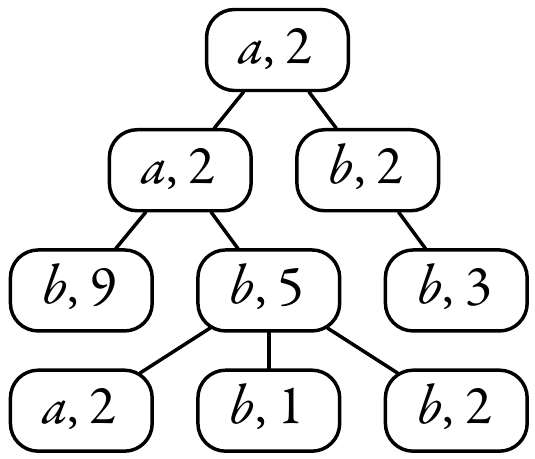}
  \caption{A data tree.}
\label{fig:data-tree}
\end{figure}

The set of \emph{data trees} over a finite alphabet $\A$ and an infinite
domain $\D$  is defined as $\Trees(\A {\times} \D)$. Note that every tree $\tT
\in \Trees(\A {\times} \D)$ can be decomposed into two trees $\aA \in
\Trees(\A)$ and $\dD \in \Trees(\D)$ such that $\tT = \aA \otimes \dD$.
Figure~\ref{fig:data-tree} shows an example of a data tree.
The notation for the set of data values used in a data tree is
$\data(\aA \otimes\dD) \coloneqq \set{\dD(x) \mid x \in \tpos(\dD)}$.
With an abuse of notation we write $\data(X)$ to denote all the elements of $\D$ contained in $X$, for whatever object $X$ may be.

\paragr{\bf Downward path.}
A \emph{downward path} starting at a node $x$ of a tree $\tT$ is the sequence
of labels of a simple path whose initial node is $x$ and going to a descendant of $x$. In other words, it is the word of the form $a_1\cdots
a_n$ where, for all $1\leq i \leq n$, $a_i=\tT(x_i)$ with $x_1=x$ and $x_{i+1}$
is a child of $x_i$.

\paragr{\bf {XP}ath on data trees.}
\label{xpath-main-definition}
Finally we define vertical \xpath, the fragment of \xpath where no horizontal
navigation is allowed.  We actually consider an extension of \xpath allowing
the Kleene star on \emph{any} path expression and we denote it by \rxpath.
Although here we define XPath (a language conceived for XML documents) over \emph{data trees} instead of over \emph{XML documents}, the main decidability result can be easily transfered to XPath over XML documents through a standard translation (see for instance~\cite{BMSS09:xml:jacm}).

Vertical \rxpath is a two-sorted language, with \emph{path} expressions (denoted by $\alpha, \beta, \gamma$) and \emph{node} expressions (denoted by $\varphi, \psi,
\eta$). Path expressions are binary relations resulting from composing the
child and parent relations (which are denoted
respectively by $\dow$ and $\upw$), and node expressions. Node
expressions are boolean formulas that test a property of a node, like for
example, that it has a certain label, or that it has a child labeled $a$ with
the same data value as an ancestor labeled $b$, which is expressed by $\tup{
  \dow[a] = \rtupw[b]}$. We write $\rxpath(\frak V,=)$ to denote this logic.  A
\emph{formula} of $\rxpath(\frak V,=)$ is either a node expression or a path
expression of the logic.  Its syntax and semantics are defined in
Figure~\ref{fig:xpath-semantics}.

\noindent
\begin{figure}
\newcommand{\smallinterlinespace}{\\[0pt]}
\newcommand{\indspace}{\mathrel{\phantom{=}}\mathop{\phantom{\{}}}

\noindent
\begin{align*}
  \alpha, \beta \; &\Coloneqq \; o \midd \alpha[\varphi] \midd [\varphi]\alpha \midd
  \alpha\beta \midd \alpha \cup \beta \midd \alpha^* &&o \in
  \{\varepsilon,\dow,\upw\}\ ,\\[-1pt]
  \varphi, \psi \; &\Coloneqq \; a \midd \lnot \varphi \midd \varphi \land \psi \midd \varphi \lor \psi
  \midd \tup{\alpha} \midd \tup{\alpha= \beta} \midd \tup{\alpha \not= \beta} &&
  a \in \A \ .
\end{align*}

\centerline{The syntax of vertical $\xpath$}

  \begin{align*} 
    \abracket{\dow}^\tT & = \{(x,x\conc i) \mid x \conc i \in \tpos(\tT)\} &
    \abracket{\upw}^\tT & = \{(x\conc i,x) \mid x \conc i \in \tpos(\tT)\} \smallinterlinespace
     \abracket{[\varphi]}^\tT & = \{(x,x) 
     \mid x \in \tpos(\tT), x \in \abracket{\varphi}^\tT\}
 &
 \abracket{\alpha^*}^\tT & =
    \textrm{\scriptsize the reflexive transitive closure of
    }\abracket{\alpha}^\tT\smallinterlinespace
    \abracket{\varepsilon}^\tT & = \{(x,x) \mid x \in \tpos(\tT)\}
    &
\abracket{\alpha \beta}^\tT & = \{(x,z) \mid
    \exists y ~.~(x,y) \in
    \abracket{\alpha}^\tT,
\smallinterlinespace
\abracket{\alpha \cup \beta}^\tT & =
    \abracket{\alpha}^\tT \cup \abracket{\beta}^\tT&&\indspace  (y,z) \in \abracket{\beta}^\tT\}    \smallinterlinespace
\abracket{a}^\tT & = \{ x \in \tpos(\tT) \mid
    \aA(x) = a \} &
    \abracket{\tup{\alpha}}^\tT & = \{ x \in \tpos(\tT) \mid \exists
    y ~.~ (x,y) \in \abracket{\alpha}^\tT \}\smallinterlinespace
    \abracket{\lnot \varphi}^\tT & =  \tpos(\tT) \setminus \abracket{\varphi}^\tT &
    \abracket{\varphi \land \psi}^\tT & = \abracket{\varphi}^\tT \cap
    \abracket{\psi}^\tT \smallinterlinespace
 \abracket{\tup{\alpha {=} \beta}}^\tT 
& = \{ x \in \tpos(\tT) \mid \exists y,\!z ~.~ (x,y) \in
    \abracket{\alpha}^\tT, &
    \abracket{\tup{\alpha {\not=} \beta}}^\tT & =  \{ x \in \tpos(\tT) \mid \exists y,\!z ~.~ (x,y) \in \abracket{\alpha}^\tT,\smallinterlinespace
&\indspace (x,z) \in
    \abracket{\beta}^\tT, \dD(y)=\dD(z)\}
    &&\indspace (x,z) \in \abracket{\beta}^\tT, \dD(y)\not=\dD(z)\}
  \end{align*}
\centerline{The semantics of vertical $\xpath$ over a data tree
  $\tT = \aA \otimes \dD$}

\caption{The syntax and semantics of vertical $\xpath$.}
\label{fig:xpath-semantics}
\end{figure}

As another example, we can select the nodes that have a descendant labeled $b$
with two children also labeled by $b$ with different data values by a formula
$\varphi = \tup{\td\![\, b \, \land \,
  \tup{\down\![b]\not=\down\![b]}\,]}$. Given a tree $\tT$ as in
Figure~\ref{fig:data-tree}, we have $\abracket{\varphi}^\tT=\{\epsilon, 1, 12\}$.

The satisfiability problem for $\rxpath(\frak V, =)$ is the problem of, given a
formula $\varphi$, whether there exists a data tree $\tT$ such that
$\abracket{\varphi}^\tT\neq \emptyset$.

Our main result on \xpath is the following.
\begin{theorem}\label{thm:rvxpath-decidable}
The satisfiability problem for $\rxpath(\frak V,=)$ is decidable.
\end{theorem}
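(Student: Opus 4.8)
The plan is to reduce the satisfiability problem for $\rxpath(\frak V,=)$ to the emptiness problem of the $\BUDTA$ automaton model, which by the paper's main technical result (Section~\ref{sec:wsts-budta}) is decidable. Concretely, given a node expression $\varphi$ of $\rxpath(\frak V,=)$, I would construct a $\BUDTA$ automaton $\anAut_\varphi$ over the alphabet $\A$ such that the data tree language recognised by $\anAut_\varphi$ is exactly the set of data trees $\tT$ with a position $x$ satisfying $x \in \abracket{\varphi}^\tT$; then $\varphi$ is satisfiable if{f} $\anAut_\varphi$ is non-empty. Since the detailed construction is the content of Section~\ref{section-xpath}, here I only sketch its shape.

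The first step is to handle the two-sorted structure of the logic: node expressions and path expressions must be compiled simultaneously. Path expressions built from $\dow$, $\upw$, composition, union and Kleene star denote binary relations that can be tracked by a finite automaton reading a vertical (downward or upward) path, so each path expression $\alpha$ translates to a regular property over $\A$ (possibly decorated with the node subformulas appearing in guards $[\psi]$), exactly the kind of regular downward-path condition that $\BUDTA$ transitions can check. The key point is that $\BUDTA$, being bottom-up and alternating with one register, can (i) guess the node $x$ where $\varphi$ holds and then verify $\varphi$ by alternation over its boolean structure, (ii) verify an atom $\tup{\alpha}$, $\tup{\alpha=\beta}$ or $\tup{\alpha\neq\beta}$ by splitting each vertical path at its turning point (the highest node visited) into an upward segment and a downward segment, storing the relevant data value in the register at the turning point, and checking the two regular path conditions — the downward one natively, and the upward one by having the relevant ancestor, while the bottom-up run passes through it, verify that a register-carrying descendant configuration matches. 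Data equality and inequality tests are precisely what the register plus the descendant-comparison feature of $\BUDTA$ are designed for. Upward navigation, which is the genuinely new difficulty compared to the forward fragment of~\cite{Fig10}, is exactly why the bottom-up direction is needed: an upward path from $x$ ending at some ancestor $y$ is seen, in the bottom-up run, as a downward segment from $y$ to $x$ whose reverse must satisfy the (reversed) regular condition, and the register can be used to shuttle the data value across the split.

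I would organise the induction on the structure of formulas: maintain, for every subformula (node or path) of $\varphi$, a component of the automaton's control that is responsible for it, using alternation to run these components in parallel and to combine them according to $\land$, $\lor$, $\lnot$ (negation is handled by the closure of alternating automata under complementation, or simply by pushing negations to atoms via a normal form and taking duals). For a path subexpression I keep a pair of finite automata, one for each of the two possible orientations of the segments it contributes, together with a bit recording whether a register value has already been committed. The correctness proof then reduces to a routine inductive verification that the run of $\anAut_\varphi$ on $\aA\otimes\dD$ that accepts corresponds to a choice of witness positions and turning points realising the semantics in Figure~\ref{fig:xpath-semantics}, and conversely.

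The main obstacle I anticipate is not the boolean or regular bookkeeping but making the register discipline work uniformly for arbitrarily nested path expressions with data comparisons: a single register must suffice even when $\varphi$ nests many $\tup{\alpha=\beta}$ tests, so the construction must be careful to commit and release the register along the bottom-up run so that at any moment only one data value is live per branch of the alternation — alternation lets different branches use the register independently, but within one branch one must argue that comparisons can always be serialised down the path. Once the translation is in place and shown correct, Theorem~\ref{thm:rvxpath-decidable} follows immediately from the decidability of $\BUDTA$ emptiness, and the non-primitive-recursive lower bound mentioned in the introduction shows this is essentially the best one can hope for.
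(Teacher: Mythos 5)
Your overall strategy---compile $\varphi$ into a \BUDTA and invoke decidability of \BUDTA emptiness---is exactly the paper's, but your treatment of negation contains a genuine gap. You propose to handle $\lnot$ ``by the closure of alternating automata under complementation, or \dots\ by pushing negations to atoms and taking duals''. Neither works: \BUDTA is \emph{not} closed under complementation, and cannot be, since the paper shows (Proposition~\ref{prop:undec-dual}) that any extension of \BUDTA closed under effective complementation and intersection has an undecidable emptiness problem; the actions $\opguess$ and $\opuniv$ are not dual to one another. Pushing negations inward still leaves you with negated atoms $\lnot\tup{\alpha=\beta}$ and $\lnot\tup{\alpha\neq\beta}$, and these have no ``dual atom''. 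The paper handles them by a dedicated mechanism: a $\opuniv$ action spawns one thread per data value occurring in the subtree, each such thread uses the negative tests $\overline{\tup{\rexp}^=}$ and $\overline{\tup{\rexp}^{\neq}}$ to check that its value cannot simultaneously be reached via $\alpha$ and via $\beta$ with equal (resp.\ distinct) data values, and this obligation is re-issued at every ancestor as the run moves up. Without something of this sort your construction cannot express the negated data joins, which is where the real difficulty of vertical \xpath lies.

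A secondary under-specification: your ``turning point'' decomposition of a witnessing path into one upward and one downward segment is too coarse. Vertical path expressions freely interleave $\dow$ and $\upw$ under Kleene star, so a witnessing path may zigzag, making arbitrarily many downward excursions that return to the spine; the letters read on those excursions count towards membership in the regular language of $\alpha$. The paper copes with this by having the internal alphabet $\B$ record at each node the set of monoid values of all non-ascending looping paths rooted there (a guess verified bottom-up by dedicated threads), and by composing these loop summaries with the single up-steps along the spine. Your ``pair of finite automata, one per orientation'' does not account for these excursions. The register discipline you worry about is, by contrast, unproblematic: each alternating thread carries its own register, which is precisely how the paper serialises nested comparisons.
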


The proof of Theorem~\ref{thm:rvxpath-decidable} goes as follows.  We define a
model of automata running over data trees. This model of automata is
interesting on its own and the second main result of this paper shows that they
have a decidable emptiness problem. Finally we show that formulas of
$\rxpath(\frak V,=)$ can be translated into a \budta.

 \newcommand{\morphism}{h}
\newcommand{\Monoid}{\+S}
\newcommand{\up}{\textit{up}}
\newcommand{\aMonoid}{\mu}
\newcommand{\bMonoid}{\nu}
\newcommand{\automatonop}[1]{\msf{#1}}
\newcommand{\opunique}{\automatonop{unique}}
\newcommand{\opaccept}{\automatonop{accept}}
\newcommand{\opdup}{\automatonop{spread}}
\newcommand{\opuniv}{\automatonop{univ}}
\newcommand{\opnothing}{\automatonop{keep}}
\newcommand{\opdisjoin}{\automatonop{disjoint}}
\newcommand{\optestroot}{\automatonop{root}}
\newcommand{\optestnroot}{\overline{\automatonop{root}}}
\newcommand{\optestleaf}{\automatonop{leaf}}
\newcommand{\optestnleaf}{\overline{\automatonop{leaf}}}
\newcommand{\opstore}{\automatonop{store}}
\newcommand{\optesteq}{\automatonop{eq}}
\newcommand{\optestneq}{\overline{\automatonop{eq}}}
\newcommand{\deltaup}{\delta_\up}
\newcommand{\tranbudta}{\rightarrowtail}
\newcommand\Tests{\textsc{Tests}\xspace}
\newcommand\Actions{\textsc{Actions}\xspace}
\newcommand{\uptrans}{\ensuremath{\tranbudta_\up}}
\section{The automata model}\label{section-automata}

In this section we introduce our automata model. It is essentially a bottom-up
tree automaton with one register to store a data value and an alternating
control. We will see in Section~\ref{section-xpath} that these automata are
expressive enough to capture vertical \rxpath. In Section~\ref{sec:wsts-budta}
we will show that their emptiness problem is
decidable. Theorem~\ref{thm:rvxpath-decidable} then follows immediately.

A Bottom-Up Data Automaton (\BUDTA)  $\anAut$ runs over data trees of $\Trees(\A
\times\D)$ and it is defined as a tuple
$\anAut=(\A,\B,Q,q_0,\delta)$ where $\A$ is the finite
alphabet of the tree, $\B$ is an internal finite alphabet of the automaton
(whose purpose will be clear later), $Q$ is a finite set of states, $q_0$ is
the initial state, and $\delta$ is the transition function which is a finite set of
pairs of the form \emph{(test,action)} that will be described below.

Before we present the precise syntax and semantics of our automaton model, we
first give the intuition.  The automaton has one register, where it can store
and read a data value from $\D$, and it has alternating control. Hence, at any
moment several independent \emph{threads} of the automaton may be running in
parallel. Each thread has one register and consists of a state from $Q$ and a
data value from \D stored in the register. The automaton first guesses a finite
internal label from $\B$ for every node of the tree and all threads share
access to this finite information. This internal information can be viewed as a
synchronization feature between threads and will be necessary later for
capturing the expressive power of vertical \rxpath.  The automaton is
bottom-up, and it starts with one thread with state $q_0$ at every leaf of the
tree with an arbitrary data value in its register. From there, each thread
evolves independently according to the transition function $\delta$: If the
test part of a pair in $\delta$ is true then the thread can perform the
corresponding action, which may trigger the creation of new threads. We first
describe the set of possible tests the automata may perform and then the set of
their possible actions.

\newcommand{\rexp}{\mathit{exp}} The tests may consist of any conjunction of
the basic tests described below or their negation.  The automata can test the
current state, the label (from \A) and internal label (from \B) of the current
node and also whether the current node is the root, a leaf or an internal
node. The automata can test equality of the current data value with the one
stored in the register (denoted by $\optesteq$). Finally the automata can test
the existence of some downward path, starting from the current node and leading
to a node whose data value is (or is not) equal to the one currently stored in
the register, such that the path matches some regular expression on the labels.
For example, for a regular expression $\rexp$ over the alphabet $\A \times \B$,
the test $\tup{\rexp}^=$ checks the existence of a downward path that matches
$\rexp$, which starts at the current node and leads to a node whose data value
matches the one currently stored in the register. Similarly,
$\tup{\rexp}^{\not=}$ tests that it leads to a data value different from the
one currently in the register.

The precise set of possible basic tests is:
\begin{equation*}
\textup{B}\Tests=\set{p,\optesteq, \tup{\rexp}^= , \tup{\rexp}^{\not=}, \optestroot,
  \optestleaf, a , b~~|~~ \rexp \in \text{REG}(\A \times \B), p \in Q, a\in\A, b \in \B}.
\end{equation*}
If $x$ is a basic test, we will write $\overline{x}$ to denote the test
corresponding to the negation of $x$. For instance $\overline{\optesteq}$ tests
whether the current data value differs from the one stored in the register. The
possible set of tests is then:
\begin{equation*}
\Tests=\textup{B}\Tests \cup \overline{\textup{B}\Tests}.
\end{equation*}

Based on the result of a test the thread can perform an action. A basic action
either accepts (\opaccept) and the corresponding thread terminates, or
specifies a new state $p$ and a new content for the register for each thread it
generates, each of them moving up in the tree to the parent node. The possible
updates of the register are: keep the register's data value unchanged (denoted
by $\opnothing$), store the current data value in the register (denoted by
$\opstore$), store an arbitrary data value non-deterministically chosen
($\opguess$), or start a new thread for every data value of the subtree
($\opuniv$) of the current node.  Note that this last action creates
unboundedly many new threads. Altogether the precise set of possible basic
actions is:
\begin{equation*}
\Actions=\set{\opaccept, \opnothing(p), \opstore(p), \opguess(p), \opuniv(p)~~~|~~~ p \in Q}
\end{equation*}
and the set of actions is any conjunction of those. As usual, conjunction
corresponds to universality.
For example with an action of the form $a_1 \land a_2$ the automaton starts two
new threads, one specified by $a_1$ and one specified by $a_2$. If $a_2$ would
be $\opuniv(p)$ then it actually starts one new thread in state $p$ and data
value $d$ per data data value $d$ occurring in the subtree of the current node.

A transition is therefore a pair \emph{(test,action)} where \emph{test} is a
conjunction of basic tests in \Tests and \emph{action} is a conjunction of
basic actions in \Actions. There might be several rules involving the same
tests, corresponding to non-determinism. 

Before we move on to the formal definition of the language accepted by a
\BUDTA, we stress that the automaton model is not closed under complementation
because its set of actions are not closed under complementation: \opguess is a
form of existential quantification while \opuniv is a form of universal
quantification, but they are not dual. Actually, we will show in
Proposition~\ref{prop:undec-dual} that adding their dual would yield
undecidability of the model.

We now turn to the formal definition of the semantics of a \BUDTA. A data tree
$\aA \prd \dD \in \Trees(\A \times \D)$ is accepted by $\anAut$ if{f} there
exists an internal labeling $\bB \in \Trees(\B)$ with $\tpos(\bB) = \tpos(\aA
\prd \dD)$ such that there is an accepting run of $\anAut$ on $\tT=\aA \prd \bB \prd \dD$. We
focus now on the definition of a run.

We say that a thread $(q,d)$ makes a basic test $t\in\textup{B}\Tests$ true at a position
$x$ of $\tT$, and write $\tT,x,(q,d) \models t$, if:
\begin{itemize}
\item $t$ is one of the tests $p$, $a$, $b$, $\optestroot$, $\optestleaf$,
  $\optesteq$ and we have respectively $q=p$, $\aA(x)=a$, $\bB(x)=b$, $x$ is
  the root of $\tT$, $x$ is a leaf of $\tT$, $\dD(x)=d$,

\item $t$ is $\tup{\rexp}^=$ and there is a downward path in $\tT$ matching
  $\rexp$, starting at $x$ and ending at $y$ where $\dD(y) = d$. The case of
  $\tup{\rexp}^{\not=}$ is treated similarly replacing $\dD(y) = d$ by $\dD(y)
  \neq d$.
\end{itemize}

This definition and notation lifts to arbitrary Boolean combination of basic
tests in the obvious way. Note that $\overline{\tup{\rexp}^=}$ is not
$\tup{\rexp}^{\not=}$. The former is true if there no downward path matching
$\rexp$ and reaching the current data value, while the latter requires the
existence of a downward path matching $\rexp$ and reaching a
data value different from the current one.

\medskip

\label{def:run}
A \emph{configuration} of a \BUDTA \anAut is a set $\+C$ of threads, viewed
as a finite subset of $Q \times \D$.  A configuration $\+C$ is said to be
\emph{initial} if{f} it is the singleton $\{(q_0,e)\}$ for some $e \in \D$.

\newcommand\action{\textup{Ac}}
A \emph{run} $\rho$ of $\anAut$ on $\tT=\aA\prd\bB\prd\dD$ is a function
associating a configuration to any node $x$ of $\tT$ such that
\begin{itemize}
\item for any leaf $x$ of $\tT$, $\rho(x)$ is initial, 
\item for any inner position $x$ of $\tT$, whose parent is the position $y$,
  and for any $(q,d) \in \rho(x)$ there exists $(t,\action) \in \delta$ with
  $\action=\bigwedge_{j \in J} \action_{j}$ such that $\tT,x,(q,d) \models t$ and for any
  $j\in J$ we have:
\begin{itemize}
\item if $\action_{j}$ is $\opnothing(p)$ then $(p,d)\in \rho(y)$,
\item if $\action_{j}$ is $\opstore(p)$ then $(p,\dD(x))\in \rho(y)$,
\item if $\action_{j}$ is $\opguess(p)$ then $(p,e)\in \rho(y)$ for some $e \in \D$,
\item if $\action_{j}$ is $\opuniv(p)$ then for all $e \in \data(\tT|_x)$,
  $(p,e) \in \rho(y)$. 
\end{itemize}
\end{itemize}

The run $\rho$ on $\tT$ is \emph{accepting} if moreover for the root $y$ of
$\tT$ and all $(q,d) \in \rho(y)$ there exists $(t,\opaccept) \in \delta$ such
that $\tT,y,(q,d) \models t$. 


\subsection{Discussion}
\label{sec:discussion-BUDA}

\subsubsection{Semigroup notation}
For convenience of notation in the proofs, we shall use an equivalent
definition of $\BUDTA$ using \emph{semigroup homomorphisms} instead of regular
expressions. That is, we consider an automaton $\anAut \in\BUDTA$ as a tuple
$\anAut=(\A,\B,Q,q_0,\delta,\Monoid,\morphism)$ where $\Monoid$ is a finite
semigroup, $\morphism$ is a semigroup homomorphism from $(\A\times\B)^+$ to
$\Monoid$, and tests of the form $\tup{\mu}^=$ and $\tup{\mu}^{\neq}$ contain a
semigroup element $\mu \in \Monoid$. Hence, $\tup{\aMonoid}^=$ is true at $x$
in $\tT$ if there is a downward path in $\tT$ starting at $x$ and ending at a descendant
$y$, evaluating to $\aMonoid$ via $\morphism$ and such that $\dD(y) = d$.  The
case of $\tup{\aMonoid}^{\not=}$ is treated similarly replacing $\dD(y) = d$ by
$\dD(y) \neq d$. Note that since regular languages are exactly those recognized
by finite semigroups (recall Section~\ref{sec-prelim}) this is an equivalent automata model.

\subsubsection{Disjunction}
As mentioned earlier the automata model does not allow for disjunctions of
actions or tests. But in fact these can be added without changing the
expressivity of the automaton, by modifying the transition relation $\delta$:
\begin{itemize}
\item any automaton having transition with a disjunction of actions $(t, a_1
  \lor a_2)$ is equivalent to the automaton resulting from replacing $(t, a_1
  \lor a_2)$ with the transitions $(t,a_1)$ and $(t,a_2)$,
\item any automaton having transition with a disjunction of tests $(t_1 \lor
  t_2, a)$ is equivalent to the automaton resulting from replacing $(t_1 \lor
  t_2, a)$ with the transitions $(t_1,a)$ and $(t_2,a)$.
\end{itemize}
For this reason we will sometimes write disjunction of actions or of tests, as a shorthand for the equivalent automaton without disjunctions. 

We will also make use of a test $\tup{\mu}$ denoting $\tup{\mu}^= \lor \tup{\mu}^{\neq}$; and $\overline{\tup{\mu}}$ denoting $\overline{\tup{\mu}^=} \land \overline{\tup{\mu}^{\neq}}$.

\subsubsection{Closure properties}
We say that a model is closed under effective operation $O$, if it is closed
under $O$ and further the result of the application of the operation $O$ is
computable.
\begin{proposition}\label{prop:buda-iner-union}
The class \BUDTA is  closed under effective intersection and effective union.
\end{proposition}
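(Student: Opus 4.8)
The plan is, for both operations, to build a new automaton $\cl{A}$ that runs two copies of the given automata side by side. Write $\cl{B}$ and $\cl{C}$ for the two given \BUDTA, with internal alphabets $\Gamma$ and $\Lambda$, initial states $p_0$ and $r_0$, and transition functions $\delta_{\cl B}$ and $\delta_{\cl C}$. Splitting transitions where necessary, we may assume that every transition of $\cl{B}$ and of $\cl{C}$ tests exactly one state, and by the remarks just above we may use disjunctions of tests and of actions freely. The automaton $\cl{A}$ keeps the tree alphabet $\A$, takes as its states the disjoint union of the two state sets together with one fresh initial state $\hat q$, and its transition function contains a verbatim copy of each of $\delta_{\cl B}$ and $\delta_{\cl C}$, in which a thread lying in a state of $\cl{B}$ (resp.\ of $\cl{C}$) reads only the $\Gamma$-component (resp.\ the $\Lambda$-component) of the internal label; accept transitions are copied as well. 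Since no action ever moves a thread out of the state set it has entered, the $\cl{B}$-states and the $\cl{C}$-states evolve independently inside a run of $\cl{A}$. All that remains is to prescribe how $\hat q$ behaves, which is at the leaves, and this is where the two component runs are initialized.

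For \emph{intersection} I would make the internal alphabet of $\cl{A}$ the product $\Gamma\times\Lambda$ and let alternation do the work: at a leaf the $\hat q$-thread must simultaneously make the first move of $\cl{B}$ and the first move of $\cl{C}$, so for each pair of a leaf transition $(t_{\cl B},\action_{\cl B})$ of $\cl{B}$ (state test $p_0$) and a leaf transition $(t_{\cl C},\action_{\cl C})$ of $\cl{C}$ (state test $r_0$) I add to $\cl{A}$ the transition whose test has state $\hat q$ and conjoins the remaining parts of $t_{\cl B}$ and $t_{\cl C}$ (dropping pairs whose $\A$-label conditions are incompatible) and whose action is $\action_{\cl B}\land\action_{\cl C}$; the degenerate case in which the root is itself a leaf is handled directly, by precomputing which labels $\cl{B}$ and $\cl{C}$ accept on a one-node tree. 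One then verifies that $\cl{A}$ accepts a data tree iff $\cl{B}$ and $\cl{C}$ both do: in one direction restrict the run of $\cl{A}$ to its two state components, and in the other overlay a run of $\cl{B}$ and a run of $\cl{C}$, labeling each node of the tree with the pair of the two internal labels.

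The one genuinely delicate point, and the step I expect to take the most effort, is this initialization at the leaves: $\cl{B}$ and $\cl{C}$ each guess the initial register value at a leaf independently, whereas a leaf configuration of $\cl{A}$ must be a singleton, so $\cl{A}$ has only one register with which to feed both first moves. My plan is to first normalize $\cl{B}$ and $\cl{C}$ so that the only register-sensitive ingredient of a leaf transition is the inequality test $\overline{\optesteq}$: a leaf transition that forces the register to equal the current datum is replaced by one that drops this requirement and rewrites each $\opnothing$ into $\opstore$ (at a leaf both emit the node's own datum), an $\opnothing$ with no register constraint becomes $\opguess$, and a path test $\tup{\rexp}^{=}$ or $\tup{\rexp}^{\ne}$ at a leaf unfolds into a condition on the single-letter path together with $\optesteq$ or $\overline{\optesteq}$. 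After this step, whenever at most one of the two first moves carries $\overline{\optesteq}$ the shared register of $\cl{A}$ can just play that move's part while the other uses only threads carrying the leaf's datum or freshly guessed values; the residual case — both first moves must emit a thread whose value is guaranteed different from the leaf's datum — is the crux, and I would treat it with an additional device (for instance, having $\hat q$ emit the second such thread with a freshly guessed value and using a marker in the internal labeling to certify the inequality), expecting the bulk of the case analysis, and perhaps a further refinement of this device, to live here.

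For \emph{union} the intended device is nondeterminism: at a leaf the $\hat q$-thread commits to one of the two automata and begins its first move. As it stands this is unsound, because different leaves could commit differently and, acceptance being existential, $\cl{A}$ would accept trees in neither language (the $\cl{B}$-threads spawned by some leaves and the $\cl{C}$-threads spawned by the others each reach an accepting configuration). I would fix this by forcing global agreement through the internal labeling: take the internal alphabet to be the tagged union $(\{1\}\times\Gamma)\uplus(\{2\}\times\Lambda)$, require each leaf's commitment to match its own tag, and append to every accept transition the conjunct saying that no downward path from the current node reaches a node with the opposite tag — that is, $\overline{\tup{\rexp}}$ for the regular expression describing downward paths through a tag-$2$ node, taken in disjunction with the symmetric condition for tag $1$. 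Since every node lies on a downward path from the root, this conjunct, evaluated at the root, forces one tag throughout the tree, so every accepting run of $\cl{A}$ is a run of $\cl{B}$ (all tags $1$) or of $\cl{C}$ (all tags $2$), while conversely either kind of run is immediate; hence $\cl{A}$ recognizes $L(\cl{B})\cup L(\cl{C})$ (again handling one-node trees directly). Effectiveness needs no separate argument: in both cases $\cl{A}$ is obtained from $\cl{B}$ and $\cl{C}$ by the explicit finite operations described above.
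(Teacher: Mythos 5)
Your overall plan (alternation plus a product internal alphabet for intersection, nondeterminism plus a tagged internal alphabet for union) is the same idea the paper has in mind -- its own proof is a one-line appeal to alternation and nondeterminism -- but your write-up contains one genuine unresolved gap and one smaller soundness hole. The gap is exactly the one you flag: a leaf configuration is a singleton $\set{(\hat q,e)}$, so when \emph{both} component automata can only fire leaf transitions carrying $\overline{\optesteq}$ and propagating the register by $\opnothing$, the product automaton must emit two threads with \emph{independently} chosen values, each certified different from the leaf's datum, and it has only one register with which to certify such an inequality. Your fallback device does not close this. A value introduced by $\opguess$ materializes at the parent and can never again be compared with the datum of the particular leaf that spawned it: a finite internal marker cannot name that leaf, and testing the guessed value against \emph{all} marked children via $\overline{\tup{\rexp}^{=}}$ is too strong and destroys completeness, since the other automaton's run may legitimately reuse a sibling's datum. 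The failure is not cosmetic: take $\cl C$ whose only leaf transition demands $\overline{\optesteq}$ and does $\opnothing(r_1)$, and whose only accept transition for $r_1$ demands (via $\tup{\rexp}^{=}$ on two-letter paths) that the carried value equal that same leaf's datum; $\cl C$ accepts no two-node tree, yet your product automaton, guessing the leaf's datum for the $\cl C$-thread, does. The clean repair is already in the paper: pass through $\budta^\eps$ and Proposition~\ref{prop:budaeps-2-buda}. A non-moving $\opguess$ lets the single leaf thread spawn, \emph{at the leaf itself}, one $\eps$-thread per component with an independently guessed register, and each of these can then evaluate its own component's full leaf test (including $\overline{\optesteq}$) before moving up; this eliminates your normalization and the residual case entirely.

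For union, note that a thread may fire $\opaccept$ at a non-root node and terminate there, so the root configuration can be empty and your tag-consistency conjunct, attached only to accept transitions, is then never evaluated: with $\cl B=$ ``every leaf is an $a$'' and $\cl C=$ ``every leaf is a $b$'', each realized by a single leaf transition whose action is $\opaccept$, your $\cl A$ accepts a tree with one $a$-leaf tagged $1$ and one $b$-leaf tagged $2$, which lies in neither language. This is easy to repair -- have every leaf additionally spawn a checker thread that survives to the root, or have each thread verify on its way up that every node's tag agrees with its leaf's -- but as written the claim that every accepting run of $\cl A$ is a run of $\cl B$ or of $\cl C$ does not go through. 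With these two repairs your construction is correct and is, in substance, the argument the paper leaves implicit.
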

\begin{proof}
  This is straightforward from the fact that  the model has alternation and non-determinism.
\end{proof}

As already mentioned, the closure under complementation of \BUDTA yields an undecidable model.
\begin{proposition}\label{prop:undec-dual}
Any extension of \BUDTA closed under effective complementation and effective intersection has an undecidable emptiness problem.
\end{proposition}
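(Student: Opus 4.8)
The plan is to reduce the halting problem for two‑counter Minsky machines to the emptiness problem of an arbitrary extension $\+C$ of \BUDTA that is effectively closed under complementation and intersection; since the former problem is undecidable, so is the latter. Throughout, a machine computation will be encoded by a monadic data tree (a word), and I will arrange the encoding so that ``this word encodes a halting computation'' is the set difference $L(\anAut_1)\setminus L(\anAut_2)$ of two \BUDTA languages.

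Fix a Minsky machine $M$ with instructions $I_1,\dots,I_m$. A halting computation of $M$ is encoded by a word $b_0b_1\cdots b_n$ whose \emph{root} is (the first node of) $b_0$ and whose unique leaf lies in $b_n$; each block $b_t$ encodes a configuration as a header node whose finite label is the instruction $M$ executes at step $t$, followed by one node per unit of counter $1$ and then one node per unit of counter $2$ (the label recording which counter). A unit of counter $j$ at time $t$ is identified by the datum of its node, so that keeping, adding or dropping a datum mirrors leaving a counter unchanged, incrementing it or decrementing it; the node added on an increment and the node dropped on a decrement are additionally distinguished by an internal $\B$‑label, and the step‑$t$ instruction is propagated via $\B$‑labels to every node of $b_t$. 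Call such a word a \emph{faithful halting computation of $M$} if, besides the obvious regular well‑formedness conditions (correct block shape, initial first configuration, halting last configuration, every instruction correctly applied given its zero‑test, exactly one node distinguished per incrementing/decrementing step), the data obey: (i) on every step, every counter‑$j$ unit of $b_t$ other than a possibly‑distinguished dropped one reappears as a counter‑$j$ unit of $b_{t+1}$; (ii) symmetrically, every counter‑$j$ unit of $b_{t+1}$ other than a possibly‑distinguished added one already occurs as a counter‑$j$ unit of $b_t$; (iii) inside every block the counter‑$j$ nodes carry pairwise distinct data; (iv) the datum of a node distinguished as ``added'' at step $t$ does not occur among the counter‑$j$ units of $b_t$. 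A routine verification shows that $M$ halts iff some data tree is a faithful halting computation of $M$ (conditions (i)--(iv) together with (iii) force the number of counter‑$j$ nodes to evolve exactly as $M$ prescribes).

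Now build the two automata. Let $\anAut_1$ accept exactly the words satisfying the regular well‑formedness conditions; these use no data comparison, so $\anAut_1$ is essentially a bottom‑up tree automaton and in particular a \BUDTA. Let $\anAut_2$ accept the well‑formed words that \emph{violate} one of (i)--(iv). Every such violation is an existential statement about a single datum --- ``some counter‑$j$ node of $b_t$, not distinguished as dropped, carries a datum that does not reappear among the counter‑$j$ nodes of $b_{t+1}$'', ``two counter‑$j$ nodes of some block carry the same datum'', and so on --- and this is precisely what a \BUDTA can check: a thread keeps an arbitrarily chosen datum in its register (the one it holds at its leaf, or one taken via $\opguess$), climbs unchanged under $\opnothing$, stops at a suitable anchor node, verifies there with $\optesteq$ that its datum is the one in question, and then, using tests $\tup{\rexp}^=$ and $\overline{\tup{\rexp}^=}$ along downward paths whose regular expressions navigate through the relevant consecutive blocks via the labels and $\B$‑labels (the block's own instruction being read off the anchor's $\B$‑label), confirms that the datum witnesses the violation; it then switches to an accepting state and climbs to the root. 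The orientation ``root $=b_0$'' is exactly what makes the later blocks $b_{t+1},b_{t+2},\dots$ lie in the subtree of every node of $b_t$, so that these downward tests can inspect subsequent configurations. Taking the union over the finitely many violation shapes and intersecting with $\anAut_1$, Proposition~\ref{prop:buda-iner-union} shows $\anAut_2$ is a \BUDTA, and by construction a well‑formed word lies in $L(\anAut_2)$ iff it is \emph{not} a faithful halting computation; hence $L(\anAut_1)\setminus L(\anAut_2)$ is the set of faithful halting computations of $M$, and is nonempty iff $M$ halts. Since $\+C$ contains both \BUDTA and is effectively closed under complementation and intersection, one effectively builds an automaton of $\+C$ recognising $L(\anAut_1)\cap\overline{L(\anAut_2)}$; a decision procedure for emptiness in $\+C$ would then decide whether $M$ halts, a contradiction.

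I expect the delicate point to be condition (iv) and the ``if'' direction of the equivalence in the second paragraph. With a single register a \BUDTA cannot compare two freshly created data values, so ``an increment adds exactly one new unit'' cannot be tested directly; it must be forced indirectly --- by distinguishing the new node, checking regularly that exactly one is distinguished per incrementing step, and listing freshness as a violation --- and one has to argue that this bundle really does pin down genuine computations, ruling out spurious words that reuse data across blocks. Choosing exactly the right family of violation conditions --- strong enough that no spurious word survives, yet with each individual condition having a \BUDTA‑expressible negation (which in turn dictates the ``root $=b_0$'' orientation and the placement of anchors) --- is where the real work lies; the remainder is bookkeeping with regular languages and downward‑path tests.
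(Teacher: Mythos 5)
Your overall architecture coincides with the paper's: encode Minsky computations so that ``faithful halting computation'' is the difference $L(\anAut_1)\setminus L(\anAut_2)$ of two \BUDTA-recognizable languages, then use the closure of the class under complementation and intersection (the paper obtains the two languages by citing the $\ara(\opguess)$ construction of \cite{Fig12b} instead of redoing the counter encoding, but that is presentational). The genuine gap is that you treat the input as if it were guaranteed to be monadic, and a \BUDTA cannot enforce that. ``Every node has at most one child'' is a horizontal counting property; \BUDTA does not capture all regular tree languages, so your claim that $\anAut_1$ ``is essentially a bottom-up tree automaton and in particular a \BUDTA'' fails exactly where it matters: $\anAut_1$ can check that every root-to-leaf branch is well-formed, but it cannot reject branching trees. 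Consequently the direction ``$L(\anAut_1)\setminus L(\anAut_2)\neq\emptyset$ implies $M$ halts'' breaks. Your violation checks for (i), (ii) and (iv) rely on $\overline{\tup{\rexp}^=}$, which quantifies over \emph{all} downward paths in the subtree rather than over the single branch being inspected. If the tree branches between $b_t$ and $b_{t+1}$, a datum that disappears in one copy of $b_{t+1}$ but survives in a sibling copy makes $\overline{\tup{\rexp}^=}$ false and the violation is masked; one can therefore exhibit branching trees in which every branch is well-formed, every branch violates one of (i)--(iv), and yet $\anAut_2$ accepts nothing, so the tree lies in the difference even though no branch encodes a halting run. (Incidentally, the point you flag as delicate --- condition (iv) and data reuse --- is in fact unproblematic as you set it up; the branching issue is the one that needs the work.)

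The repair is to make every check branch-local, which is precisely what the paper's proof is engineered for: it routes the reduction through word automata whose \BUDTA translations have the property that acceptance of a tree forces acceptance of \emph{every} maximal branch, and then ``every branch satisfies $P$ and not every branch satisfies $P'$'' correctly extracts a single witnessing branch. Your encoding can be fixed in the same spirit: instead of subtree-global tests, let the thread carrying the candidate datum $d$ walk up its own root-path, testing $\optestneq$ at each counter-$j$ node of $b_{t+1}$ it traverses and $\optesteq$ at a suitable node of $b_t$, so that $\anAut_2$ accepts iff \emph{some} branch exhibits a violation while $\anAut_1$ accepts iff \emph{every} branch is well-formed; with that change your difference argument goes through on arbitrary trees.
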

\begin{proof}
The automaton model $\ara(\opguess)$ introduced in \cite{Fig12b} is an alternating 1-register automaton over \emph{data words}, which are essentially data trees whose every position has at most one child. The model is equivalent to a restricted version of $\BUDTA$ without the tests $\tup{\rexp}^=$, $\tup{\rexp}^{\neq}$ and without the action $\opuniv$ running on data words. Indeed, the $\ara(\opguess)$ model is an alternating automaton with one register and a $\opguess$ operator just like the one of $\BUDTA$. Then, there is a simple reduction
  from the emptiness problem  for the automata model
  $\ara(\opguess)$ into the emptiness problem of \BUDTA. Indeed, there is a straightforward translation $f$ from $\ara(\opguess)$ to \BUDTA  so that for every $\anAut \in \ara(\opguess)$
  \begin{itemize}
  \item if a data word is accepted by $\anAut$, then it is accepted by $f(\anAut)$, and
  \item if a data tree is accepted by $f(\anAut)$, then any of its
    maximal branches (seen as data words, starting at a leaf and ending at the root) is accepted by $\anAut$.
  \end{itemize}

By means of contradiction, suppose that there is a class of automata $C$ that extends the class $\BUDTA$ and is closed under effective complementation and effective intersection. We show undecidability by reduction from the emptiness problem for Minsky machines~\cite{MM67}. In \cite[Proof of Proposition~3.2]{Fig12b}, it is shown that for every Minsky machine $M$ one can build two properties $P$ and $P'$ expressible by $\ara(\opguess)$ so that $P \land \lnot P'$ is satisfiable iff $M$ is non-empty. By the reduction above, the properties $\hat P = $ \emph{every branch satisfies $P$} and $\hat P' = $ \emph{every branch satisfies $P'$} are expressible with $C$. Since $C$ is closed under intersection and complementation, then $\hat P \land \lnot \hat P'$ is expressible with $C$. Note that a data tree satisfies $\hat P \land \lnot \hat P'$ iff it has a branch satisfying $P \land \lnot P'$, which happens iff $M$ is non-empty. Hence, $C$ has an undecidable emptiness problem.
\end{proof}

Since we will show that the emptiness problem for \BUDTA is decidable and $\BUDTA$ is closed under intersection, we then
have that they are not closed under complementation.
\begin{corollary}
 The class \BUDTA is not closed under effective complementation.
\end{corollary}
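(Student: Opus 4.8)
The plan is to derive the corollary directly from Proposition~\ref{prop:undec-dual} combined with the decidability of the emptiness problem for \BUDTA that is established in Section~\ref{sec:wsts-budta}. First I would argue by contradiction: assume that \BUDTA is closed under effective complementation. By Proposition~\ref{prop:buda-iner-union} it is already closed under effective intersection, so \BUDTA is itself a class extending \BUDTA (with the identity inclusion, which is trivially effective) that is closed under both effective complementation and effective intersection. Proposition~\ref{prop:undec-dual} then applies with $C = \BUDTA$ and yields that \BUDTA has an undecidable emptiness problem.

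This contradicts the decidability of the emptiness problem for \BUDTA proved in Section~\ref{sec:wsts-budta} (via the reduction to coverability of a \wsts). Hence the assumption fails, and \BUDTA is not closed under effective complementation. There is essentially no obstacle in this last deduction — it is a one-line consequence once both ingredients are in place; all the real difficulty has been displaced into the decidability proof of Section~\ref{sec:wsts-budta} and into the reduction from Minsky machines underlying Proposition~\ref{prop:undec-dual}.
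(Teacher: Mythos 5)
Your argument is correct and is exactly the one the paper uses: combine Proposition~\ref{prop:buda-iner-union} (closure under effective intersection) with Proposition~\ref{prop:undec-dual} applied to $C=\BUDTA$ itself, and contradict the decidability of emptiness from Theorem~\ref{thm-budta-decid}. Nothing is missing.
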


We do not know how to show that \BUDTA is not closed under complement. A
possible concrete example would be the property that on every branch of the
data tree the data values under a node of label $a$ are the same as those under
the nodes of label $b$. The complement of this property is expressible by a
\BUDTA, but it seems that a \BUDTA cannot express it.

\subsection{\bf Automata normal form}
We now present a normal form of \BUDTA{}s, removing redundancy in its
definition. This normal form simplifies the technical details in the
proof of decidability presented in the next section. We use the semigroup point
of view.

\begin{enumerate}[label=(NF\arabic*)]
\item\label{eq:semigroup-normal-form} 
The semigroup $\Monoid$ and homomorphism $h$ have the property that different
values are used for paths of length one: For all $w \in (\A\times\B)^+$ and $c \in \A\times\B$, $h(w)=h(c)$ if{f} $w=c$.

\item \label{eq:budta:normal-form:deltaeps-r}
All  transitions $(\textit{test},\textit{action})$ are such that $\textit{test}$ contains only conjuncts of the form
$p$, $\tup{\rexp}^= $, $\tup{\rexp}^{\not=}$, $\optestroot$ as well as their negated $\overline{~\cdot~}$ counterparts.



\end{enumerate}
An automaton $\anAut\in\budta$ is said to be in \emph{normal form} if it
satisfies \ref{eq:semigroup-normal-form},
\ref{eq:budta:normal-form:deltaeps-r}. 
\begin{proposition}\label{prop-normal-form}
  For any $\anAut \in \BUDTA$, there is an equivalent $\anAut' \in \BUDTA$ in normal form that can be effectively obtained.
\end{proposition}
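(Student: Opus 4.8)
I would establish the two normal-form conditions one at a time, in each case describing an explicit transformation of the automaton and arguing it preserves the accepted language.

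For \ref{eq:semigroup-normal-form}, the idea is to refine the semigroup so that length-one paths are distinguished from longer ones. Starting from $\anAut = (\A,\B,Q,q_0,\delta,\Monoid,h)$, I would build a new semigroup $\Monoid' = (\A\times\B) \uplus \Monoid$ where the elements $c \in \A\times\B$ stand for the single letters, and multiplication is defined by: $c_1 \cdot c_2 = h(c_1)\cdot h(c_2)$ (a genuine $\Monoid$-element), $c \cdot \mu = h(c)\cdot\mu$, $\mu \cdot c = \mu\cdot h(c)$, and $\mu_1\cdot\mu_2$ as in $\Monoid$. One checks associativity by a short case analysis (the point being that any product of two or more terms lands in $\Monoid$ and agrees with what $h$ would give). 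The new homomorphism $h'\colon (\A\times\B)^+ \to \Monoid'$ sends a single letter $c$ to $c \in \Monoid'$ and a word of length $\geq 2$ to its image under $h$; this is a homomorphism by construction, and $h'(w) = h'(c)$ forces $w = c$ since the images of length-one and length-$\geq 2$ words lie in disjoint parts of $\Monoid'$. Finally I translate the transitions: a test $\tup{\mu}^{=}$ for $\mu \in \Monoid$ is replaced by the disjunction $\tup{\mu}^{=} \lor \bigvee_{h(c) = \mu} \tup{c}^{=}$ over $\Monoid'$ (and symmetrically for ${\neq}$, and using the disjunction-elimination from the Discussion section); negated tests are handled by taking the corresponding conjunction of negated tests. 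Since regular languages recognised by $\Monoid$-morphisms are unchanged and we have only passed to an equivalent recognising structure, the language is preserved.

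For \ref{eq:budta:normal-form:deltaeps-r}, I would eliminate the remaining kinds of test conjuncts — $a \in \A$, $b \in \B$, $\optesteq$, $\optestleaf$ and their negations. The label tests $a$, $b$ are removed by pushing the label information into the regular tests: since a thread is always at some definite node, I can guess-and-verify labels using the internal alphabet, or more simply fold the current-node label condition into the semigroup element required of a length-one downward path starting at the current node (using \ref{eq:semigroup-normal-form}, a length-one path image is exactly a letter of $\A\times\B$, so $\bigvee_{b}\tup{(a,b)}$-style tests pin down $\aA(x)=a$). The leaf test $\optestleaf$ is handled by noting that in a bottom-up run leaves are exactly the positions where $\rho(x)$ is the prescribed initial configuration; alternatively one precomputes, in the internal label $\B$, a bit marking leaves, adds a transition that can only fire correctly at genuine leaves, and tests that bit — but since $\B$-tests are themselves being removed, it is cleaner to absorb the leaf/non-leaf distinction directly into which transitions are allowed to generate threads at all. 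The $\optesteq$ test is removed because it is equivalent to $\tup{\mathbf{1}}^{=}$ where $\mathbf{1}$ is (the image of) the empty-ish path of length one ending at the current node — more precisely $\optesteq$ at $x$ with register $d$ says $\dD(x)=d$, which is exactly $\bigvee_{c\colon \text{$c$ is the current letter}} \tup{c}^{=}$ restricted to the trivial path; a small amount of care is needed to express "path of length one that stays at $x$", which one does by a dedicated marked transition or by observing a thread can first $\opstore$ and compare at the parent. Each replacement is a local rewriting of $\delta$ and the correctness is a routine unwinding of the run semantics.

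**Main obstacle.** The delicate point is \ref{eq:budta:normal-form:deltaeps-r}, specifically removing $\optesteq$, $\optestleaf$, and the $\B$-tests without reintroducing the very features being removed — the temptation to "store this fact in the internal label $\B$" collides with the fact that $\B$-tests are themselves forbidden in the target form. I expect the clean route is to handle these by a combination of: (i) enlarging $\B$ to carry the needed bits while simultaneously constraining the transition structure so that the bits are verified rather than tested (the automaton guesses $\B$, so it must be forced to guess consistently), and (ii) re-expressing equality-at-the-current-node via a length-one regular test, which is exactly what \ref{eq:semigroup-normal-form} was arranged to make possible. Getting the order of these two reductions right — establishing \ref{eq:semigroup-normal-form} first, then using it for \ref{eq:budta:normal-form:deltaeps-r} — is what makes the argument go through, and the bookkeeping for leaves versus internal nodes in a bottom-up model is the part most prone to off-by-one slips.
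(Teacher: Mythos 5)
Your route is the paper's route: adjoin the letters of $\A\times\B$ to the semigroup to get \ref{eq:semigroup-normal-form}, and then exploit the fact that, under \ref{eq:semigroup-normal-form}, a semigroup value of the form $h(c)$ is realized exactly by downward paths of length one --- which, by the paper's definition of downward path, consist of the current node alone. This is precisely how the paper eliminates the label tests ($a$ becomes $\bigvee_{b\in\B}\tup{h(a,b)}$, etc.) and $\optesteq$ (which becomes $\bigvee_{c\in\A\times\B}\tup{h(c)}^=$). Your ordering of the two reductions is also the right one. Two remarks on $\optesteq$: the ``small amount of care'' you worry about is not needed, since a length-one downward path starting at $x$ \emph{is} $x$ itself, so the disjunction above is literally $\optesteq$; and your fallback of ``$\opstore$ and compare at the parent'' does not work, because the thread created by $\opstore$ carries $\dD(x)$ in its register and has lost the original register value $d$ it was supposed to compare against --- threads cannot exchange data values.

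The genuine gap is $\optestleaf$. Neither of your candidate solutions is a construction: marking leaves in $\B$ requires testing $\B$, which you correctly observe is self-defeating; and ``absorbing the leaf/non-leaf distinction into which transitions are allowed to generate threads'' does not address the actual difficulty, namely that an arbitrary thread, arriving at an arbitrary position in the middle of a run, may need to know whether that position is a leaf --- this is a property of the position, not of how the thread was created. The resolution is the same trick you already used for the other tests, and \ref{eq:semigroup-normal-form} is again what makes it work: $x$ is a leaf if{f} no downward path of length at least two starts at $x$, i.e.
\[
\bigwedge_{\mu\,\notin\,\set{h(c)\,\mid\, c\in\A\times\B}}\ \overline{\tup{\mu}^=}\ \land\ \overline{\tup{\mu}^{\neq}},
\]
with $\optestnleaf$ given by the dual disjunction. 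Once you add this, your argument coincides with the paper's proof.
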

\begin{proof}
First, given a finite semigroup we can easily compute another one that satisfies \ref{eq:semigroup-normal-form}, only by adding some extra elements to the domain in order to tell apart all the one letter words for each symbol of the finite alphabet.

To show that \ref{eq:budta:normal-form:deltaeps-r} can always be assumed, note
that any test for label can be simulated using $\tup{\aMonoid}$. Indeed, a test
$a$ with $a \in \A$ can be simulated with $\bigvee_{b \in \B}\tup{h(a,b)}$,
$\bar a$ with $\bigwedge_{b \in \B}\overline{\tup{h(a,b)}}$, and similar tests
can simulate $b$ and $\bar b$ for $b\in \B$.  Once this is done, a test
$\tup{\aMonoid}$ can be simulated using $\tup{\aMonoid}^{=} \lor
\tup{\aMonoid}^{\neq}$. The test $\optesteq$ can be simulated using
$\bigvee_{a \in \A \times \B} \tup{h(a)}^=$. Similarly, $\optestneq$ can be simulated
using $\tup{\aMonoid}^{\neq}$.
Lastly, $\optestleaf$ and $\optestnleaf$ can
be tested with
$\bigwedge_{\mu \not\in\set{h(a) \mid a\in \A \times \B}} \overline{\tup{\mu}^=} \land \overline{\tup{\mu}^{\neq}}
$ and
$\bigvee_{\mu \not\in\set{h(a) \mid a\in \A \times \B}} \tup{\mu}^= \vee\tup{\mu}^{\neq}
$ respectively. Thus, we can suppose that the automaton $\anAut$ does not contain any
transition that uses tests for labels, $\optesteq$, $\optestneq$, $\optestleaf$ and $\optestnleaf$ without any loss
of generality.
\end{proof}

\subsection{Non-moving actions}
\newcommand{\anact}{\mathit{act}}

Here we introduce an extension to the automata model with non-moving
$\eps$-transitions. The automaton can now perform actions while remaining at
the same node. We show that this extension does not change the expressive power
of the model. This extension will prove useful when translating vertical
\rxpath into \BUDTA in Section~\ref{section-xpath}.

A \BUDTA automaton with $\eps$-transitions, noted $\budta^\eps$, is defined as
any \BUDTA automaton, with the only difference that its state of states $Q$ is
split into two disjoint sets $Q_\eps$ and $Q_{move}$. Moreover the set $Q$ is
partially ordered by $<$. Whenever an action $\anact(p)$ is taken where $p \in
Q_\eps$, then the automaton switches to state $p$, performs the action
$\anact$, but does not move to the parent node and stays instead at the same
node. In order to avoid infinite sequences of $\eps$-transitions we require
that when switching to a state $q$ from a state $p$ where $q\in Q_\eps$ then
$p< q$. Hence a thread can make at most $|Q_\eps|$ $\eps$-steps before moving
up in the tree.

Formally a $\budta^\eps$ is a tuple
$(\A,\B,Q_\eps,Q_{move},<,q_0,\delta,\Monoid,\morphism)$ where $Q=Q_\eps\cup Q_{move}$ is a
set partially ordered by $<$ and the set of transition $\delta$ contains only pairs
$(t,a)$ such that if $a$ contains a basic action $\anact(p)$ where $p\in
Q_\epsilon$ then $t$ contains a basic test of the form $q\in Q$ with $q<p$. The
rest is defined as for \budta.

A run $\rho$ of a $\budta^\eps$ $\anAut=(\A,\B,Q_\eps,
Q_{move},<,q_0,\delta,\Monoid,\morphism)$ on $\tT=\aA\prd\bB\prd\dD$ is defined
as for $\budta$:

\begin{itemize}
\item for any leaf $x$ of $\tT$, $\rho(x)$ is initial,
\item for any inner position $x$ of $\tT$, whose parent is the position $y$,
  and any $(p,d) \in \rho(x)$ there exists $(t,a) \in \delta$ with
  $a=\bigwedge_{j \in J} a_{j}$ such that $\tT,x,(p,d) \models t$ and for any
  $j\in J$ we have:
\begin{itemize}
\item if $a_{j}$ is $\opnothing(q)$ with $q \in Q_{move}$, then $(q,d)\in \rho(y)$,
\item if $a_{j}$ is $\opnothing(q)$ with $q \in Q_\eps$, then $(q,d)\in \rho(x)$,
\item if $a_{j}$ is $\opstore(q)$ with $q \in Q_{move}$, then $(q,\dD(x))\in \rho(y)$,
\item if $a_{j}$ is $\opstore(q)$ with $q \in Q_\eps$, then $(q,\dD(x))\in \rho(x)$,
\item if $a_{j}$ is $\opguess(q)$ with $q \in Q_{move}$, then $(q,e)\in \rho(y)$ for some $e \in \D$,
\item if $a_{j}$ is $\opguess(q)$ with $q \in Q_{\eps}$, then $(q,e)\in \rho(x)$ for some $e \in \D$,
\item if $a_{j}$ is $\opuniv(q)$ with $q \in Q_{move}$, then for all $e \in \data(\tT|_x)$,  $(q,e) \in \rho(y)$,
\item if $a_{j}$ is $\opuniv(q)$ with $q \in Q_{\eps}$, then for all $e \in \data(\tT|_x)$,  $(q,e) \in \rho(x)$. 
\end{itemize}
\end{itemize}

As expected we show that adding $\eps$-transitions does not increase the
expressive power.

\begin{proposition}\label{prop:budaeps-2-buda}
  There is an effective language-preserving translation from $\budta^\eps$ into $\budta$.
\end{proposition}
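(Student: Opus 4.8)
The plan is to eliminate $\eps$-transitions by a standard ``compression'' of $\eps$-closures, exploiting the crucial fact that every chain of $\eps$-steps in a $\budta^\eps$ has length at most $|Q_\eps|$ because such steps strictly increase the state in the partial order $<$. Concretely, given $\anAut = (\A,\B,Q_\eps,Q_{move},<,q_0,\delta,\Monoid,\morphism)$, I would build an equivalent $\anAut' \in \budta$ whose state set is $Q_{move} \cup \{q_0\}$ (taking $q_0 \in Q_{move}$ if it is not already; if $q_0 \in Q_\eps$ we can add a fresh moving copy). The idea is that a move-transition of $\anAut'$ should simulate, in one bottom-up step from a node $x$ to its parent $y$, a whole maximal ``burst'' at $x$: a move-action of $\anAut$ followed by a bounded tree of $\eps$-actions at $x$ that finally issue only moving actions. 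Since $\eps$-actions do not change the current node, and since tests are evaluated at the current node, the entire burst is evaluated at $x$, so it can be precomputed.

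The key step is a syntactic saturation of $\delta$. For each state $p \in Q_{move}$ I would consider all finite ``$\eps$-derivation trees'' rooted at a thread $(p,\cdot)$: the root applies some $(t,a)\in\delta$ with $t$ mentioning $p$; each generated sub-thread whose target state lies in $Q_\eps$ recursively applies another transition whose test mentions that state; the recursion must terminate because the states along any branch strictly increase under $<$, so the depth is bounded by $|Q_\eps|$ and the whole set of such derivation trees is finite and effectively enumerable. Each such derivation tree, when all its leaves land in $Q_{move}$, collapses to a single $\anAut'$-transition $(t',a')$: the test $t'$ is the conjunction of all the tests used inside the tree (all legitimately evaluated at $x$), and the action $a'$ is the conjunction of the moving actions at the leaves, where I must carefully track, for each leaf, what data value it carries relative to the original thread. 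This is where a little bookkeeping is needed: along a branch of the derivation tree the register content may be changed by $\opstore$ (to $\dD(x)$), left unchanged by $\opnothing$, freshly guessed by $\opguess$, or — for $\opuniv$ — replaced by ``some value in $\data(\tT|_x)$''. But $x$ is fixed throughout the burst, so $\dD(x)$, $\data(\tT|_x)$ and the guessed values are all available to $\anAut'$ directly: a composite of register updates along a branch is again a single update of one of the four basic kinds (the last non-$\opnothing$ update on the branch determines it, modulo that a $\opstore$ followed later by $\opstore$ is still $\opstore$, $\opguess$ absorbs earlier updates, etc.), so each leaf contributes exactly one basic moving action $\anact(q)$ with $q\in Q_{move}$ to $a'$.

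For correctness I would argue both inclusions by translating runs. Given an accepting run $\rho$ of $\anAut$ on $\tT$, at each node $x$ the set $\rho(x)$ decomposes into threads that have just arrived (targets of moving actions from the children) and threads produced by $\eps$-steps at $x$; grouping the latter under the moving thread that spawned their burst yields, for each arriving moving thread, one $\anAut'$-derivation tree, and collecting the moving leaves at $x$ that are destined for the parent gives exactly $\rho'(y)$; the acceptance condition at the root transfers because $\opaccept$ is itself a (leaf) action that can terminate a burst. Conversely, an accepting run of $\anAut'$ is ``expanded'' back into a run of $\anAut$ by replaying each composite transition as its underlying derivation tree, placing all intermediate $\eps$-threads at the same node $x$; the ordering constraint $p<q$ is satisfied by construction of the derivation trees. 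I expect the main obstacle to be purely notational: setting up the composition of register updates along a branch and proving that a composite of the four update primitives is again one of the four (in particular handling the interaction of $\opuniv$ with subsequent $\opstore$/$\opguess$), together with verifying that every test appearing inside a burst is genuinely a test ``at $x$'' — both are routine but need a clean formalization of ``$\eps$-derivation tree'' and of the map sending it to a $\budta$ transition.
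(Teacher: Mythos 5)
There is a genuine gap, and it sits exactly where you wave it away as ``routine''. Your collapse of an $\eps$-derivation tree into a single pair $(t',a')$ takes $t'$ to be the conjunction of all tests occurring in the burst, arguing that they are ``all legitimately evaluated at $x$''. The node is indeed $x$ throughout, but the \emph{register} is not: a sub-thread spawned by an $\eps$-action $\opguess(q)$, $\opstore(q)$ or $\opuniv(q)$ carries a register value $e$ different from the original $d$, and its subsequent tests ($\optesteq$, $\tup{\rexp}^=$, $\tup{\rexp}^{\neq}$, \dots) are evaluated \emph{relative to $e$}. A single $\budta$ transition is applied to one thread and its test is evaluated against that thread's one register, so ``$t_1$ holds with register $d$ and $t_2$ holds with register $e$'' is not expressible as a conjunction of basic tests of the thread $(p,d)$. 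Nor can you outsource $t_2$ to an auxiliary thread: in a pure $\budta$ every thread created by an action lands at the \emph{parent} of $x$ and tests there, so there is no way to manufacture, at $x$ itself, a fresh thread with register $e$ that still gets to test at $x$. Your careful bookkeeping of which update each \emph{leaf} carries is fine for the action part $a'$, but it does not address the tests of the \emph{internal} nodes of the burst below a register-changing $\eps$-action. A second, independent problem is that with an $\eps$-action $\opuniv(q)$ the burst branches over all data values of $\data(\tT|_x)$, each continuation possibly choosing a different transition, so the set of derivation trees is not finite and your enumeration argument breaks down. The construction you describe is sound precisely for $\opnothing$ $\eps$-actions (the register is unchanged, so tests commute with the action), which is the easy first step.

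The paper's proof is structured around exactly this obstruction. It first reduces $\opstore$ $\eps$-actions to $\opguess$ ones, then eliminates $\opnothing$ $\eps$-actions by your composition argument, and then handles $\opguess$ and $\opuniv$ by a genuinely non-local rewiring: the register update is performed while moving up \emph{from a child of $x$ to $x$}, so that the auxiliary thread is born at $x$ already carrying the new value and can legitimately run its tests there with its own register before moving on to the parent. For $\opguess$ this means splitting the state $p$ into $q'$ (the original continuation) and $q''$ (the guessed-value thread), both launched from the children; for $\opuniv$ it additionally requires marking the relevant nodes and their children in the internal alphabet $\B$ and launching verification threads from the leaves, because the universal quantification over $\data(\tT|_x)$ must be assembled from $\opuniv$ actions fired at the children plus one $\opstore$ for $\dD(x)$ itself. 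To repair your proof you would need to import this child-level restructuring; a purely local saturation at $x$ cannot work.
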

\begin{proof}
Let $\anAut$ be a $\budta^\eps$. We say that a basic action is an $\eps$-action
if it is of the form $\anact(q)$ for some $q\in Q_\eps$. The \emph{rank} of an
$\eps$-action $\anact(q)$ is the number of $p\in Q$ such that $p<q$.

We first assume without loss of generality that $\anAut$ contains no
$\eps$-action of the form $\opstore(q)$, since those can be simulated by a
$\opguess$ $\eps$-action followed by a $\optesteq$ test.

We also assume without loss of generality that all transitions $(t,a)$ of
$\anAut$ are such that $t$ contains exactly one conjunct of the form $p$ and no
conjunct of the form $\bar q$, for $p,q\in Q$.  This can be enforced using
non-determinism by always testing all the finitely many possibilities for the
states. All tests $t$ now contain exactly one test for a state $q$ and we call
$q$ \emph{the state associated to $t$}. We denote by $t\setminus q$ the new
test constructed from $t$ by removing the conjunct~$q$.

We now show how to remove the $\eps$-actions of the form $\opnothing(q)$.  We
remove them one by one computing at each step an equivalent automaton $\anAut'$
with the same set of states and the same partial order on it. $\anAut'$ is
essentially $\anAut$ with one transition removed (the one containing
$\opnothing(q)$) and several transitions added. The idea is classical and
consists in performing the actions executed by transitions associated to $q$
instead of executing $\opnothing(q)$. The added transitions may
contain new $\eps$-actions of the form $\opnothing(p)$ but those satisfy
$q<p$. Hence this process will eventually terminate. Assume that $\anAut$ has a
transition $(t,a)$ containing an $\eps$-action $\opnothing(q)$ where $q$ has
a minimal possible rank satisfying this property. We let $\hat a$ denote the action
computed from $a$ by removing the conjunct $\opnothing(q)$. $\anAut'$ is
copied from $\anAut$ with the following modifications:

We remove $(t,a)$ from the list of transitions and for each transition
$(t',a')$ of $\anAut$ such that $q$ is associated to $t'$ we add the new
transition $(t\land (t'\setminus q), \hat a \land a')$.  Notice that all the
$\eps$-actions of the form $\opnothing(p)$ occurring in these new transitions
come from $a'$ or $\hat a$ and therefore are such that $q < p$ as
desired. Notice that since the action $\opnothing$ do not change the data value
stored in the register, the test $t'\setminus q$ can be equivalently performed
before or after executing this action. Hence the new automaton is equivalent
to the old one.

We can now assume that $\anAut$ contains only $\eps$-actions of the form
$\opguess(q)$ and $\opuniv(q)$.

We next show how to remove the $\eps$-actions of the form $\opguess(q)$.  We
again remove them one by one computing an equivalent automaton $\anAut'$. This
automaton has new states but those do not modify the rank of the $\eps$-actions
of the form $\opguess(p)$. As before it is essentially $\anAut$ with one
transition removed and several new transitions. The idea is to perform the
guessing of the data value at one of the children of the current node, while
moving up in the tree. The new transitions needed to do this may introduce new
$\eps$-actions of the form $\opguess(p)$, the rank of $p$ being strictly
smaller than the rank of $q$.  Hence this process will eventually
terminate. Assume that $\anAut$ has a transition $(t,a)$ containing an
$\eps$-action $\opguess(q)$ where $q$ has a maximal possible rank satisfying
this property. Let $p$ be the state associated to $t$ (hence $p<q$) and $\hat
a$ be the action computed from $a$ by removing the conjunct $\opguess(q)$.
$\anAut'$ is copied from $\anAut$ with the following modifications:

We add two new states $q'$ and $q''$. If $p\in Q_{move}$ then so are $q'$ and
$q''$. If $p\in Q_{\eps}$ then so are $q'$ and $q''$. We modify $<$ by setting
$r<q'$ and $r<q''$ whenever $r<p$ and $q'<r$ and $q''<r$ whenever $p<r$. In
other words $q'$ and $q''$ play the same role as $p$ in the partial order but
are incomparable with $p$. Hence the rank of any state of $Q$ is not modified
by the addition of these two states. Moreover the ranks of $q'$ and $q''$ match the rank
of $p$ and are therefore strictly less than the rank of $q$.

We now update the list of transitions by first removing $(t,a)$. Then, consider
a transition $(t',a')$ of $\anAut$ of state $p'$ and such that $a'$ contains an
action of the form $\anact(p)$. We add a new transition $(t',a'')$ where
$a''$ is the action constructed from $a'$ by removing $\anact(p)$ and
adding $\anact(q')$ and $\opguess(q'')$ (this is consistent with the partial
order). Finally we add in $\anAut'$ a transition $(q'\land (t\setminus p), \hat
a)$ and a transition $(q''\land (t''\setminus q),b)$ for any transition
$(t'',b)$ such that $p$ is the state associated to $t''$, both being consistent
with the partial order.

The reader can now verify that $\anAut'$ is equivalent to $\anAut$.  If right
after switching to state $p$ using the transition $(t',a')$, where $\anact(p)$
is a basic action of $a'$, $\anAut$ decides to use the transition $(t,a)$ then
$\anAut'$ can simulate this as follows. First it uses the transition
$(t',a'')$. This generates two new basic actions, $\anact(q')$ and
$\opguess(q'')$ instead of $\anact(p)$. The thread generated by $\anact(q')$
will perform the tests and actions that $\anAut$ did by using the transition
$(t,a)$, except for $\opguess(q)$. But this latter action is simulated by
$\opguess(q'')$ that has been launched earlier. Hence $\anAut'$ does simulate
$\anAut$ also in this case. The other direction, showing that a run of
$\anAut'$ can be simulated by a run of $\anAut$ is proved similarly.

We can now assume that $\anAut$ contains only $\eps$-actions of the form
$\opuniv(q)$. Removing those requires more care. The idea is, as above, to
perform the action at the children of the current node while moving up in the
tree. As this is a universal move, all children are concerned and we therefore
need some synchronization. The internal alphabet will be used for this. In a
nutshell the new automaton will mark the nodes where the $\opuniv(q)$ should be
executed and mark all their children. Simple threads are launched at the leaves
making sure the marking is consistent. When encountering a node marked as a
child, $\opuniv(q)$ can be executed while moving up in the tree,
taking care of all the data values of the corresponding subtree. An extra
thread need to be executed at the parent node to take care of its data
value. Note that different tests can be launched in different threads by the use of actions like $\opnothing(q) \land \opnothing(p)$ with $p,q \in Q_\eps$.
 We now turn to the details.

Assume that $\anAut$ has a transition $(t,a)$ containing an $\eps$-action
$\opuniv(q)$ where $q$ has a maximal possible rank satisfying this
property. Let $p$ be the state associated to $t$ and $\hat a$ be the action
computed from $a$ by removing the conjunct $\opuniv(q)$. $\anAut'$ is
essentially a copy of $\anAut$ with the following modifications:

We use $\B'=\B\times\set{0,1,2}$ as internal alphabet for $\anAut'$. The nodes
containing $1$ in their label are expected to simulate an action $\opuniv(q)$,
and their children must contain $2$ in their label. On top of simulating $\anAut$,
$\anAut'$ starts a new thread at all the leaves of the tree, using a fresh new
state in $Q'_{move}$. These threads make sure that all nodes whose label
contain $2$ have a parent with $1$ in their label and that all nodes having $1$
in their label have no child without $2$ in their label (this can be done with
the appropriate $\langle \rexp \rangle$-test). Moreover, we assume yet a new
state $q'$ in $Q'_{move}$ and these new threads trigger an action $\opuniv(q')$
each time the symbol $2$ is found. In other words, we make sure that any node
$x$ with internal label $1$ has a thread $(q',d)$ for every data value
occurring strictly below $x$. We replace the transition $(t,a)$ with
$(t^1,\opstore(q) \land \hat a)$ where $t^1$ performs the same tests as $t$ but
also checks that the internal label contains $1$. Finally, to any transition
$(t',a')$ of $\anAut$ whose associated test is $q$ we add a new transition
replacing $q$ with $q'$ (note that we don't remove any transition, hence
$(t',a')$ is still a transition of $\anAut'$).

Note that this generates a new $\eps$-action of the form $\opstore(q)$. However
we have seen that those could be simulated using \opguess $\eps$-actions
and we have seen above that such $\eps$-actions could be removed. As this last step
does not introduce any new \opuniv $\eps$-actions, the resulting
automaton has strictly less $\eps$-actions of that rank.

For correctness, assume that $\anAut$ accepts a data tree $\tT$ with an accepting run $\rho$. $\anAut'$ marks all positions where $\rho$ uses transition $(t,a)$, with $1$, and the children of those nodes with $2$. The rest of the simulation of $\anAut$ is trivial. The run will be accepting because all the new generated threads will have exactly the same effect as the action $\opuniv(q)$ at the marked nodes: the action $\opstore(q)$ generates a thread with the current data value, while the actions $\opuniv(q')$ generate a thread for all the data values within the subtrees. Conversely, assume that $\anAut'$ has an accepting run $\rho'$ on a data tree $\tT$. Let $x$ be a node whose internal label has $1$. If $\rho'$ does use the transition $(t^1,\opstore(q) \land \hat a)$ at node $x$ then \anAut can simulate $\anAut'$ using the transition $(t,a)$ as they produce the same threads. If $\rho'$ does not use the transition $(t^1,\opstore(q) \land \hat a)$ at that node, then the initial internal labeling was incorrectly guessed. But this is not important as $\anAut$ can still simulate $\anAut'$ on a subset of the threads, hence will also accept $\tT$. In other words, $\anAut'$ will have executed an action $\opuniv(p)$ at $x$ that was not necessary for accepting the tree.
\end{proof}


 \newcommand{\anWSTS}{\ensuremath{\cl{W}}\xspace}
\section{The emptiness problem for \BUDTA}
\label{sec:wsts-budta}\label{sec:prelim:wsts}
\newcommand{\eqconf}{\sim}
\newcommand{\mult}{\cdot}

This is the most technical section of the paper. Its goal is to show:

\begin{theorem}\label{thm-budta-decid}
The emptiness problem for \BUDTA is decidable.  
\end{theorem}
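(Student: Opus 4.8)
The plan is to reduce the emptiness problem to the \emph{coverability} problem of a well-structured transition system (\wsts), which is decidable under the usual effectiveness hypotheses. Fix $\anAut=(\A,\B,Q,q_0,\delta,\Monoid,\morphism)$ in normal form (Proposition~\ref{prop-normal-form}), so that the only nontrivial tests are $p$, $\optestroot$, $\tup{\mu}^{=}$, $\tup{\mu}^{\neq}$ and their negations. The first step is to define a notion of \emph{extended configuration} that summarises a subtree $\tT|_x$ together with a partial bottom-up run on it (all transition constraints satisfied strictly below $x$, nothing yet required at $x$). An extended configuration records, grouping nodes of the subtree by their data value: for each data value $d$ that is carried by a thread of the configuration at $x$ or merely occurs in $\tT|_x$, the set $S_d\subseteq Q$ of states of threads currently carrying $d$, and the \emph{down-type} $D_d\subseteq\Monoid$, i.e.\ the set of values $\morphism(w)$ over all downward paths $w$ from $x$ to a node carrying $d$; plus one bit recording whether $x$ is the root. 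Down-types are chosen exactly because they carry all the information needed to evaluate $\tup{\mu}^{=}$, $\tup{\mu}^{\neq}$ (and negations) on every thread, and because moving up to the parent updates a down-type simply by prepending the parent-edge element and taking unions over the children. Since two data values with the same pair $(S_d,D_d)$ — its \emph{colour} — are interchangeable for anything that can happen above $x$, an extended configuration is, up to this equivalence, a finite multiset over the finite set of colours (together with the root bit).

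Next I would equip extended configurations with a quasi-order: the multiset ordering induced by a suitable ordering on colours, refined just enough that comparable configurations agree \emph{exactly} on the down-types of the data values they match — this exact matching is what protects the negated tests $\overline{\tup{\mu}^{=}}$, $\overline{\tup{\mu}^{\neq}}$, whose truth can otherwise be flipped by adding data values to a subtree. Because the set of colours is finite, Dickson's lemma gives that this multiset order is a \wqo.

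Then I would define the transition relation of the \wsts so that its derivations are in correspondence with the bottom-up runs of $\anAut$: a base move produces the extended configuration of a leaf (the thread $\{(q_0,e)\}$ together with the node's own data value, with one-letter down-types); a \emph{merge} move processes the children of a node one at a time, left to right, folding the extended configuration of each child into a running ``forest'' configuration by prepending to every down-type the element of the edge from the node to that child and taking per-data-value unions; and a \emph{close} move, given the forest configuration of all children of a node and a guessed label in $\A$, internal label in $\B$ and data value, applies to each thread a rule $(t,\mathrm{Ac})\in\delta$ whose test $t$ it passes — evaluated against the recorded down-types — and collects the spawned threads ($\opnothing$, $\opstore$, $\opguess$ each contributing one thread, and $\opuniv$ contributing one representative thread per down-type realised in the subtree) into the extended configuration of the subtree now rooted at that node. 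Acceptance is arranged (using, if needed, the $\eps$-transition machinery of Proposition~\ref{prop:budaeps-2-buda} to reduce ``all root threads pass an accept rule'' to reaching a distinguished situation) so that the accepting extended configurations form an upward-closed set; then $\anAut$ is non-empty if{f} this set is coverable from a (computable) initial element. It therefore remains to check that this \wsts is effective and that its transition relation, together with the accepting set, is \emph{monotone} with respect to the \wqo.

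The core of the argument, and the main obstacle, is this monotonicity, and the difficulty is concentrated in the $\opuniv$ action: it spawns unboundedly many threads in a single step, which is exactly why the \wsts must live over an unbounded domain rather than a finite one (and why, by~\cite{FS09}, no primitive-recursive algorithm is possible). This is absorbed by the two design choices above: only the down-type of a data value governs its behaviour above $x$, so finitely many colours suffice, and the multiset \wqo tames the multiplicities of data values of equal colour that $\opuniv$ cares about. The remaining, more delicate, point is that when one extended configuration dominates another, the continuation of the run above the smaller subtree can be faithfully replayed above the larger one — which works precisely because comparable configurations agree on down-types, so the larger subtree genuinely refines the smaller one on the data values it retains, and the dominated transition can be matched step for step yielding a dominating result. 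Once monotonicity and effectiveness are established, decidability of \wsts coverability~\cite{FS01} gives decidability of the emptiness problem for \BUDTA, and Theorem~\ref{thm:rvxpath-decidable} follows via the translation of Section~\ref{section-xpath}.
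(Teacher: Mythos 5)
Your overall architecture does match the paper's: extended configurations recording, per data value, the set of thread states carrying it and the set of semigroup values of downward paths reaching it; a Dickson-style \wqo on these; a transition system with a merge step to absorb unrankedness; and a well-structured-transition-system argument. But there is a genuine gap exactly at the point you call the core, the monotonicity, and it is not where you locate it. The $\msf{univ}$ action is handled like the other local steps (it spawns one thread per data value already recorded in the finite down-type component). The real obstacle is the \emph{merge} step. Suppose $\theta_1$ and $\theta_2$ merge to $\theta_0$ and $\theta_1'\leq\theta_1$, $\theta_2'\leq\theta_2$ via matchings $f_1,f_2$ of data values. The merge identifies each data value $d$ common to $\theta_1$ and $\theta_2$; but $f_1$ and $f_2$ may send $d$ to different values, and, worse, $f_1$ may identify two values $d,d'$ that $f_2$ keeps apart. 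Then no single merge of $\theta_1'$ and $\theta_2'$ yields something below $\theta_0$: the dominated transition \emph{cannot} be matched step for step. The paper resolves this by introducing an extra ``inc'' transition that duplicates a data value of a given type (semantically, duplicating immediate subtrees of the root), applies it enough times to make the matchings injective on the relevant values, and renames via a bijection before merging. This is why its compatibility statement only guarantees simulation in $n$ steps for a computable $n$, and why it then needs a further commutation lemma (moving ``inc'' past the local transitions) to translate derivations of the transition system back into genuine runs of the automaton. Your proposal has no analogue of any of this, and without it the compatibility claim for merge is false.

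Two smaller points. First, requiring that comparable configurations ``agree exactly on the down-types of the matched data values'' is not enough: the order must also preserve, for each subset $\chi$ of the semigroup, whether the number of data values with down-type exactly $\chi$ is $0$, $1$, or at least $2$ (the paper's \emph{profile}). This is what protects the positive test $\tup{\mu}^{\neq}$ (the witnessing \emph{other} data value must survive in the smaller configuration) and what licenses the duplication step above; it is not implied by exact agreement on matched values. Second, the natural accepting set (all threads discharged) is \emph{downward} closed in this order, not upward closed, so the correct scheme is the forward saturation keeping minimal elements together with downward compatibility (the smaller configuration simulates the bigger one, below it), as in the paper — not classical coverability of an upward-closed target. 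Your parenthetical claim that acceptance can be ``arranged'' to be upward closed would need a real construction, and your stated simulation direction (replaying the run of the smaller above the larger) is the wrong one for the algorithm you would then have to run.
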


This result is shown through methods from the theory of \emph{well-structured
  transition systems}, or \wsts for short~\cite{FS01}. It is obtained by
interpreting the execution of a \BUDTA using a transition system compatible
with some well-quasi-ordering (\wqo). We start with the notions of the theory
of \wsts that we will need.

\bigskip

\newcommand{\sDom}{\cl{A}}
\newcommand{\sDomb}{\cl{B}}
\newcommand{\sDomc}{\cl{C}}
\newcommand{\sDomd}{\cl{D}}
\newcommand{\upclose}{\uparrow\!}
\newcommand{\downclose}{\downarrow\!}

A quasi-order $\le$ (\ie, a reflexive and transitive relation) over a set
$\aSet$ is said to be a \emph{well-quasi-order} (\wqo) if there is no
infinite decreasing sequence and no infinite incomparabe sequence of elements
of $S$. In orther words  if for every
infinite sequence $s_1\,s_2\,\dotsb \in \aSet^\omega$ there are two indices
$i<j$ such that $s_i \leq s_j$.
  Given a \wqo $(\aSet,\leq)$ and $\aSetb \subseteq \aSet$, we define the
  \emph{downward closure} of $\aSetb$ as $\downclose \aSetb \coloneqq \set{ s \in \aSet
    \mid \exists\, t \in \aSetb, s \leq t}$ and $\aSetb$ is
  \emph{downward closed} if $\downclose\aSetb=\aSetb$.



\bigskip

Given a \BUDTA \anAut, a first goal could be to compute the set of reachable
configurations, i.e. the configurations $C$ such that there is a data tree
$\tT$ and a run $\rho$ of \anAut on $\tT$ such that $\rho(x)=C$ for the root
$x$ of $\tT$.

A na\"ive algorithm would be to start with the set of initial configurations
and then to enrich this set, step by step, by applying the transition function
of $\anAut$ to some configurations from the set, hence preserving reachability.

This idea immediately raises some issues. The first one is that the set of
reachable configurations is infinite and therefore we need a way to guarantee
that the run of the algorithm will eventually stop. Second, we need to make
sure we can compute the configurations reachable in one step from a finite set
of configurations. In particular, as our trees are unranked, we do not know in
advance how many reachable configurations we need to combine in order to derive
the next ones. Moreover, in order to apply the transition function of \anAut,
we need to know which of the tests of the form $\tup{\aMonoid}^=$ and
$\tup{\aMonoid}^{\neq}$ are true in the data trees that make a configuration
reachable.

To overcome this last issue, we enrich the notion of configuration, initially a
finite set of pairs $(q,d)\in Q\times\D$, by including all the information of
the current subtree that is necessary to maintain in order to continue the
simulation of the automaton from there. This information consists in a finite
set of pairs of the form $(\aMonoid,d) \in \Monoid\times\D$ whose presence
indicate that from the current node the data value $d$ can be reached following
a downward path evaluating to $\aMonoid$ via the homomorphism $\morphism$ of
\anAut. This enriched configuration will be henceforth called \emph{extended
  configuration}.

To overcome the problem of termination we introduce a well-quasi-order on
extended configurations and we show that, as far as emptiness is concerned, it
is not necessary to consider extended configurations that are bigger than those
already computed. We say that such a \wqo is \emph{compatible} with the transition system.\footnote{In our case this will be assured by a property of the transition sytem $(X,\rightarrow)$ and \wqo $(X,\leq)$ of the following form: for every $x,x',y \in X$ so that $x' \leq x$ and $x \rightarrow y$ there is $y'\in X$ so that $x' \rightarrow^* y'$ and $y' \leq y$.}
 In other words it is enough to solve the \emph{coverability} problem relative to that \wqo instead of the \emph{reachability problem}. Coverability can be decided by
inserting a new extended configuration into the running set of reachable
configurations only if the new extended configuration is incomparable to or
smaller than those already present. Termination is then guaranteed by
the well-quasi-order.

Finally, in order to overcome the problem coming from the tree unrankedness, we
decompose a transition of \anAut into two basic steps of a transition system (see
Figure~\ref{fig:root-merge-tran}). The first step simulates a transition of
\anAut assuming the current node is the only
child. The second step ``merges'' two configurations by simulating what \anAut
would have done if the roots of the corresponding trees were
identified. Repeating this last operation yields trees of arbitrary rank.
Altogether the unrankedness problem is also transferred to the coverability
problem.

To summarize, we associate to a given \BUDTA \anAut a transition system based on
the configurations of \anAut and the transition relation of \anAut, we
exhibit a compatible \wqo, and can then decide emptiness of \anAut using
the coverability algorithm sketched above.

\begin{figure}
  \centering \includegraphics[scale=.55]{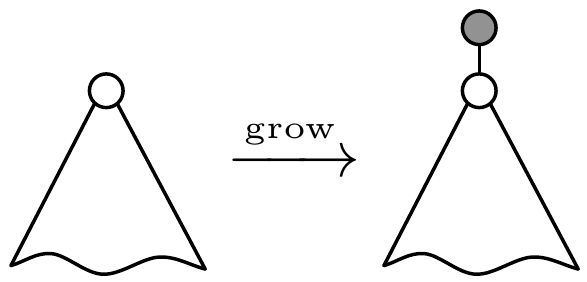}\hspace{2cm}
\includegraphics[scale=.6]{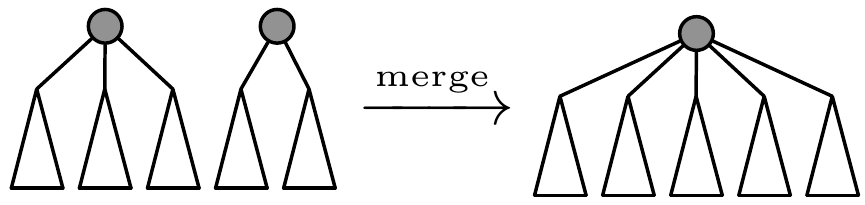}
  \caption{The \emph{grow} and \emph{merge} operations.}
  \label{fig:root-merge-tran}
\end{figure}
%


\subsection{Extended configurations}\label{section-abstract-configuration}
\newcommand{\ac}{\ensuremath{\textup{EC}}\xspace}
\newcommand{\AC}{\ensuremath{\subsetsf{(\ac)}}\xspace}
\newcommand{\Datafun}{\ensuremath{\Delta}\xspace}
\newcommand{\prof}{\ensuremath{\chi}\xspace}
We define here the set of \emph{extended configurations} of a given \BUDTA
$\anAut=(\A,\B,Q,q_0,\delta,\Monoid,\morphism)$ in normal form.

\newcommand{\topg}{\top_{\!g}} \newcommand{\topng}{\top} In order to get a good
intuition about the definition of extended configurations it is necessary to
have a glimpse of what transitions will be performed on them. We
will simulate each transition $\tau\in\delta$ by several steps of the transition system. Recall that each
$\tau\in\delta$ consists of several tests and several actions. The first
step of the simulation of $\tau$ will generate as many threads as there are
tests and actions in $\tau$. Each of these threads will then have the task of
performing the corresponding test or action. We will also use symbols,
$\bot$, $\topng$ and $\topg$, to distinguish respectively: threads where no transition has
yet been executed, threads on which a transition (other than $\opguess$) has been successfully
applied, and threads on which a $\opguess$ transition has been successfully applied.

\newcommand{\testactionsbot}{\textrm{TA}}%
Let $\testactionsbot = \Tests\cup\Actions\cup\set{\topng,\topg,\bot}$.
An \emph{extended configuration} of \anAut is a tuple $(\Delta,\Gamma,r,m)$
where $r$ and $m$ are either true or false, $\Delta$ is a finite subset of
$Q\times\testactionsbot \times \D$
and $\Gamma$ is
a finite subset of $\Monoid\times\D$ such that
\begin{align}\label{eq:condition-Gamma-well-formed}
\tag{$\star$}
\text{$\Gamma$ contains exactly \emph{one} pair of the form $(h(c),d)$ with
  $c\in\A\times\B$.}
\end{align}
This unique element of $\A\times\B$ is denoted as the \emph{label} of the
extended configuration and the unique associated data value is denoted as the
\emph{data value} of the extended configuration.

Intuitively, $r$ says whether the current node should be treated as the root or
not, $m$ says whether the extended configuration is ready to be merged with
another one or not, and $\DataState$ represents the set of ongoing threads at the current node.  Its
elements have the form $(q,\alpha,d)$. If $\alpha \in \Tests \cup \Actions$
then this thread is expected to perform the corresponding test or action.  If
$\alpha \in \set{\topng,\topg}$ then this thread is ready to move up in the
tree.  If $\alpha= \bot$ then no transition has yet been applied on this
thread. On the other hand, a pair $(\mu,d) \in \Gamma$ simulates the existence of a downward path
evaluating to $\mu$ and whose last node carries the data value $d$. The condition
\eqref{eq:condition-Gamma-well-formed} is here for technical reasons. In
particular it permits to recover the data value and label of the current tree root.

\newcommand{\typeof}[1]{[#1]}
Consider $\theta =
(\DataState,\DataMonoid,r,m)$. 
Remember that $\DataState$ is the set of threads, $\DataMonoid$ is the set of
pairs ``(path evaluation, data value)'' present in the abstracted subtree, $r$
states whether the extended configuration is the root, and $m$ whether it is in
merge mode. In the sequel we will use the following notation for $\theta =
(\DataState,\DataMonoid,r,m)$:

\begin{align*}
\DataState(d) &= \set{(q,\alpha) \mid (q,\alpha,d)\in\DataState},\\
\DataMonoid(d) &= \set{\aMonoid \mid (\aMonoid,d)\in\DataMonoid},\\
\typeof \theta (d) &=(\DataState(d), \DataMonoid(d)),\\
\data(\theta) & = \set{d \mid (q,\alpha,d)\in\DataState \lor (\aMonoid,d)\in\DataMonoid}.
\end{align*}

We will also use the inverses of these, for $(R,\prof)\in \subsets(Q\times \testactionsbot) \times
\subsets(\Monoid)$:
\begin{align*}
\DataState^{-1}(R) &= \set{d \mid \DataState(d)=R},\\
\DataMonoid^{-1}(\prof) &= \set{d \mid \DataMonoid(d)=\prof},\\
\typeof \theta^{-1} (R,\prof) &= \set{d \mid \typeof \theta(d)=(R,\prof)},\\
\end{align*}

Note that $\DataState(d)$ gives the information about the current threads carrying $d$ in the register; $\DataMonoid(d)$ gives the information about downward paths that lead to the data value $d$; $\typeof \theta (d)$ is the aggregation of this information; $\data(\theta)$ is the set of all data values present in $\theta$; $\DataState^{-1}(R)$ is the set of data values whose thread information is precisely $R$; $\DataMonoid^{-1}(\prof)$ is the set of data values whose downward paths information is precisely $\prof$; and $\typeof \theta^{-1} (R,\prof)$ is the set of data values whose aggregated $\typeof \theta$-information is precisely the pair $(R,\prof)$.

We use the letter $\theta$ to denote an extended configuration and we write \ac to denote the set of all extended configurations. Similarly, we use $\Theta$ to denote a finite set of extended configurations.

An extended configuration $\theta=(\DataState,\DataMonoid,r,m)$ is said to be
\emph{initial} if it corresponds to a leaf node, i.e., is such that
$\DataState = \set{(q_0,\bot,d)}$ and $\DataMonoid = \set{(\morphism(a),d')}$ for
some $d,d' \in \D$ and $a \in \A\times\B$. Note that every initial extended configuration verifies condition~\eqref{eq:condition-Gamma-well-formed}.  An extended configuration is said to be \emph{accepting} if
$\DataState = \emptyset$.

We will very often use this notation for components of extended configuations:
\begin{align}\label{eq:notation}
  \begin{split}
    \theta &= (\DataState, \DataMonoid,r,m)\\
    \theta' &= (\DataState', \DataMonoid',r',m')
  \end{split}\tag{$\ddag$}
\end{align}

Two extended configurations $\theta_1$ and $\theta_2$ are said to be \emph{equivalent}
if they are identical modulo a bijection between data values, i.e. there is a
bijection $f: \D \to \D$ such that $f(\theta_1)=\theta_2$ (with some abuse of
notation). We denote this by $\theta_1 \eqconf \theta_2$.

Finally, we write $\Theta_I$ to denote the set of all initial extended
configurations modulo $\eqconf$ (i.e. a set containing at most one element for
each $\eqconf$ equivalence class). Note that $\Theta_I$ is \emph{finite} and
can be computed. A set of extended configurations is said to be
\emph{accepting} if it contains an accepting extended configuration.

\subsection{Well-quasi-orders}\label{sec-wqo}
\newcommand{\lqac}{\precsim}
\newcommand{\lqacstr}{\preceq}
\newcommand{\lqAC}{\leq_{\textit{min}}}
\newcommand{\profile}{\mathit{profile}}

We now equip \ac with a well-quasi-order $(\ac,\lqacstr)$. 

The \emph{profile of an extended configuration}
$\theta=(\DataState,\DataMonoid,r,m)$, denoted by $\profile(\theta)$, is 
the tuple $(A_0, A_1,r,m)$ with
\begin{align*}
A_0 &= \set{\prof \subseteq \Monoid :
  |\DataMonoid^{-1}(\prof)|=0},\\
A_1 &= \set{\prof \subseteq \Monoid :
  |\DataMonoid^{-1}(\prof)|=1}.
\end{align*}
%

The idea behind the definition of $A_0$ and $A_1$ is that the automata model can only test whether there exists either (i) none, (ii) exactly one, or (iii) more than one data values that are reachable by a downward path $\mu$. If it is zero, we store this information in $A_0$, if it is one we store it into $A_1$. However an automaton cannot count more than this. And this is a key
for decidability.

  Given two extended configurations $\theta_1=(\DataState_1,\DataMonoid_1,r_1,m_1)$
  and $\theta_2=(\DataState_2,\DataMonoid_2,r_2,m_2)$, we denote by $\theta_1  \lqacstr \theta_2$ the fact that
  \begin{itemize}
    \item $\profile(\theta_1)=\profile(\theta_2)$, and
    \item there is a function $f : \D \to \D$ such that $\data(\theta_1)
      \subseteq f(\D)$ (henceforth called the \emph{surjectivity condition}) and for all $d \in \D$:
      \begin{itemize}
      \item $\DataState_1(f(d)) \subseteq \DataState_2(d)$,
      \item $\DataMonoid_1(f(d)) = \DataMonoid_2(d)$.
      \end{itemize}
  \end{itemize}
We write $\theta_1 \lqacstr_f \theta_2$ if we want to make explicit the
function $f$ witnessing the relation. 
%
\begin{remark}\label{remark-order}
  Assume that $\theta_1 \lqacstr_f \theta_2$ and consider a thread $(q,\alpha,d)\in \DataState_1$. Then by the surjectivity condition of $f$ there is a data value $e$ such that $f(e)=d$. From $\DataState_1(f(e)) \subseteq \DataState_2(e)$ it follows that $(q,\alpha,e)$ is a thread in $\DataState_2$. Similarly if $\theta_1$ contains $k$ threads $(q,\alpha,d_1), \dotsc, (q,\alpha,d_k)$ then so would $\theta_2$. The converse is however not true as $f$ need not be injective. In fact, the equality between the profiles is here to guarantee a minimal version of the converse: if for some $\prof \subseteq \Monoid$, $\theta_2$ contains no data value in $\DataMonoid_2(\prof)$, then neither does $\theta_1$, and if $\DataMonoid_2(\prof)$ contains only one data value, so does $\theta_1$. We will use these remarks implicitly in the sequel.
\end{remark}

\begin{remark}\label{remark-label}
  Notice that from the previous remark and
  condition~\eqref{eq:condition-Gamma-well-formed}, it follows that if
  $\theta_1 \lqacstr_f \theta_2$ then $\theta_1$ and $\theta_2$ have
  the same label. Moreover if $d$ is the data value of $\theta_2$ then $f(d)$
  is the data value of $\theta_1$.
\end{remark}

\begin{remark}
 Notice that if $\theta_1$ and $\theta_2$ are equivalent ($\theta_1 \eqconf \theta_2$), then $\theta_1 \lqacstr_f \theta_2$
 where $f$ is the function witnessing the equivalence.
\end{remark}

The following lemmas are key observations:
\begin{lemma}\label{lem:wqo-subsets-abstract-config}
 $(\ac,\lqacstr)$ is a \wqo.
\end{lemma}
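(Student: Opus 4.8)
The plan is to establish the two ingredients of the definition of \wqo recalled above, namely that $\lqacstr$ is a quasi-order and that every infinite sequence of extended configurations contains a \emph{good pair} $\theta_i\lqacstr\theta_j$ with $i<j$. Reflexivity of $\lqacstr$ is witnessed by the identity function, and transitivity by composing the witnessing functions: if $\theta_1\lqacstr_f\theta_2\lqacstr_g\theta_3$ then $f\circ g$ witnesses $\theta_1\lqacstr\theta_3$, the only non-immediate point being the surjectivity condition $\data(\theta_1)\subseteq(f\circ g)(\D)$, which holds because any $d_0$ with $f(d_0)\in\data(\theta_1)$ must itself lie in $\data(\theta_2)$ and hence in the image of $g$. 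So the real content is the good-pair property.

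The key abstraction is, for an extended configuration $\theta=(\DataState,\DataMonoid,r,m)$ and each $\prof\subseteq\Monoid$, the \emph{finite} multiset $N_\prof(\theta)$ of the sets $\DataState(d)$ with $d$ ranging over $\DataMonoid^{-1}(\prof)\cap\data(\theta)$; this is a multiset over the finite set $\subsets(Q\times\testactionsbot)$. Given an infinite sequence $\theta_1,\theta_2,\dots$, I would first apply a double pigeonhole: $\profile(\theta_i)$ lives in a finite set, and likewise the ``support vector'' $(\mathrm{supp}\,N_\prof(\theta_i))_{\prof\subseteq\Monoid}$ lives in a finite set, so on a suitable subsequence all $\theta_i$ share one profile $(A_0,A_1,r,m)$ and, for every $\prof$, a fixed support $\Sigma_\prof$ of $N_\prof$. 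Recording each $N_\prof(\theta_i)$ by its multiplicity function on $\Sigma_\prof$ gives a point of $\N^{k}$ with $k=\sum_{\prof\subseteq\Monoid}|\Sigma_\prof|$, so by Dickson's Lemma there are $i<j$ with $N_\prof(\theta_i)\le N_\prof(\theta_j)$ (as sub-multisets) for every $\prof$.

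It then remains to build a witness $f:\D\to\D$ for $\theta_i\lqacstr\theta_j$, defined according to the profile $\prof=\DataMonoid_j(d)$ of the argument. For $\prof\in A_1$, send the unique data value of $\theta_j$ of that profile to the unique one of $\theta_i$ (their thread sets agree, since their common singleton support is fixed). For $\prof$ occurring at least twice and $\prof\neq\emptyset$, the sub-multiset inclusion $N_\prof(\theta_i)\hookrightarrow N_\prof(\theta_j)$ lifts to an injection $\DataMonoid_i^{-1}(\prof)\hookrightarrow\DataMonoid_j^{-1}(\prof)$ preserving thread sets, which covers all data values of $\theta_i$ of profile $\prof$; the remaining data values of $\theta_j$ of profile $\prof$ can be mapped to ones of $\theta_i$ of the same profile with an equal thread set, since $\mathrm{supp}\,N_\prof(\theta_j)=\Sigma_\prof=\mathrm{supp}\,N_\prof(\theta_i)$. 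For $\prof=\emptyset$, use $N_\emptyset(\theta_i)\hookrightarrow N_\emptyset(\theta_j)$ to cover the data values of $\theta_i$ carrying a non-empty thread set and no downward path, and map every other data value (all of which have $\DataMonoid_j(d)=\emptyset$) to a single fixed value outside the finite set $\data(\theta_i)$. A direct check then gives $\DataState_i(f(d))\subseteq\DataState_j(d)$ and $\DataMonoid_i(f(d))=\DataMonoid_j(d)$ for all $d$ and $\data(\theta_i)\subseteq f(\D)$; with $\profile(\theta_i)=\profile(\theta_j)$ this yields $\theta_i\lqacstr_f\theta_j$.

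The main obstacle is the second pigeonhole — on supports, not merely on profiles. Unpacking the definition, $\theta_1\lqacstr_f\theta_2$ forces $f$ to map the data values of $\theta_2$ of a profile $\prof$ occurring at least twice \emph{onto} those of $\theta_1$ of profile $\prof$ (such data values cannot be absorbed into the profile $\emptyset$), so every data value of the ``larger'' configuration must be $\subseteq$-dominated by one of the ``smaller'' one, which a plain multiset embedding does not supply. Fixing the supports makes this domination automatic, after which only Dickson's Lemma is needed; everything else is routine bookkeeping on the finite data $\DataState(d)$ and $\DataMonoid(d)$.
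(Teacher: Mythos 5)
Your proof is correct and follows the same core strategy as the paper's: encode each extended configuration by the (finite) counts of data values of each type $(\DataState(d),\DataMonoid(d))$, pigeonhole on the finitely-valued invariants, apply Dickson's Lemma, and read off the witnessing function $f$ from the resulting componentwise domination of counts. The one substantive difference is your second pigeonhole on the supports $\Sigma_\prof$: the paper only extracts a subsequence with constant $\profile$ before invoking Dickson, and then defines $f$ by $f(\typeof{\theta_j}^{-1}(R,\prof))=\typeof{\theta_i}^{-1}(R,\prof)$ --- which is not well-defined when $\typeof{\theta_i}^{-1}(R,\prof)=\emptyset$ while $\typeof{\theta_j}^{-1}(R,\prof)\neq\emptyset$, a situation that constant $\profile$ plus componentwise domination does not exclude (the paper's own sufficient condition requires the count-$0$ sets of $\typeof{}^{-1}$ to agree, but its subsequence extraction does not establish this). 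Your extra extraction on supports, together with the explicit reassignment of ``leftover'' data values of $\theta_j$ to same-type values of $\theta_i$ (and of the cofinitely many trivial values to a fresh value outside $\data(\theta_i)$), closes exactly this gap, so your write-up is in fact the more complete of the two.
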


This is a simple consequence of Dickson's Lemma, that we state next.
Let $\leq_k$ be the componentwise order of vectors of natural numbers of dimension $k$. That is, $(x_1, \dotsc, x_k) \leq_k (y_1, \dotsc, y_k)$ if $x_i \leq y_i$ for all $i \in [k]$.
\begin{lemma}[Dickson's Lemma \cite{dicksonslem}]\label{lem:dickson}
For every $k$,   $(\N^k,\leq_k)$ is a \wqo.
\end{lemma}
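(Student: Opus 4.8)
The plan is to prove the statement by induction on the dimension $k$, using the operational form of the definition: $(\N^k,\leq_k)$ is a \wqo exactly when every infinite sequence $v_1\,v_2\,\dotsb \in (\N^k)^\omega$ admits indices $i<j$ with $v_i \leq_k v_j$. The base case $k=1$ is precisely the well-orderedness of $\N$: were there an infinite sequence $n_1\,n_2\,\dotsb$ of naturals with no such pair, then since $\leq$ is total we would have $n_i > n_j$ for all $i<j$, \ie a strictly decreasing infinite sequence in $\N$, which is impossible.

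For the inductive step I would first isolate the following extraction principle, which is where the real work lies: in any \wqo $(S,\leq)$, every infinite sequence $(s_i)_i$ has an infinite \emph{weakly increasing} subsequence $s_{i_1} \leq s_{i_2} \leq \dotsb$. The cleanest way to obtain this is Ramsey's theorem for pairs: colour each pair $\{i,j\}$ with $i<j$ \emph{good} when $s_i \leq s_j$ and \emph{bad} otherwise, and take an infinite monochromatic set $I$. The colour of $I$ cannot be bad, since an infinite set on which $s_i \not\leq s_j$ for all $i<j$ would directly contradict the \wqo hypothesis on $(S,\leq)$; hence $I$ is good and yields the desired weakly increasing subsequence. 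One can also avoid Ramsey by a ``terminal index'' argument: calling $i$ terminal when no later $j$ satisfies $s_i \leq s_j$, the terminal indices must be finite in number (else they would form an infinite subsequence with no comparable increasing pair), and past the last of them the weakly increasing subsequence can be built greedily.

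With the extraction principle in hand the inductive step is short. Assume $(\N^k,\leq_k)$ is a \wqo and take an infinite sequence in $\N^{k+1}$, writing each term as $v_i = (w_i,a_i)$ with $w_i \in \N^k$ and $a_i \in \N$. Applying the extraction principle to $(w_i)_i$ --- legitimate by the induction hypothesis --- produces an infinite subsequence along which the first $k$ coordinates are weakly increasing for $\leq_k$. Restricting attention to that subsequence, the last coordinates form an infinite sequence of naturals, so by the base case there are two of its indices $i<j$ with $a_i \leq a_j$; for these we also have $w_i \leq_k w_j$, whence $v_i \leq_{k+1} v_j$, as required.

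The main obstacle is the extraction principle, \ie upgrading the bare ``some comparable pair exists'' form of the \wqo definition into an infinite monotone subsequence. This is exactly the step that makes the induction close, since a single comparable pair among the first $k$ coordinates would give no control over the last coordinate and so could not be combined with the base case. Everything else --- the base case and the coordinate bookkeeping --- is routine.
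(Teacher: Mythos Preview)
Your proof is correct and entirely standard: the induction on $k$ together with the extraction principle (every infinite sequence in a \wqo contains an infinite weakly increasing subsequence, obtained either via Ramsey for pairs or the ``terminal index'' argument) is the textbook route to Dickson's Lemma.

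There is nothing to compare against on the paper's side: the paper does not prove Lemma~\ref{lem:dickson} at all but simply states it with a citation to~\cite{dicksonslem} and then uses it as a black box in the proof of Lemma~\ref{lem:wqo-subsets-abstract-config}. Your write-up is thus strictly more than what the paper provides, and it is fine as written.
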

\begin{proof}[Proof of Lemma~\ref{lem:wqo-subsets-abstract-config}]
Note that $\theta \lqacstr_\iota \theta$ for all $\theta \in \ac$ through the identity function $\iota : \D \to \D$, and  that $\theta_1 \lqacstr_{f \circ g} \theta_3$ assuming $\theta_1 \lqacstr_f \theta_2$ and $\theta_2 \lqacstr_g \theta_3$. Thus, $\lqacstr$ is a quasi-ordering. We next show that in fact it is a \wqo.

\newcommand{\eqprofile}{\approx_\textit{prof}}%
\newcommand{\lqvector}{\leq_{\bar x}}%
Given an extended configuration $\theta = (\DataState,\DataMonoid,r,m)$, let us
define $\bar x ( \theta)$ as a function from $\subsets(Q\times \testactionsbot) \times
\subsets(\Monoid)$ to $\N$. We define $\bar x (\theta) (R,\prof) =
|\typeof\theta^{-1}(R,\prof)|$. Notice that $\bar x (\theta)$ can be seen as a
vector of natural numbers of dimension $k=|\subsets(Q\times\testactionsbot) \times
\subsets(\Monoid)|$. It then follows that for every $\theta_1,\theta_2$, if
\[\set{(R,\prof) \in \subsets(Q\times\testactionsbot)\times \subsets( \Monoid) :
  |\typeof{\theta_1}^{-1}(R,\prof)|=i}\]
is equal to \[\set{(R,\prof) \in \subsets(Q\times\testactionsbot)\times \subsets( \Monoid) :
  |\typeof{\theta_2}^{-1}(R,\prof)|=i}\] for every $i \in \set{0,1}$, and $\bar
x (\theta_1) \leq_k \bar x (\theta_2)$, then
$\profile(\theta_1)=\profile(\theta_2)$ and further $\theta_1 \lqacstr
\theta_2$. 

Consider an infinite sequence $(\theta_i)_{i \in \N}$. As there
are only finitely many possible values for $\profile$, there must be an infinite
subsequence $(\theta'_i)_{i \in \N}$ of $(\theta_i)_{i \in \N}$ so that
$\profile(\theta'_i) = \profile(\theta'_j)$ for all $i \neq j$.  By Dickson's
Lemma (Lemma~\ref{lem:dickson}), there are indices $i < j$ so that $\bar x
(\theta'_i) \leq_k \bar x (\theta'_j)$.  Hence, there are indices $i < j$ so that
$\profile(\theta_i) = \profile(\theta_j)$ and $\bar x (\theta_i) \leq_k \bar x
(\theta_j)$.
This means that $|\typeof{\theta_i}^{-1}(R,\prof)| \leq |\typeof{\theta_j}^{-1}(R,\prof)|$ for every $R,\prof$. Let $f : \data(\theta_i) \to \data(\theta_j)$ be so that $f(\typeof{\theta_j}^{-1}(R,\prof)) = \typeof{\theta_i}^{-1}(R,\prof)$ for every $R,\prof$. Note that $\DataState_i(f(d)) = \DataState_j(d)$ and $\DataMonoid_i(f(d)) = \DataMonoid_j(d)$ for every $d$. Thus, $\theta_i \lqacstr_f \theta_j$, and hence $(\ac,\lqacstr)$ is
a $\wqo$.
\end{proof}

Finally we will need the fact that the set of accepting extended configurations
is downward closed.

\begin{lemma}\label{lemma-downward-closed}
If $\theta \in\ac$ is accepting and $\theta'\in\ac$ is such that $\theta' \lqacstr \theta$ then $\theta'$ is
accepting.
\end{lemma}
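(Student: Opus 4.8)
The plan is to unfold the definitions of ``accepting'' and $\lqacstr$ and observe that the surjectivity condition forces $\DataState'$ to be empty whenever $\DataState$ is. First I would recall that $\theta = (\DataState,\DataMonoid,r,m)$ is accepting precisely when $\DataState = \emptyset$, and that $\theta' \lqacstr_f \theta$ gives us a function $f : \D \to \D$ with $\data(\theta') \subseteq f(\D)$ and, for every $d \in \D$, $\DataState'(f(d)) \subseteq \DataState(d)$ and $\DataMonoid'(f(d)) = \DataMonoid(d)$.

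Then I would argue by contradiction (or directly): suppose $\DataState' \neq \emptyset$, so there is some thread $(q,\alpha,e) \in \DataState'$, hence $e \in \data(\theta')$. By the surjectivity condition there is $d \in \D$ with $f(d) = e$. Then $(q,\alpha) \in \DataState'(e) = \DataState'(f(d)) \subseteq \DataState(d)$, so $\DataState(d) \neq \emptyset$, contradicting $\DataState = \emptyset$. Therefore $\DataState' = \emptyset$, i.e.\ $\theta'$ is accepting. This is essentially the first half of Remark~\ref{remark-order}, which already spells out that any thread of $\theta'$ with register value $e$ lifts (via some $f$-preimage) to a thread of $\theta$ with the same state, symbol and register value; one only needs to contrapose it.

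There is essentially no obstacle here --- the statement is an immediate consequence of the surjectivity condition built into the definition of $\lqacstr$, and the proof of Lemma~\ref{lem:wqo-subsets-abstract-config} has already isolated exactly the facts needed. The only thing to be slightly careful about is not to confuse $\DataState'(f(d)) \subseteq \DataState(d)$ (the correct inclusion direction) with the reverse: it is the \emph{smaller} configuration $\theta'$ whose threads must be dominated, so emptiness of the threads propagates downward, which is precisely what the lemma asserts. I would therefore keep the proof to the two-line argument above, perhaps with a parenthetical pointer to Remark~\ref{remark-order}.
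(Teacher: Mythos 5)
Your proof is correct and is essentially the same argument as the paper's: both use the surjectivity condition of $f$ to lift any would-be thread of $\theta'$ to a thread of $\theta$, contradicting $\DataState = \emptyset$. Nothing further is needed.
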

\begin{proof}
  If $\theta' \lqacstr_f \theta$ then, by the surjectivity condition of $f$, we have that for all $d\in\data(\theta')$, $\DataState'(d) \subseteq \DataState(d')$ for some $d'$ such that $f(d')=d$. As $\theta$ is accepting this implies that $\DataState(d') = \emptyset$ for all $d' \in \D$. Hence $\theta'$ is also accepting.
\end{proof}

\subsection{Transition system}\label{sec-trans}
\newcommand\rightarrowAconf[1]{\ensuremath{\xrightarrow{\text{\tiny #1}}}\xspace}
\newcommand\RightarrowAconf[1]{\ensuremath{\xLongrightarrow{\text{\tiny #1}}}\xspace}
\newcommand\transup{\rightarrowAconf{\textup{grow}}}
\newcommand\transmerge{\rightarrowAconf{\textup{merge}}}
\newcommand\transdup{\rightarrowAconf{$\opdup$}}
\newcommand\transuniv{\rightarrowAconf{$\opuniv$}}
\newcommand\transdelta{\rightarrowAconf{$\delta$}}
\newcommand\transpump{\rightarrowAconf{\textup{inc}}}
\newcommand\transpumpSX{\rightarrowAconf{inc$(S,\prof)$}}
\newcommand\transpumpSpX{\rightarrowAconf{inc$(S',\prof)$}}
\newcommand\transpumpX{\rightarrowAconf{inc$(\prof)$}}
\newcommand\transempty{\rightarrowAconf{$\opdisjoin$}}
\newcommand\transunique{\rightarrowAconf{$\opunique$}}
\newcommand\transdown{\rightarrowAconf{$\tup{\aMonoid}$}}
\newcommand\transdownn{\rightarrowAconf{$\overline{\tup{\aMonoid}}$}}
\newcommand\transdowneq{\rightarrowAconf{$\tup{\aMonoid}^{=}$}}
\newcommand\transdowneqn{\rightarrowAconf{$\overline{\tup{\aMonoid}^{=}}$}}
\newcommand\transdownneq{\rightarrowAconf{$\tup{\aMonoid}^{\not=}$}}
\newcommand\transdownneqn{\rightarrowAconf{$\overline{\tup{\aMonoid}^{\not=}}$}}
\newcommand\transtestq{\rightarrowAconf{$q$}}
\newcommand\transtestnq{\rightarrowAconf{$\overline q$}}
\newcommand\transtesteq{\rightarrowAconf{$\optesteq$}}
\newcommand\transtestneq{\rightarrowAconf{$\optestneq$}}
\newcommand\transtestroot{\rightarrowAconf{$\optestroot$}}
\newcommand\transtestnroot{\rightarrowAconf{$\optestnroot$}}
\newcommand\transguess{\rightarrowAconf{$\opguess$}}
\newcommand\transstore{\rightarrowAconf{$\opstore$}}
\newcommand\transkeep{\rightarrowAconf{$\opnothing$}}
\newcommand\transaccept{\rightarrowAconf{$\opaccept$}}
\newcommand\transand{\rightarrowAconf{$\land$}}
\newcommand\transor{\rightarrowAconf{$\lor$}}
%
%
\newcommand{\tranaconf}{\rightarrow}
\newcommand{\tranaconfe}{\rightarrow_\epsilon}
\newcommand{\tranAC}{\mathrel{\Rightarrow}}
\newcommand{\NonIncData}{I}

We now equip \ac with a transition relation $\tranaconf$ that reflects the
transition function $\delta$ of \anAut. As mentioned earlier, we decompose each
transition of \anAut into several basic steps. The first kind can be viewed as
$\epsilon$-transitions, each of them concerning a single thread, which perform
a basic test or action but without moving up in the tree. When all threads of
an extended configuration have performed their $\epsilon$-transition, the
extended configuration switches to a merging state and is ready for the next
step. The second kind merges several extended configurations with the same label
and data value that are in a merge state (\ie, $m = \textit{true}$) in order to
combine their immediate subtrees, this operation is merely a union. A third
kind concludes the simulation of the transition by ``adding a root'' to an
extended configuration over a data tree.

We will denote by \transup the third kind of transition, by \transmerge, the
second kind, and by $\tranaconfe$ the union of all the
$\epsilon$-transitions. In order to obtain our compatibility result with
respect to the partial order, we will need one
extra transition $\transpump$ that makes our trees in some sense ``fatter'', by
duplicating immediate subtrees of the root, and whose purpose will be essential for showing compatibility for the $\transmerge$ transition (Lemma~\ref{lem:monotonic-merge}). The union of all these transitions will form the transition relation
$\tranaconf$ of \ac.

We start with the description for $\epsilon$-transitions. Those are defined in
a straightforward way in order to simulate the tests and actions of the
initial automaton.

Given two extended configurations $\theta_1=(\Delta_1,\Gamma_1,r_1, m_1)$ and
$\theta_2=(\Delta_2,\Gamma_2,r_2,m_2)$, we say that $\theta_1 \tranaconfe
\theta_2$ if $m_1=m_2=\textit{false}$ (the merge information is used for
simulating an \up-transition as will be explained later), $r_2=r_1$ (whether
the current node is the root or not should not change), $\theta_1$ and
$\theta_2$ have the same label and data value, $\DataMonoid_2=\DataMonoid_1$
and, furthermore, one of the following  holds:

\newcounter{transcounter}
\begin{enumerate}[leftmargin=8mm]
\item $\theta_1 \transdelta \theta_2$. This transition can happen if there is
  $(q,\bot,d)\in\DataState_1$ for some state $q\in Q$ and data $d\in\D$. In this case, for
  some transition $\tau \in \delta$, $\theta_2$ is such that: $\DataState_2 =
  (\DataState_1 \setminus \set{(q,\bot,d)}) \cup \set{(q,\alpha_i,d) : i\leq
    |\tau|}$, where $(\alpha_i)_{i \leq |\tau|}$ are all the tests and actions
  occurring in $\tau$.

\item $\theta_1 \transuniv \theta_2$. This transition can happen if there is
  $(q,\opuniv(p),d)\in\DataState_1$ for some states $p,q \in Q$ and data $d\in\D$. In this
  case $\theta_2$ is such that $\DataState_2 = (\DataState_1 \setminus
  \set{(q,\opuniv(p),d)}) \cup \set{(p,\topng,e) : \exists \aMonoid ~.~
    (\aMonoid,e) \in \DataMonoid_1} $.

\item $\theta_1 \transguess \theta_2$.  This transition can happen if there is
  $(q,\opguess(p),d) \in\DataState_1$ for some states $p,q \in Q$ and data $d\in\D$.  In
  this case $\theta_2$ is such that $\DataState_2=(\DataState_1 \setminus
  \set{(q,\opguess(p),d)}) \cup \set{(p,\topg, d')}$ for some $d' \in \D$.

\item $\theta_1 \transstore \theta_2$. This transition can happen if there is
  $(q,\opstore(p),d) \in\DataState_1$ for some states $p,q \in Q$ and data $d\in\D$.  In
  this case $\theta_2$ is such that $\DataState_2=(\DataState_1 \setminus
  \set{(q,\opstore(p),d)}) \cup \set{(p,\topng, d')}$ where $d'$ is the data
  value of both $\theta_1$ and $\theta_2$ given by~\eqref{eq:condition-Gamma-well-formed}.

\item $\theta_1 \transkeep \theta_2$. This transition can happen if there is
  $(q,\opnothing(p),d) \in\DataState_1$ for some states $p,q \in Q$ and data $d\in\D$.  In
  this case $\theta_2$ is such that $\DataState_2=(\DataState_1 \setminus
  \set{(q,\opnothing(p),d)}) \cup \set{(p,\topng, d)}$.

\item $\theta_1 \transaccept \theta_2$. This transition can happen if there is
  $(q,\opaccept,d) \in\DataState_1$ for some state $q \in Q$ and data $d\in\D$.  In this
  case $\theta_2$ is such that $\DataState_2=\DataState_1 \setminus
  \set{(q,\opaccept,d)}$.

\item $\theta_1 \transdowneq \theta_2$. This transition can happen if there is
  $(q,\tup{\aMonoid}^{=},d) \in \DataState_1$ for some state $q \in Q$ and data
  $d\in\D$ with $(\aMonoid,d) \in \DataMonoid_1$. In this case $\theta_2$ is
  such that $\DataState_2 = \DataState_1 \setminus \set{(q,\tup{\aMonoid}^{=},
    d)}$.

\item $\theta_1 \transdownneq \theta_2$. This transition can happen if there is
  $(q,\tup{\aMonoid}^{\neq},d) \in \DataState_1$ for some state $q \in Q$ and
  data $d\in\D$ and there exists a data $e \in \D$, $e\neq d$ such that $(\aMonoid,e) \in \DataMonoid_1$. In this case $\theta_2$ is such that $\DataState_2 = \DataState_1 \setminus \set{(q,\tup{\aMonoid}^{\neq},d)}$.  

\item $\theta_1 \transtestroot \theta_2$. This transition can happen if
  $r_1=\textit{true} $ and there is $(q,\optestroot,d) \in \DataState_1$ for
  some state $q\in Q$ and data $d\in \D$. In this case $\theta_2$ is such that
  $\DataState_2 = \DataState_1 \setminus \set{(q,\optestroot, d)}$.

\item $\theta_1 \transtestq \theta_2$. This transition can happen there is
  $(q,q,d) \in \DataState_1$ for some data $d\in \D$. In this case $\theta_2$
  is such that $\DataState_2 = \DataState_1 \setminus \set{(q,q, d)}$.

\item The negation of these tests: $\transtestnq$, $\transdowneqn$,
  $\transdownneqn$ and $\transtestnroot$, are defined in a similar way. For
  instance $\theta_1 \transdownneqn \theta_2$ can happen if there is
  $(q,\overline{\tup{\aMonoid}^{\neq}},d) \in \DataState_1$ for some state $q \in Q$ and
  data $d\in\D$ but no data $e \in \D$, $e\neq d$ such that $(\aMonoid,e) \in \DataMonoid_1$. In this case $\theta_2$ is such that $\DataState_2 = \DataState_1 \setminus \set{(q,\overline{\tup{\aMonoid}^{\neq}},d)}$.
\setcounter{transcounter}{\value{enumi}}
\end{enumerate}
\bigskip

Notice that we do not include transitions for the tests $\optesteq$,
$\optestleaf$, $a$ and $b$ as we work with abstractions of $\budta$ in normal form (\cf~Proposition~\ref{prop-normal-form}).

\begin{remark}\label{remark-profile}
  Notice that all the transitions above do not modify the $\profile$ of the
  extended configuration as their $\DataMonoid$ part remains untouched.
\end{remark}

\bigskip

As mentioned earlier, for technical reasons we will need a transition that
makes our trees fatter. The idea is that this transition corresponds to
duplicating an immediate subtree of the root, using a fresh new name for one of
its data value of a given type, where the type of a data value $d$ with an
extended configuration $\theta$ is $\typeof{\theta}(d)$. This transition
assumes the same constraints as for $\tranaconfe$ except that we no longer have
$\DataMonoid_2=\DataMonoid_1$.
%

\medskip

\begin{enumerate}[leftmargin=8mm]
\setcounter{enumi}{\value{transcounter}}
\item $\theta_1 \transpumpSX \theta_2$. This transition can happen if
  $|\typeof{\theta_1}^{-1}(S,\prof)| \geq 1$ and $|(\DataMonoid_1)^{-1}(\prof)|
  \geq 2$.  Then $\theta_2$ is such that $\data(\theta_2)=\data(\theta_1) \cup
  \set{e}$ for some data $e \not\in\data(\theta_1)$, $\typeof{\theta_2}(e)=(\hat
  S,\prof)$ where $\hat S=S \setminus \set{(q, \topg) : (q, \topg) \in S}$, and
  for all $d \neq e$, $\typeof{\theta_2}(d)=\typeof{\theta_1}(d)$.
  \setcounter{transcounter}{\value{enumi}}
\end{enumerate}

Observe that the conditions required for using $\transpumpSX$ enforce that the
truth value of any test is not changed. Indeed, condition
$|\typeof{\theta_1}^{-1}(S,\prof)| \geq 1$ says that there are at least one
data value of type $(S,\prof)$. On the
other hand, condition $|(\DataMonoid_1)^{-1}(\prof)| \geq 2$ ensures in fact
this is not a `special' data value in the sense that it is the only data value
accessible through $\prof$. This is essential to make sure that
there is at least one data value that can be `copied' multiple times since the automaton can
express properties like ``there is exactly one data value accessible through
$\aMonoid \in \prof$''.  This operation is, intuitively, a way of duplicating
the derivation of $\theta_1$ in the transition system. The idea is that it
would correspond to a run on an expansion of the tree that would yield the
extended configuration $\theta_1$, by duplicating the immediate subtrees of the
root, renaming the data value $d$ with $e$ (see Figure~\ref{fig:inc-tran}).
\begin{figure}
  \centering \includegraphics[scale=1]{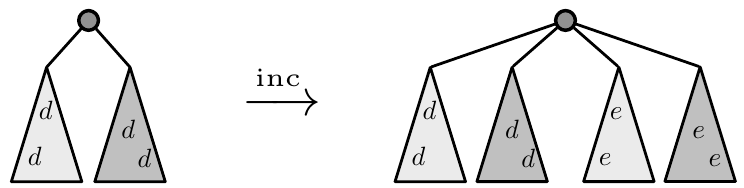}
  \caption{The \emph{inc} operation.}
  \label{fig:inc-tran}
\end{figure}
However, note that the $\topg$-flagged threads are not duplicated in this
operation. This corresponds to assuming that when applying this operation, the new threads do not guess a `new' value $e$ but they stick to guessing the `old' one $d$. This ensures that $\theta_2$  reaches all accepting configurations that $\theta_1$ reaches.

We use $\transpump$ as the union of all $\transpumpSX$ for all $(S,\prof)$.

\begin{remark}\label{remark-inc}
  Notice that the preconditions of \transpumpSX implies that the $\profile$ of the extended
  configuration is not changed.
\end{remark}

\medskip

We now turn to the simulation of an extended configuration moving up in the tree. We split this part into two phases: adding a new root symbol and merging the roots (see Figure~\ref{fig:root-merge-tran}).
\begin{enumerate}[leftmargin=8mm]
\setcounter{enumi}{\value{transcounter}}
\item $\theta_1 \transup \theta_2$. Given two extended configurations
  $\theta_1$ and $\theta_2$ as above, we define $\theta_1 \transup \theta_2$ if
  $r_1=m_1=\textit{false}$, and for all $(q,\alpha, d) \in \DataState_1$, we
  have $\alpha\in\set{\topng,\topg}$. In this case $\theta_2$ is such that
  \begin{align*}
    \DataState_2 &= \set{(q,\bot,d) : (q, \alpha, d) \in \DataState_1
      } \text{,~~~and}\\
   \DataMonoid_2 &=
    \set{(\aMonoid',e) : (\aMonoid,e)\in\DataMonoid_1, \aMonoid' =
      \morphism(c) \conc \aMonoid} \cup \set{(\morphism(c),d')},
  \end{align*}
for some $c \in \A \times \B$ and $d' \in \D$. Notice that $c$ and $d'$ are then the label and data value of $\theta_2$ and that $r_2$ and $m_2$ could be \textit{true} or \textit{false}. As a consequence of the normal form \ref{eq:semigroup-normal-form} of the semigroup, this operation preserves property \eqref{eq:condition-Gamma-well-formed}.

\item $\theta_1, \theta_2 \transmerge \theta_0$. Given 3 extended
  configurations $\theta_1=(\Delta_1,\Gamma_1,r_1,m_1)$,
  $\theta_2=(\Delta_2,\Gamma_2,r_2,m_2)$,
  $\theta_0=(\Delta_0,\Gamma_0,r_0,m_0)$ we say that $\theta_1, \theta_2
  \transmerge \theta_0$ if they all have the same label and data value, $m_1 =
  m_2 = \textit{true}$, $r_1=r_2=r_0$, $\DataState_0=\DataState_1 \cup
  \DataState_2$, and $\DataMonoid_0 = \DataMonoid_1 \cup \DataMonoid_2$. Notice
  that this operation preserves property
  \eqref{eq:condition-Gamma-well-formed} and that $m_0$ could be either \textit{true} or
  \textit{false}.
\end{enumerate}

Given a finite set $\Theta$ of extended configurations and $\theta\in \ac$ we
write $\Theta \tranAC \theta$ if there are extended configurations in $\Theta$
generating $\theta$ according to the transition rules 1 to 14. Based on this,
we define by induction $\Theta \tranAC^+ \theta$ if:
  \begin{enumerate}
  \item $\Theta \tranAC \theta$, or
  \item there is an extended configuration $\theta'\in \ac$ such that $\Theta \tranAC \theta'$ and $\Theta \cup
    \set{\theta'} \tranAC^+ \theta$.
  \end{enumerate}

In the definition of the transition system, the $m$ flag is simply used to
constrain the transition system to have all its $\transmerge$ operations right
after $\transup$ and before any $\tranaconfe$. Thus any transition $\tranAC^+$
complies to the following regular expression:
\begin{align}\label{eq:form-of-derivation}
  \big((\tranaconfe \midd \transpump)^* 
  \transup (\transmerge)^* \big)^*(\tranaconfe \midd \transpump)^* \ . \tag{$\dag$}
\end{align}


\subsection{Compatibility}\label{section-compatibility}
\newcommand{\kUpto}{\textbf{\textit l}}
\newcommand{\lqacstrX}{\lqacstr_\kUpto}
\newcommand{\lqacX}{\lqac_\kUpto}
\newcommand{\konstN}{\mathbf{N}}
\newcommand{\eqAC}{\equiv} 

Given two finite sets of extended configurations $\Theta$ and $\Theta'$ we write $\Theta \lqAC \Theta'$ if for all $\theta' \in \Theta'$ there is $\theta \in \Theta$ such that $\theta \lqacstr \theta'$. That is, every element from $\Theta'$ is \emph{minorized} by an element of $\Theta$.

Given a finite set $\Theta$ of extended configurations, we denote by
$\Theta^\eqconf$ the set of all extended configurations equivalent to some configuration in
$\Theta$. In other words, $\Theta^\eqconf$ is the closure of $\Theta$ under
bijections on data values.

The following proposition essentially shows that bigger extended configurations
can be safely ignored in order to test for emptiness. This will be the main
technical contribution of this section.

\begin{proposition}\label{prop-compatible}
  If $\Theta,\Theta'$ are finite sets of extended configurations such that
  $\Theta \lqAC \Theta'$ and $\theta'\in\ac$ is such that $\Theta' \tranAC
  \theta'$ then there exists $\theta\lqacstr \theta'$ such that
  $\Theta^\eqconf\tranAC^n \theta$, for some $n$ computable from $\Theta$.
\end{proposition}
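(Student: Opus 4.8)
The plan is to prove the statement by a case analysis on which of the transition rules 1--14 is used to derive $\theta'$ from $\Theta'$, and in each case to exhibit, for the witnessing element(s) of $\Theta'$, a minorizing element (or a few minorizing elements) of $\Theta$ that can be driven --- possibly after first applying some $\transpump$ steps --- to a configuration $\theta \lqacstr \theta'$. Since $\Theta^\eqconf$ is closed under bijections of data values, I may freely rename data values in the elements of $\Theta$ so that the relevant $\lqacstr$-witnessing functions $f$ become convenient; this is what lets us stitch the local arguments together. For a rule with a single premise $\theta'_1 \tranaconf \theta'$ (all of rules 1--12, and $\transup$, rule 13), I would pick $\theta_1 \in \Theta$ with $\theta_1 \lqacstr_f \theta'_1$ and check that the same transition (applied to the appropriate thread/element of $\Delta_1$, which exists by Remark~\ref{remark-order}) produces a $\theta_1'$ with $\theta_1' \lqacstr \theta'$. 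Remarks~\ref{remark-profile} and~\ref{remark-inc} guarantee the profiles stay equal along the way, so only the thread-inclusion and $\DataMonoid$-equality parts of $\lqacstr$ need to be re-established, which for the $\tranaconfe$ rules is immediate because $\DataMonoid$ is untouched and the $\DataState$-modification is "the same on both sides" under $f$.

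The delicate rules are those whose truth depends on \emph{how many} data values of a given kind are present: $\transdownneq$ / $\transdownneqn$ (rule 8 and its negation), $\transuniv$ (rule 2), and above all $\transmerge$ (rule 14). For $\transdownneqn$, the precondition on $\theta'_1$ is that \emph{no} $e \neq d$ has $(\mu,e)\in\DataMonoid'$; since $\theta_1 \lqacstr_f \theta'_1$ forces $\profile(\theta_1)=\profile(\theta'_1)$ and hence (via $A_0,A_1$ and Remark~\ref{remark-order}) a matching count of $\DataMonoid$-preimages, the same scarcity holds upstairs, so the test fires. For $\transdownneq$ the precondition is the existence of such an $e$; here $\profile$ equality again transports it, but one must be careful that $f$ may collapse $d$ and $e$ --- this is precisely the situation the $A_1$ component of the profile is designed to rule out, since if $\DataMonoid'^{-1}(\prof)$ had $\geq 2$ elements then so does $\DataMonoid_1^{-1}(\prof)$, and one picks two distinct preimages. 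For $\transuniv$, the new threads $(p,\topng,e)$ are created for every $e$ occurring in $\DataMonoid_1$; since $\DataMonoid_1(f(d)) = \DataMonoid'(d)$ for all $d$, the set of such $e$'s upstairs maps \emph{onto} the set downstairs under $f$, and the resulting configurations satisfy $\lqacstr$ with the same $f$.

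The main obstacle is the $\transmerge$ case (rule 14), and this is exactly why the statement allows $\tranAC^n$ for $n>1$ and why the $\transpump$ transition was introduced (cf.\ the remark preceding Lemma~\ref{lem:monotonic-merge}). Given $\theta'_1,\theta'_2 \transmerge \theta'_0$ with $\Theta' \tranAC \theta'_0$, I have $\theta_1 \lqacstr_{f_1} \theta'_1$ and $\theta_2 \lqacstr_{f_2} \theta'_2$ from $\Theta$, but $f_1$ and $f_2$ need not agree on the shared data values of $\theta'_1$ and $\theta'_2$ (they only agree on the label and data value of the configuration, by Remark~\ref{remark-label}), so one cannot directly form $\theta_1 \cup \theta_2$ and compare it to $\theta'_0 = \theta'_1 \cup \theta'_2$ via a single function. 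The fix is to first rename data values in $\theta_1$ and $\theta_2$ (allowed, as they live in $\Theta^\eqconf$) so that their data sets overlap in a compatible way, and then --- here is where $\transpump$ enters --- to apply enough $\transpumpSX$ steps to one or both of $\theta_1,\theta_2$ so that every $\DataMonoid'$-type that occurs in $\theta'_0$ with $\geq 2$ representatives also occurs with enough distinct representatives in the pumped-up configurations to realize a common refinement of $f_1$ and $f_2$; then a single $\transmerge$ on the pumped configurations yields $\theta_0 \lqacstr \theta'_0$. The bookkeeping --- checking that the $\transpump$ preconditions $|\typeof{\theta_1}^{-1}(S,\prof)|\geq 1$ and $|(\DataMonoid_1)^{-1}(\prof)|\geq 2$ are met, that profiles are preserved throughout, and that the number $n$ of steps is bounded by a computable function of $\Theta$ (e.g.\ polynomially in $\sum_{\theta\in\Theta}|\data(\theta)|$ and the finite constants $|Q|,|\Monoid|,|\A\times\B|$) --- is the technical heart of the proof, but none of it requires a new idea beyond this renaming-then-pumping scheme.
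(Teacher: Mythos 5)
Your plan follows the paper's own proof almost exactly: the paper factors the statement into Lemma~\ref{lem:monotonic} (all transitions except $\transmerge$, handled by mimicking the transition on the minorizing configuration, after first using $\transpump$ --- via the injectivity lemmas --- to repair the witnessing function wherever counting matters) and Lemma~\ref{lem:monotonic-merge} (the merge case, handled by precisely your renaming-then-pumping scheme, with $\Theta^\eqconf$ supplying the renaming and $\transpump$ reconciling the two witnessing functions on the shared data values before a single $\transmerge$ is applied). The delicate points you single out ($\transdownneq$ and its negation, $\transuniv$, and the merge) are the same ones the paper isolates, and your use of the $A_0,A_1$ components of the profile to transport the existence or scarcity of $\DataMonoid$-preimages is the paper's argument.

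One correction to the single-premise cases: you justify the existence of the consumed thread in the minorizing configuration $\theta_1$ by Remark~\ref{remark-order}, but that remark goes the other way --- threads of the \emph{smaller} configuration lift to the bigger one, not conversely. The thread $(q,\alpha,f(d))$ may simply be absent from $\DataState_1$. The paper therefore splits each such case in two: if the thread is absent, no transition is applied at all, since $\theta_1 \lqacstr_f \theta'$ already holds (the transition only removes or replaces a thread that the smaller configuration never had, and $\lqacstr$ only requires an inclusion of thread sets); only if the thread is present does one mimic the transition. This extra case is easy, but as written your argument does not cover it, so it should be added.
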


The key ingredient to prove this, is to show that transitions on \ac are compatible with the order of \ac. We treat the case of \transmerge in a separate lemma later.

We first show useful lemmas illustrating the power of \transpump.

\begin{figure}
  \centering
  \includegraphics[scale=1]{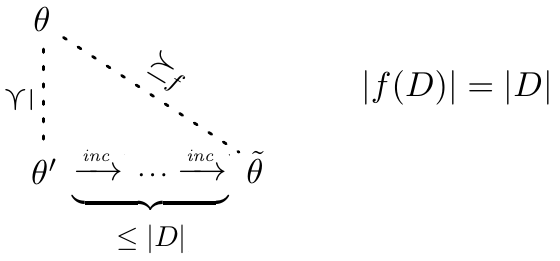}
  \caption{Lemma~\ref{lemma-injectivity}.}
  \label{fig:lemma-injectivity}
\end{figure}
\begin{lemma}[Figure~\ref{fig:lemma-injectivity}]\label{lemma-injectivity}
  Let $\theta,\theta' \in \ac$ such that $\theta' \lqacstr \theta$.  Let $D\subseteq \D$ be a finite set of data values.  Then there exists $\tilde\theta$ such that:
  \begin{enumerate}
  \item $\theta'~ (\transpump)^n ~\tilde\theta$, for some $n \leq |D|$,
  \item $\tilde\theta \lqacstr_f \theta$,
  \item $f$ is injective on $D$.
  \end{enumerate}
\end{lemma}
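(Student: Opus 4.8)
The plan is to argue by induction on the \emph{defect} $k := |D| - |f(D)|$ of the witnessing function $f$ (given by $\theta' \lqacstr_f \theta$) on $D$, i.e.\ the number of collisions $f$ creates among elements of $D$; note $0 \le k \le |D|$, with $k = 0$ precisely when $f$ is injective on $D$. We may assume $D \subseteq \data(\theta)$ and, after adjusting $f$ on $\D \setminus \data(\theta)$ (which the definition of $\lqacstr$ leaves essentially free, forcing only that such values be mapped outside $\data(\theta')$), that $f$ is injective on $\D \setminus \data(\theta)$ with co-infinite image, so that fresh data values lying outside $f(\D)$ are always available. If $k = 0$ we are done with $\tilde\theta := \theta'$ and $n := 0$. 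If $k \ge 1$, pick $d_1 \ne d_2 \in D$ with $f(d_1) = f(d_2) =: e$; the idea is to perform a single \transpump step creating a fresh copy $e'$ of $e$ (chosen outside $f(\D)$), then reroute $d_2$ from $e$ to $e'$, obtaining $\theta''$ with $\theta'' \lqacstr_{\tilde f} \theta$ for a witness $\tilde f$ of defect $k - 1$, and to close the induction with $n \le 1 + (k - 1) = k \le |D|$.

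\smallskip

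For the inductive step, set $S := \DataState'(e)$ and $\prof := \DataMonoid'(e)$, so $\typeof{\theta'}(e) = (S, \prof)$ and hence $|\typeof{\theta'}^{-1}(S,\prof)| \ge 1$. Since $\theta' \lqacstr_f \theta$, we have $\DataMonoid(d_i) = \DataMonoid'(f(d_i)) = \prof$ for $i = 1,2$, so $d_1 \ne d_2$ forces $|\DataMonoid^{-1}(\prof)| \ge 2$ in $\theta$; thus $\prof$ lies in neither the $A_0$ nor the $A_1$ component of $\profile(\theta) = \profile(\theta')$, whence $|(\DataMonoid')^{-1}(\prof)| \ge 2$. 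These are exactly the preconditions of a step $\theta' \transpumpSX \theta''$, which produces $\data(\theta'') = \data(\theta') \cup \set{e'}$ for a fresh $e'$ (picked outside $f(\D)$) with $\typeof{\theta''}(e') = (\hat S, \prof)$, $\hat S := S \setminus \set{(q, \topg) : (q, \topg) \in S}$, and $\typeof{\theta''}(d) = \typeof{\theta'}(d)$ for all $d \ne e'$. We also have to check that $\transpumpSX$ respects the constraints inherited from $\tranaconfe$ --- unchanged label and data value, and condition~\eqref{eq:condition-Gamma-well-formed}: neither $d_1$ nor $d_2$ is the data value of $\theta$, for if $d_1$ were, then $e = f(d_1)$ would be the data value of $\theta'$ by Remark~\ref{remark-label}, and $\prof = \DataMonoid(d_1) = \DataMonoid(d_2)$ would then contain $\morphism(c_0)$ where $c_0 \in \A \times \B$ is the common label, making $(\morphism(c_0), d_1)$ and $(\morphism(c_0), d_2)$ two distinct pairs of $\DataMonoid$ and contradicting \eqref{eq:condition-Gamma-well-formed} for $\theta$; hence $e$ is not the data value of $\theta'$, so $\prof$ contains no $\morphism(c)$ with $c \in \A \times \B$, and copying $\prof$ onto the fresh $e'$ adds no new label pair.

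\smallskip

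Let $\tilde f$ agree with $f$ except that $\tilde f(d_2) := e'$; since $e'$ lies outside $f(\D)$, no element of $\D$ besides $d_2$ is sent to $e'$. Then $\theta'' \lqacstr_{\tilde f} \theta$: the profiles agree because \transpumpSX preserves the profile (Remark~\ref{remark-inc}); for every $d \ne d_2$ we have $\tilde f(d) = f(d) \ne e'$, so $\typeof{\theta''}(\tilde f(d)) = \typeof{\theta'}(\tilde f(d))$, and the inclusion $\DataState''(\tilde f(d)) \subseteq \DataState(d)$ and equality $\DataMonoid''(\tilde f(d)) = \DataMonoid(d)$ are inherited from $\theta' \lqacstr_f \theta$; for $d_2$ we have $\DataState''(e') = \hat S \subseteq S = \DataState'(e) \subseteq \DataState(d_2)$ and $\DataMonoid''(e') = \prof = \DataMonoid(d_2)$; and the surjectivity condition holds because $\data(\theta') \subseteq f(\data(\theta)) \subseteq \tilde f(\data(\theta))$ (using $e = f(d_1) = \tilde f(d_1)$ with $d_1 \ne d_2$, so dropping $f(d_2) = e$ from the image loses nothing) and $e' = \tilde f(d_2)$, giving $\data(\theta'') = \data(\theta') \cup \set{e'} \subseteq \tilde f(\D)$. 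Moreover $\tilde f$ differs from $f$ on $D$ only by sending $d_2$ to $e' \notin f(D)$ while keeping $e = \tilde f(d_1) \in \tilde f(D)$, so $|\tilde f(D)| = |f(D)| + 1$, i.e.\ the defect has dropped to $k - 1$. Applying the induction hypothesis to $\theta''$ (in the role of $\theta'$) with witness $\tilde f$ yields $\tilde\theta$ with $\theta'' \, (\transpump)^{n'} \, \tilde\theta$, $n' \le k - 1$, and $\tilde\theta \lqacstr \theta$ through a function injective on $D$; prepending the one \transpumpSX step gives $\theta' \, (\transpump)^{n' + 1} \, \tilde\theta$ with $n' + 1 \le k \le |D|$, which completes the induction.

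\smallskip

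The main obstacle --- and the only step that is not routine --- is the applicability of \transpumpSX, concretely the inequality $|(\DataMonoid')^{-1}(\prof)| \ge 2$. This is exactly where the equality of profiles is used, and where the precondition $|(\DataMonoid_1)^{-1}(\prof)| \ge 2$ built into the definition of \transpump pays off: the automaton cannot distinguish two from more-than-two data values reachable along a fixed path, so once there are at least two genuine copies of such a value one can manufacture a further copy without changing anything the automaton can observe. The remaining points --- preservation of \eqref{eq:condition-Gamma-well-formed}, the harmless discarding of the $\topg$-flagged threads in $\hat S$ (harmless because $\lqacstr$ compares the thread component only up to inclusion), and the surjectivity bookkeeping for $\tilde f$ --- are all straightforward.
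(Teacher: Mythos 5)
Your proof is correct and follows essentially the same route as the paper's: locate a collision $f(d_1)=f(d_2)$ on $D$, use $\profile(\theta)=\profile(\theta')$ to deduce $|(\DataMonoid')^{-1}(\prof)|\geq 2$ so that $\transpumpSX$ applies, create a fresh copy and reroute one of the colliding values to it, and repeat at most $|D|$ times. Your extra bookkeeping (organizing the repetition as an induction on the defect, choosing the fresh value outside $f(\D)$, and checking preservation of \eqref{eq:condition-Gamma-well-formed}) only makes explicit details the paper leaves to the reader.
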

\begin{proof}
  Fix $\theta=(\DataState,\DataMonoid,r,m)$ and
  $\theta'=(\DataState',\DataMonoid',r',m')$.  Assume that $\theta' \lqacstr_f
  \theta$. We will modify $\theta'$ and $f$ until $f$ becomes injective.

  If $f$ is not injective then we have $d,d'\in D$, $d\neq d'$ but $f(d)=f(d')$. 

  Let
  $(S,\prof)=\typeof{\theta'}(f(d))$. Then we have
  $\prof=\DataMonoid(d)=\DataMonoid'(f(d))=\DataMonoid'(f(d'))=\DataMonoid(d')$.
  From $\theta'\lqacstr\theta$ it follows that
  $\profile(\theta)=\profile(\theta')$. Therefore $\theta'$ must contain
  another data value $e\neq f(d)$ such that $\DataMonoid'(e)=\prof$. Hence we
  can apply $\theta'\transpumpSX \theta''$ adding a copy $\tilde d$ of $f(d)$, so
  that $\typeof{\theta''}(\tilde d)=(\hat S, \prof)$ where $\hat S = S
  \setminus \set{(q, \topg) : (q, \topg) \in S}$. The reader can verify that
  the resulting extended configuration $\theta''$ is such that $\theta''
  \lqacstr_{g} \theta$ where $g$ is the mapping identical to $f$ except that
  $g(d)=\tilde d\neq f(d')=g(d')$. Repeating this operation at most $|D|$ times we
  obtain the desired $\tilde \theta$.
\end{proof}

\begin{lemma}\label{lem:augment-elements}
  Let $\theta,\theta' \in \ac$ as in \eqref{eq:notation} such that $\theta' \lqacstr \theta$.  Then there
  exists $\tilde\theta$ such that:
  \begin{enumerate}
  \item $\theta'~ (\transpump)^n ~\tilde\theta$, for some $n \leq 3 \cdot 2^{|\Monoid|}$,
  \item $\tilde\theta \lqacstr \theta$,
  \item for all $\prof\subseteq\Monoid$, if $\theta$ contains strictly more
    than $2$ data values $d$ with $\DataMonoid(d)=\prof$, then so does
    $\tilde\theta$.
  \end{enumerate}
\end{lemma}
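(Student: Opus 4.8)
The plan is to repeatedly apply the \transpump transition to $\theta'$ in order to increase the number of data values of each downward-path type $\prof \subseteq \Monoid$ until, for every $\prof$ that is ``abundant'' in $\theta$ (i.e.\ has strictly more than $2$ witnesses $d$ with $\DataMonoid(d) = \prof$), the same abundance holds in the pumped configuration. The key point that makes \transpump applicable is exactly the one highlighted in the discussion following transition~12: \transpumpSX requires $|\typeof{\theta_1}^{-1}(S,\prof)| \geq 1$ and $|(\DataMonoid_1)^{-1}(\prof)| \geq 2$, and both of these will be available whenever $\theta$ itself has at least two $\prof$-witnesses while $\theta'$ has strictly fewer (which, since $\theta' \lqacstr \theta$ forces $\profile(\theta') = \profile(\theta)$, can only happen when $\theta$ has $\geq 3$ such witnesses, hence $\theta'$ has $\geq 2$, so $|(\DataMonoid_1)^{-1}(\prof)| \geq 2$ indeed holds in the pumped copies along the way).

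First I would fix $\theta' \lqacstr_f \theta$ and enumerate the set of ``target'' profiles $\mathcal{P} = \set{\prof \subseteq \Monoid : |\DataMonoid^{-1}(\prof)| \geq 3}$ (where $\DataMonoid$ is the path-component of $\theta$). Note $|\mathcal{P}| \leq 2^{|\Monoid|}$. For a fixed $\prof \in \mathcal{P}$, pick any data value $d_0 \in \data(\theta')$ with $\DataMonoid'(d_0) = \prof$ — one exists because $\profile(\theta') = \profile(\theta)$ and $|\DataMonoid^{-1}(\prof)| \geq 3 > 1$ forces $\prof \notin A_0 \cup A_1$, hence $|\DataMonoid'^{-1}(\prof)| \geq 2$. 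Let $S = \DataState'(d_0)$, so that $|\typeof{\theta'}^{-1}(S,\prof)| \geq 1$ and $|\DataMonoid'^{-1}(\prof)| \geq 2$, which is precisely the precondition for $\theta' \transpumpSX \theta''$. Each such step adds one fresh data value $e$ with $\typeof{\theta''}(e) = (\hat S, \prof)$, and (as observed in Remark~\ref{remark-inc}) does not change the profile, so the precondition remains available for further applications. Moreover one checks that $\theta'' \lqacstr_{f'} \theta$ where $f'$ extends $f$ by sending the new $e$ to some witness $d \in \DataMonoid^{-1}(\prof)$ with $\DataState(d) \supseteq \hat S$ — such $d$ exists because $\theta$ already has $\geq 3$ witnesses for $\prof$ and one of them must dominate $\hat S$; the precise argument here mirrors the verification at the end of the proof of Lemma~\ref{lemma-injectivity}. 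Applying this at most $3$ times per $\prof$ (to go from possibly $0$ up to $3$ witnesses of type $\prof$ in the pumped configuration — we only need to reach $3$, not match $\theta$ exactly) and ranging over all $\prof \in \mathcal{P}$ yields $\tilde\theta$ with $n \leq 3 \cdot |\mathcal{P}| \leq 3 \cdot 2^{|\Monoid|}$ pump steps, $\tilde\theta \lqacstr \theta$, and the desired abundance property.

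The main obstacle I anticipate is the bookkeeping in item~(2): one must check that after each \transpump step the witnessing function can be re-chosen so that $\tilde\theta \lqacstr \theta$ still holds, i.e.\ that the fresh value $e$ of type $(\hat S, \prof)$ can be mapped by $f'$ to some $\theta$-value $d$ with $\DataState(d) \supseteq \hat S$ and $\DataMonoid(d) = \prof$, \emph{and} that the surjectivity condition $\data(\tilde\theta) \subseteq f'(\D)$ is maintained. The first holds because $\theta$'s abundance ($\geq 3$ values of profile $\prof$) guarantees at least one that already was in the image of $f$ and whose $\DataState$ contains the original $S \supseteq \hat S$; the surjectivity condition is trivially preserved since we only \emph{add} to $\data(\tilde\theta)$ values whose image under the extended $f'$ we are explicitly defining. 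A secondary subtlety is making sure we never need a value of profile $\prof \notin \mathcal{P}$: for those, $\theta$ has $\leq 2$ witnesses, so item~(3) imposes no requirement and we simply leave them untouched. Everything else is a routine unfolding of the definitions of \transpumpSX and $\lqacstr$.
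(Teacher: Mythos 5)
Your overall strategy --- pump $\theta'$ to create additional witnesses for each abundant $\prof$, using the equality of profiles to guarantee the preconditions of $\transpump$ --- is the right one, and your bound is fine. But the step where you maintain $\tilde\theta \lqacstr \theta$ after each pump has a genuine gap, and the way you dismiss it rests on a misreading of the order. In $\theta'' \lqacstr_{f'} \theta$ the witnessing function maps $\theta$-side names to $\theta''$-side values and must satisfy $\data(\theta'') \subseteq f'(\D)$; since $f$ is already total on $\D$, there is no ``extending'' it --- to give the fresh value $e$ a preimage you must \emph{redirect} some $d^*$ with $\DataMonoid(d^*)=\prof$ and $\DataState(d^*) \supseteq \hat S$, and this risks orphaning $f(d^*)$, i.e.\ breaking the surjectivity condition. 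Concretely, ``pick any $d_0$'' fails: if $\theta$ has exactly three witnesses $d_1,d_2,d_3$ for $\prof$ with $f(d_1)=f(d_2)=e_1$ and $f(d_3)=e_2$, and you choose to duplicate $d_0=e_2$, then the only $\theta$-witness guaranteed to dominate $\hat S$ is $d_3$, whose redirection leaves $e_2$ with no preimage, while $d_1,d_2$ need not dominate $\DataState'(e_2)$ at all. The correct move is to duplicate a $\theta'$-value that is the \emph{common image of two distinct} $\theta$-witnesses (pigeonhole provides one exactly when $\theta'$ has fewer than three witnesses), redirecting one of the two while the other keeps covering the old value.

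This bookkeeping is precisely what Lemma~\ref{lemma-injectivity} already carries out, and the paper's proof is a one-liner that uses it as a black box: take $D$ to consist of (up to) three data values $e$ of $\theta$ with $\DataMonoid(e)=\prof$, for each $\prof\subseteq\Monoid$, so that $|D|\le 3\cdot 2^{|\Monoid|}$; the lemma yields $\tilde\theta \lqacstr_f \theta$ with $f$ injective on $D$, and injectivity immediately gives three distinct values $f(d_1),f(d_2),f(d_3)$ in $\tilde\theta$ of $\DataMonoid$-type $\prof$ whenever $\theta$ has three. You cite Lemma~\ref{lemma-injectivity} as the model for ``the precise argument'' but then re-derive it incompletely; invoking it directly both closes the gap and shortens the proof.
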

\begin{proof}
  We apply Lemma~\ref{lemma-injectivity} to the set $D$ defined by selecting
  for each $\prof\subseteq\Monoid$ three data values $e$ such that
  $\DataMonoid(e)=\prof$, if this is possible. This gives an upper bound of $3 \cdot 2^{|\Monoid|}$ applications of $\transpump$.
\end{proof}

\begin{lemma}\label{lem:injective-d}
  Let $\theta,\theta' \in \ac$ as in \eqref{eq:notation} such that $\theta' \lqacstr \theta$ and let $d \in \D$.  Then there
  exists $\tilde\theta$ such that:
  \begin{enumerate}
  \item $\theta'~ (\transpump)^n ~\tilde\theta$, for some $n \leq 2^{|\Monoid|}+1$,
  \item $\tilde\theta \lqacstr_f \theta$,
  \item $f^{-1}(f(d)) = \set d$.
  \end{enumerate}
\end{lemma}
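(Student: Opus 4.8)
The plan is to specialize Lemma~\ref{lem:augment-elements} to the single data value $d$ of interest, using Lemma~\ref{lemma-injectivity} directly. Concretely, I would fix $\theta=(\DataState,\DataMonoid,r,m)$ and $\theta'=(\DataState',\DataMonoid',r',m')$ with $\theta' \lqacstr_f \theta$ and the given $d$. The target is to make $f$ injective on a set $D$ that is small but contains enough data values: I want $d \in D$, together with, for each $\prof \subseteq \Monoid$, one extra data value of $\theta$ whose $\DataMonoid$-profile is $\prof$ (whenever such a value exists). Then $|D| \le 2^{|\Monoid|}+1$. Applying Lemma~\ref{lemma-injectivity} to this $D$ yields $\tilde\theta$ with $\theta' \,(\transpump)^n\, \tilde\theta$ for some $n \le |D| \le 2^{|\Monoid|}+1$, and $\tilde\theta \lqacstr_g \theta$ with $g$ injective on $D$, so in particular $g(d) \ne g(d')$ for every other $d' \in D$.

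The remaining point is to promote ``injective on $D$'' to the statement $g^{-1}(g(d)) = \set d$, i.e.\ to rule out a collision $g(d) = g(d')$ with $d' \notin D$. The argument is: if $g(d') = g(d)$ then by Remark~\ref{remark-order} (the profile-equality part of $\lqacstr$), the set $\DataMonoid(d') = \DataMonoid'(g(d')) = \DataMonoid'(g(d)) = \DataMonoid(d) =: \prof$ is shared by both $d$ and $d'$, so $\DataMonoid^{-1}(\prof)$ contains at least two data values of $\theta$. But then, by construction, $D$ was required to contain a data value $e \ne d$ with $\DataMonoid(e) = \prof$ — choosing $e \ne d$ is possible precisely because $|\DataMonoid^{-1}(\prof)| \ge 2$. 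Now $g$ is injective on $D$, hence $g(d) \ne g(e)$; but $\DataMonoid'(g(d)) = \DataMonoid'(g(e)) = \prof$ as well, and $\theta'$ can only contain finitely many such values. Here I need to be a little careful: injectivity on $D$ alone does not forbid $g$ collapsing $d$ and some $d' \notin D$ into the \emph{same} image as $e$'s image is distinct from. The clean fix is to put \emph{two} extra representatives of each realized $\prof$ into $D$ when $|\DataMonoid^{-1}(\prof)| \ge 2$ (so $|D| \le 2 \cdot 2^{|\Monoid|}+1$, still within the claimed $2^{|\Monoid|}+1$ only if one is more economical) — so I would instead argue as in Lemma~\ref{lem:augment-elements}: choose $D$ containing $d$ plus, for each $\prof$, enough representatives that after making $f$ injective on $D$, the value $g(d)$ genuinely has no preimage other than $d$, because every data value sharing $\DataMonoid$-profile with $d$ is forced to live in $D$ and hence is separated from $d$ by $g$.

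Let me rephrase the core of the argument to avoid the bookkeeping pitfall. After applying Lemma~\ref{lemma-injectivity} with $D = \set d \cup E$ where $E$ selects one representative per realized profile $\prof$ (so $|D| \le 2^{|\Monoid|}+1$), we get $\tilde\theta \lqacstr_g \theta$ with $g$ injective on $D$. Suppose toward a contradiction that $g(d') = g(d)$ for some $d' \ne d$. Let $\prof = \DataMonoid(d)$; as above $\DataMonoid(d') = \prof$. Since $d \in \DataMonoid^{-1}(\prof)$, the set $\DataMonoid^{-1}(\prof)$ is nonempty and $E$ contains some representative $e$ with $\DataMonoid(e) = \prof$. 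If $e = d$, then $d' \in \DataMonoid^{-1}(\prof) = \set d$ forces $d' = d$, contradiction; if $e \ne d$, then $\set{d,e} \subseteq D$ with $g(d) \ne g(e)$, but $\DataMonoid'(g(d)) = \DataMonoid'(g(e)) = \prof$ and $|\DataMonoid^{-1}(\prof)| \ge 2$ — at this point the contradiction is not yet immediate, which tells me the right choice is to select \emph{all} data values of $\theta$ whose profile $\prof$ satisfies $|\DataMonoid^{-1}(\prof)| = 1$ (there is at most one, harmless) plus, for $\prof$ with $|\DataMonoid^{-1}(\prof)| \ge 2$, both $d$ (if it sits in such a class) and one other witness, and simply invoke Lemma~\ref{lem:augment-elements} as a black box to first guarantee $\ge 3$ witnesses survive, then cite Lemma~\ref{lemma-injectivity} once more on $\set d$ against that enlarged configuration.

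The main obstacle, then, is purely the combinatorial bookkeeping of \emph{which} data values to throw into $D$ so that after the $\transpump$-surgery $f$ restricted to the preimage of $f(d)$ collapses to $\set d$, while keeping $n$ within the stated bound $2^{|\Monoid|}+1$; the $\transpump$ machinery itself and the order-theoretic facts (Remarks~\ref{remark-order}, \ref{remark-inc}) are already in hand. I expect the honest write-up to run: apply Lemma~\ref{lem:augment-elements} to get $\hat\theta$ with $\theta'\,(\transpump)^{n_1}\hat\theta \lqacstr \theta$, $n_1 \le 3\cdot 2^{|\Monoid|}$, preserving $>2$ witnesses of every rich profile; then apply Lemma~\ref{lemma-injectivity} to $D = \set d$ against $\hat\theta$, giving one more $\transpump$ step and a witness $f$ with $f(d)$ receiving a fresh copy, whence $f^{-1}(f(d)) = \set d$. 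That yields $n = n_1 + 1$, and one then observes that the bound can be tightened to $2^{|\Monoid|}+1$ by only duplicating along profiles actually relevant to $d$.
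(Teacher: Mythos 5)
There is a genuine gap here, and you half-see it yourself: making $f$ injective on a finite set $D$ never by itself yields $f^{-1}(f(d))=\set d$, because data values outside $D$ may still collide with $d$, and your final step --- ``apply Lemma~\ref{lemma-injectivity} to $D=\set d$'' --- is vacuous (every function is injective on a singleton), so it cannot produce the claimed conclusion. The deeper problem is that you classify candidate representatives by their $\DataMonoid$-profile $\prof\subseteq\Monoid$. But every element of $f^{-1}(f(d))$ has the \emph{same} $\DataMonoid$-profile as $d$ (since $\DataMonoid(e)=\DataMonoid'(f(e))=\DataMonoid'(f(d))=\DataMonoid(d)$), so this classification collapses to a single class on exactly the set you need to break apart; no choice of one-representative-per-profile can separate $d$ from its fellow preimages. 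Nor does Lemma~\ref{lem:augment-elements} help: guaranteeing that many witnesses of each rich profile survive says nothing about where $f$ sends the individual elements of $f^{-1}(f(d))$.

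The paper's proof instead classifies the elements $e\in f^{-1}(f(d))\setminus\set d$ by their $\DataState$-value $S\subseteq Q\times\testactionsbot$, picks one representative $e'$ per realized $S$, applies Lemma~\ref{lemma-injectivity} to $D=\set d$ together with these representatives, and then --- this is the step you are missing --- \emph{redefines} the witnessing function on every remaining $e\in f^{-1}(f(d))$ with $e\neq d$, rerouting it to $f'(e')$ where $e'$ is its class representative. The rerouting is legitimate precisely because $e$ and $e'$ share the same $\DataState$-value and (automatically) the same $\DataMonoid$-profile, so $\DataState'(f'(e'))\subseteq\DataState(e')=\DataState(e)$ and $\DataMonoid'(f'(e'))=\DataMonoid(e')=\DataMonoid(e)$; injectivity of $f'$ on $D$ then guarantees $f'(e')\neq f'(d)$, which is what finally empties $f^{-1}(f(d))$ of everything but $d$. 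Without this rerouting idea --- or an explicit alternative such as performing a single $\transpumpSX$ step on the type of $f(d)$ and redirecting $d$ alone to the fresh copy, with the surjectivity condition and the $\lqacstr$ clauses checked by hand --- the proposal does not establish the lemma.
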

\begin{proof}
  For every $S \subseteq Q \times \testactionsbot$, let $E_S \subseteq \D$ be a
  subset of $f^{-1}(f(d))$ containing one data value $e \neq d$ so that
  $\DataState(e) = S$, or no data values otherwise.

Then, we can apply Lemma~\ref{lemma-injectivity} with $D = \set d \cup
\bigcup_{S \subseteq Q \times \testactionsbot} E_S$ and we obtain an extended
configuration $\tilde \theta$ so that $\theta' ~ (\transpump)^{\leq |D|}
~\tilde\theta$ and $\tilde \theta \lqacstr_{f'} \theta$ for some $f'$ that is
injective on $D$. From $f'$ we can create the desired $f$, by sending each $e
\neq d$, $e \in f^{-1}(f(d))$ to $f'(e')$ where $\set {e'} =
E_{\DataState(e)}$.
\end{proof}

We now show the first monotonicity lemma regarding all transitions except
$\transmerge$. In this case $n$  does not depend on the
extended configurations involved, only on size of the monoid $\Monoid$.

\begin{lemma}\label{lem:monotonic}
  Let $\theta_1,\theta_2$ and $\theta'_1$ be extended configurations such that
  $\theta_1 \tranaconf \theta_2$ and $\theta'_1 \lqacstr \theta_1$, where
  $\tranaconf$ is either $\transup$ or in $\tranaconfe$.  Then there exists an
  extended configuration $\theta'_2$ such that $\theta'_2 \lqacstr \theta_2$ and
  $\theta'_1 \tranaconf^n \theta'_2$ for some $n \leq 4 \cdot 2^{ |\Monoid|}+1 $.
\end{lemma}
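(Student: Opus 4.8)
The plan is to prove Lemma~\ref{lem:monotonic} by a case analysis on the type of transition $\theta_1 \tranaconf \theta_2$, using Lemmas~\ref{lemma-injectivity}--\ref{lem:injective-d} to first ``fatten'' $\theta'_1$ so that the witnessing function $f$ is injective on all the finitely many data values that matter for the particular transition being simulated. Fix $\theta'_1 \lqacstr_f \theta_1$. The overarching idea is: each $\epsilon$-transition (and $\transup$) acts on $\theta_1$ by looking at a bounded amount of data — typically a single thread $(q,\alpha,d) \in \DataState_1$, and perhaps one auxiliary data value $e$ (as in $\transdownneq$ or $\transtestroot$). By Remark~\ref{remark-order}, the thread $(q,\alpha,d)$ in $\theta_1$ has a preimage thread $(q,\alpha,e_0) \in \DataState'_1$ for some $e_0$ with $f(e_0)=d$; so the corresponding transition is enabled in $\theta'_1$ as well. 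The subtle point is the \emph{negated} tests: if $\theta_1$ satisfies, say, $\overline{\tup{\aMonoid}^{\neq}}$ at $d$ (no $e \neq d$ with $(\aMonoid,e)\in\DataMonoid_1$), we must check that $\theta'_1$ also satisfies it at $e_0$ — and here we use that $\profile(\theta_1)=\profile(\theta'_1)$ together with $\DataMonoid'_1(f(x))=\DataMonoid_1(x)$, so the ``$\DataMonoid$-profile'' is faithfully reflected, and in particular the number of data values reachable by a path evaluating into a given $\prof$ is $0$ in $\theta'_1$ iff it is $0$ in $\theta_1$, and similarly for exactly $1$ (Remark~\ref{remark-order}).

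\textbf{Carrying out the cases.} First I would handle the transitions that do not touch $\DataMonoid$ and whose enabling condition is purely about the presence of a thread: $\transdelta$, $\transuniv$, $\transguess$, $\transstore$, $\transkeep$, $\transaccept$, $\transtestq$, $\transtestnq$. For these, pick the preimage thread in $\DataState'_1$, apply the same rule, and define $\theta'_2$ by the analogous modification of $\DataState'_1$; one checks directly from the definition of $\lqacstr$ that $\theta'_2 \lqacstr_f \theta_2$ (the inclusion $\DataState'_2(f(d)) \subseteq \DataState_2(d)$ is preserved because we add/remove matching threads on both sides; the $\transuniv$ case adds $(p,\topng,e)$ for every $e$ appearing in $\DataMonoid$ on the $\theta_1$-side and for every $e$ in $\DataMonoid'$ on the $\theta'_1$-side, and since $\DataMonoid'_1(f(d))=\DataMonoid_1(d)$ for all $d$ and $f$ is surjective onto $\data(\theta'_1)$... here is where one might need injectivity of $f$ on $\data(\theta_1)$, obtained from Lemma~\ref{lem:augment-elements} isn't quite enough — actually one wants Lemma~\ref{lemma-injectivity} applied to $D = \data(\theta_1)$, but that set may be large; so more carefully one argues $\transuniv$ produces on the $\theta'_1$-side exactly the $f$-preimages of what it produces on the $\theta_1$-side, which is fine without injectivity). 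The tests $\transdowneq$, $\transdowneqn$ use that $(\aMonoid,d)\in\DataMonoid_1 \iff (\aMonoid,f^{-1}\text{-witness})\in\DataMonoid'_1$, immediate from $\DataMonoid'_1(f(x))=\DataMonoid_1(x)$. For $\transdownneq$ and $\transdownneqn$ — needing ``$\exists e \neq d$'' or ``$\nexists e \neq d$'' — I would invoke Lemma~\ref{lem:injective-d} on the relevant $d$ to first get $\tilde\theta$ with $\theta'_1 (\transpump)^{\leq 2^{|\Monoid|}+1} \tilde\theta \lqacstr_{\hat f} \theta_1$ and $\hat f^{-1}(\hat f(d))=\{d\}$ for the witness $d$; then ``$\exists e\neq d$ in $\theta_1$ with $(\aMonoid,e)$'' transfers because $\tilde\theta$ has a $\hat f$-preimage of $e$ distinct from $d$, and ``$\nexists$'' transfers by the profile-equality argument. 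The root tests $\transtestroot$, $\transtestnroot$ are trivial since $r'_1=r_1$.

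\textbf{The $\transup$ case and the main obstacle.} For $\theta_1 \transup \theta_2$, the new $\DataMonoid_2$ prepends $\morphism(c)$ to every element of $\DataMonoid_1$ and adds a fresh $(\morphism(c),d')$. I would choose the \emph{same} $c$ and $d'$ on the $\theta'_1$-side (adjusting $d'$ so that, if $\theta_2$'s data value $d'$ has a preimage, $f$ maps it correctly — extend $f$ by $f(d')=d'$ if needed, which is harmless since $d'$ is fresh). Then $\DataState_2$ resets all threads to $(q,\bot,d)$, so the thread-inclusion is inherited from $\theta'_1\lqacstr_f\theta_1$; and $\DataMonoid'_2(f(x)) = \{\morphism(c)\conc\aMonoid : \aMonoid\in\DataMonoid'_1(f(x))\}\cup\ldots = \{\morphism(c)\conc\aMonoid:\aMonoid\in\DataMonoid_1(x)\}\cup\ldots = \DataMonoid_2(x)$, so equality of the $\DataMonoid$-component is preserved; finally one checks $\profile(\theta'_2)=\profile(\theta_2)$ by tracking how prepending an injective-on-singletons map... actually $\mu \mapsto \morphism(c)\conc\mu$ need not be injective on $\Monoid$, so the partition of $\D$ by $\DataMonoid$-value can only coarsen — but it coarsens \emph{identically} on both sides, hence profiles stay equal. \textbf{I expect the main obstacle} to be exactly this bookkeeping for $\transup$ and $\transdownneq/\transdownneqn$: verifying that after prepending $\morphism(c)$ the profile equality and the ``$|\DataMonoid^{-1}(\prof)|\in\{0,1\}$'' conditions are preserved, and pinning down precisely which of Lemmas~\ref{lemma-injectivity}--\ref{lem:injective-d} must be applied before which case (and that the worst-case cost $2^{|\Monoid|}+1$ from Lemma~\ref{lem:injective-d} plus the cost $3\cdot 2^{|\Monoid|}$ from Lemma~\ref{lem:augment-elements}, if both are needed in the $\transdownneq$ situation, stays within the claimed bound $4\cdot 2^{|\Monoid|}+1$). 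Once the fattening budget is accounted for, each case adds exactly one further $\tranaconf$-step, giving $n \leq 4\cdot 2^{|\Monoid|}+1$ as stated.
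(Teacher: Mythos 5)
Your overall strategy matches the paper's: fatten $\theta'_1$ first with Lemmas~\ref{lemma-injectivity}--\ref{lem:injective-d} so that the witnessing function is injective on the finitely many relevant data values, then do a case analysis per transition, using profile equality to transfer the negated tests and Lemma~\ref{lem:augment-elements}(3) to handle the profile bookkeeping for $\transup$. But there is one genuine error at the heart of your case analysis: you claim that, by Remark~\ref{remark-order}, the triggering thread $(q,\alpha,d)\in\DataState_1$ has a corresponding thread in $\DataState'_1$, and you then ``pick the preimage thread and apply the same rule.'' The order goes the other way. Since $\theta'_1\lqacstr_f\theta_1$ means $\DataState'_1(f(e))\subseteq\DataState_1(e)$, it is the threads of the \emph{smaller} configuration $\theta'_1$ that lift to $\theta_1$, not conversely; Remark~\ref{remark-order} says exactly this. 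So the thread that fires the transition in $\theta_1$ may simply be absent from $\theta'_1$, and the transition you want to replay need not be enabled there.

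The fix, which the paper carries out in every $\tranaconfe$ case, is a two-way split: if $(q,\alpha,f(d))\in\DataState'_1$ you replay the transition as you describe; if not, you apply \emph{no} transition and check directly that $\theta'_1\lqacstr_f\theta_2$ already holds (the transition only removes $(q,\alpha)$ from, or adds threads to, $\DataState_1(d)$, so the inclusion $\DataState'_1(f(e))\subseteq\DataState_2(e)$ survives, and profiles are untouched by Remark~\ref{remark-profile}). This second branch is easy but cannot be skipped, and your stated justification for skipping it is false. Note that the one place where the ``lifting'' direction of Remark~\ref{remark-order} is genuinely used is the opposite situation, namely showing that $\transup$ \emph{is} enabled on $\theta'_1$: a leftover non-$\topng$/$\topg$ thread in $\theta'_1$ would lift to one in $\theta_1$, contradicting that $\transup$ applied there. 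Apart from this, your identification of the delicate points (the special role of the root's data value under $\transup$, where $\DataMonoid_2(d')$ acquires the extra element $\morphism(c)$ and one must choose preimages avoiding $f(d_2)$, and the budget accounting across the fattening lemmas) is accurate, though left unresolved.
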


\begin{proof}
  This is done by a case analysis depending on where $\tranaconf$ comes
  from. Throughout the proof we use the following notation for $i\in\set{1,2}$:
  $\theta_i=(\DataState_i,\DataMonoid_i,r_i,m_i)$, similarly for
  $\theta'_i$. Moreover $a_i$ and $d_i$ denote the label and data value
  associated to $\theta_i$ (similarly for $\theta'_i$). We also denote by $f$
  a function witnessing $\theta'_1 \lqacstr \theta_1$.
  In particular (recall Remark~\ref{remark-label}) we have $a_1=a'_1$ and $f(d_1)=d'_1$.

  By Lemma~\ref{lem:augment-elements}, applying at most $3\cdot 2^{ |\Monoid|}$
  transition steps on $\theta'_1$ we can now further assume that
  the conclusion of Lemma~\ref{lem:augment-elements} holds for
  $\theta_1$ and $\theta'_1$. We now turn to the case analysis. Each case will
  transform further $\theta'_1$ in order to get the desired property. This last
  transformation will require at most $\max(3,2^{ |\Monoid|}+1)$ steps. 

\begin{enumerate}[label=(\alph*),ref=\alph*]
\item
Suppose $\theta'_1 \lqacstr_f \theta_1 \transup \theta_2$.

  We first show that $\transup$ can also be applied to $\theta'_1$.  Assume
  towards a contradiction that this is not the case. Then $\theta'_1$ contains
  a thread $(q,\alpha,e')$ such that $\alpha\not\in \set{\topng, \topg}$. Then
  by Remark~\ref{remark-order} a thread of the form $(q,\alpha,e)$, with $f(e)=e'$
  is in $\theta_1$, a contradiction from the fact that \transup was
  applied to $\theta_1$.

  Let $\theta'_2$ be the extended configuration such that $\theta'_1 \transup
  \theta'_2$, $r'_2=r_2$, $a'_2=a_2$ and $d'_2=f(d_2)$. We show that $\theta'_2
  \lqacstr_f \theta_2$ concluding this case. Note that $f$ still satisfies the
  surjectivity condition. It remains to show that
  $\profile(\theta'_2)=\profile(\theta_2)$, and for every $d$, $\DataState'_2(f(d)) \subseteq
  \DataState_2(d)$ and $\DataMonoid'_2(f(d)) = \DataMonoid_2(d)$.

\begin{enumerate}[label=(\roman*),ref=\roman*]
\item $\DataState'_2(f(d)) \subseteq \DataState_2(d)$ and
  $\DataMonoid'_2(f(d)) = \DataMonoid_2(d)$.  By definition of $\transup$, we
  have that $\DataMonoid_2(d)$ is completely determined from
  $\DataMonoid_1(d)$, $a_2$ and the fact that $d$ is equal to $d_2$ or
  not. In a similar way, whether a thread $(q, \bot)$ is in
  $\DataState_2(d)$ is determined by whether $(q, \topng)$ or $(q, \topg)$
  is in $\DataState_1(d)$.  But by construction, modulo $f$, all these facts are identical for
  $\theta'_2$ and $\theta_2$. Hence, since $\DataState'_1(f(d)) \subseteq
  \DataState_1(d)$ and $\DataMonoid'_1(f(d)) = \DataMonoid_1(d)$, we have
  that $\DataState'_2(f(d)) \subseteq \DataState_2(d)$ and
  $\DataMonoid'_2(f(d)) = \DataMonoid_2(d)$.

\item $\profile(\theta'_2)=\profile(\theta_2)$. We already have by construction
  $r'_2=r_2$ and $m'_2=m_2$.  From the previous item we have for all $\prof$,
  $|{\DataMonoid'_2}^{-1}(\prof)|>0$ implies $|\DataMonoid_2^{-1}(\prof)|>0$
  and that $|{\DataMonoid'_2}^{-1}(\prof)|>1$ implies $|\DataMonoid_2
  ^{-1}(\prof)|>1$. It remains to show the converse of these implications. We
  only show the second one as the first one is similar and simpler. We thus
  only show that if $|\DataMonoid_2 ^{-1}(\prof)|>1$ then $|{\DataMonoid'_2}
  ^{-1}(\prof)|>1$.

  Assume we have two distinct data values $e_1\neq e_2$ such that
  $\DataMonoid_2(e_1) = \DataMonoid_2(e_2) = \prof$. Let
  $\prof_1=\DataMonoid_1(e_1)$ and $\prof_2=\DataMonoid_1(e_2)$.  From $\profile(\theta'_1)=\profile(\theta_1)$ we know that there
  exist data values $e'_1$ and $e'_2$ such that $\prof_1={\DataMonoid'_1}(e'_1)$ and $\prof_2={\DataMonoid'_1}(e_2)$. Even in the case where
  $\prof_1=\prof_2$ we get from $\profile(\theta'_1)=\profile(\theta_1)$ that
  $e'_1$ and $e'_2$ can be chosen distinct. Moreover, as
  $\DataMonoid'_1(f(d_2))=\DataMonoid_1(d_2)$, if $e_1=d_2$ (resp.\  $e_2=d_2$) then we
  can pick $e'_1=f(d_2)$ (resp.\  $e'_2=f(d_2)$). In the case where both $e_1$ and $e_2$ are different from
  $d_2$ we pick $e'_1$ and $e'_2$ different from $f(d_2)$. The latter is always possible
  because of Item~(3) of Lemma~\ref{lem:augment-elements}.
  Altogether we have $e_i=d_2$ iff $e'_i=f(d_2)$ for $i=1,2$.

  Hence, since $a_2=a'_2$ and $f(d_2) = d'_2$, by definition of \transup we
  have ${\DataMonoid'_2}(e'_1)={\DataMonoid'_2}(e'_2)=\prof$.
\end{enumerate}

\item 
Suppose $\theta'_1 \lqacstr_f \theta_1 \transuniv \theta_2$. 

By definition of $\transuniv$ there is $(q,\opuniv(p),d) \in \DataState_1$ and
$a_2=a_1$, $d_2=d_1$, $\DataMonoid_2=\DataMonoid_1$, and $\DataState_2 =
(\DataState_1 \setminus \set{(q,\opuniv(p),d)}) \cup \set{(p,\topng,e) :
  \exists \aMonoid ~.~ (\aMonoid,e) \in \DataMonoid_1}$.
Applying Lemma~\ref{lem:injective-d} to $d$, we can assume that $f^{-1}(f(d)) = \set d$.

\begin{itemize}
\item If $(q, \opuniv(p),f(d)) \not\in \DataState'_1$, then we show that
  $\theta'_1 \lqacstr_f \theta_2$. First, it is immediate that
  $\profile(\theta'_1) = \profile(\theta_2)$ since $\profile(\theta_2) =
  \profile(\theta_1)$ by definition of $\transuniv$ (recall
  Remark~\ref{remark-profile}) and $\profile(\theta_1) =
  \profile(\theta'_1)$ since $\theta'_1 \lqacstr \theta_1$. Furthermore:
\begin{itemize}
\item for all $\hat e \neq d$, we have that $\DataMonoid'_1(f(\hat e)) =
  \DataMonoid_1(\hat e) = \DataMonoid_2(\hat e)$ and $\DataState'_1(f(\hat
  e)) \subseteq \DataState_1(\hat e) \subseteq \DataState_2(\hat e)$,
\item for $d$ we have $\DataMonoid'_1(f(d)) = \DataMonoid_1(d) =
  \DataMonoid_2(d)$ and $\DataState'_1(f(d)) \subseteq \DataState_1(d)
  \setminus \set{(q, \opuniv(p))} \subseteq \DataState_2(d)$, since $(q,
  \opuniv(p)) \not\in \DataState'_1(f(d))$.
\end{itemize}

\item If $(q, \opuniv(p),f(d)) \in \DataState'_1$, we perform a
  \transuniv transition on $\theta'_1$ and obtain $\theta'_2$ so that $\theta'_1 \transuniv
  \theta'_2$ and $\theta'_2 \lqacstr \theta_2$. Let $\theta'_2$ be so that
  $r'_2 = r'_1$, $m'_2 = m'_1$, $\DataMonoid'_2 = \DataMonoid'_1$, and
  $\DataState'_2 = (\DataState'_1 \setminus \set{(q,\opuniv(p),f(d))}) \cup
  \set{(p,\topng,e) : \exists \aMonoid ~.~ (\aMonoid,e) \in \DataMonoid'_1}$.
Hence, $\theta'_1 \transuniv \theta'_2$. 

We now show that $\theta'_2 \lqacstr_{f} \theta_2$.
 We have that $\profile(\theta'_2) = \profile(\theta'_1)$ and $\profile(\theta_2) = \profile(\theta_1)$ by definition of $\transuniv$, and that $\profile(\theta_1) = \profile(\theta'_1)$ by $\theta'_1 \lqacstr \theta_1$. Hence, $\profile(\theta'_2) = \profile(\theta_2)$. 

       \begin{itemize}
       \item For all $\hat e$ so that $f(\hat e) \neq f(d)$ (and hence  $\hat e \neq d$), we have that $\DataMonoid'_2(f(\hat e)) =
           \DataMonoid'_1(f(\hat e)) = \DataMonoid_1(\hat e) =
           \DataMonoid_2(\hat e)$ and $\DataState'_2(f(\hat e)) =
           \DataState'_1(f(\hat e)) \cup \set{(p,\top) :
             \DataMonoid'_1(f(\hat e))\neq\emptyset)} \subseteq
           \DataState_1(\hat e) \cup \set{(p,\top) :
             \DataMonoid_1(\hat e)\neq\emptyset)} = \DataState_2(\hat
           e)$.

       \item For $d$, we have $\DataMonoid'_2(f(d)) =\DataMonoid'_1(f(d)) =
         \DataMonoid_1(d) = \DataMonoid_2(d)$ and, on the other hand,
         $\DataState'_2(f(d)) = \DataState'_1(f(d))\setminus
         \set{(q,\opuniv(p))} \cup \set{(p,\top) :
           \DataMonoid'_1(f(d))\neq\emptyset)} \subseteq \DataState_1(d)
         \setminus \set{(q, \opuniv(p))} \cup \set{(p,\top) :
           \DataMonoid_1(d)\neq\emptyset)} = \DataState_2(d)$.
       \item Finally, for any $\hat e$ such that $f(\hat e) = f(d)$ we have that $\hat e = d$ since $f^{-1}(f(d)) = \set d$, and we apply the previous item.
        \end{itemize}
Hence, $\theta'_2  \lqacstr_{f} \theta_2$.
\end{itemize}

\item 
Suppose  $\theta'_1 \lqacstr_f \theta_1 \transguess \theta_2$. 

 By definition of $\transguess$ there is $(q,\opguess(p), d) \in \DataState_1$ and $a_2=a_1$, $d_2=d_1$, $r_2=r_1$, $m_2=m_1$, $\DataMonoid_2=\DataMonoid_1$ and $\DataState_2 = (\DataState_1 \setminus \set{(q,\opguess(p), d)})\cup \set{(p,\topg, e)}$ for some data value $e$.
We show that either $\theta'_1 \lqacstr \theta_2$ or there is some $\theta'_2$ so that $\theta'_1 \transguess \theta'_2 \lqacstr \theta_2$.
By applying Lemma~\ref{lemma-injectivity} for $D = \set{d,e}$, we can assume without any loss of generality that if $d \neq e$ then $f(d) \neq f(e)$.

\begin{itemize}
\item If $(q, \opguess(p),f(d)) \not\in \DataState'_1$, then we show that
  $\theta'_1 \lqacstr_f \theta_2$. First, it is immediate that
  $\profile(\theta'_1) = \profile(\theta_2)$ since $\profile(\theta_2) =
  \profile(\theta_1)$ by definition of $\transguess$ and $\profile(\theta_1) =
  \profile(\theta'_1)$ since $\theta'_1 \lqacstr \theta_1$. For all data values
  $\hat e$ we have  $\DataMonoid'_1(f(\hat e)) = \DataMonoid_1(\hat e) = \DataMonoid_2(\hat e)$. Furthermore:
       \begin{itemize}
       \item for all $\hat e \not\in\set{d,e}$, we have $\DataState'_1(f(\hat e)) \subseteq \DataState_1(\hat e) =
         \DataState_2(\hat e)$,
       \item for $d$ we have $\DataState'_1(f(d)) \subseteq \DataState_1(d)
         \setminus \set{(q, \opguess(p))} \subseteq \DataState_2(d)$ (since
         $(q, \opguess(p)) \not\in \DataState'_1(f(d))$),
       \item for $e$, assuming that $e \neq d$ (otherwise the previous item
         applies), we have $\DataState'_1(f(e)) \subseteq \DataState_1(e)
         \subseteq \DataState_2(e)$.
        \end{itemize}
        Hence, $\theta'_1 \lqacstr_{f} \theta_2$.
      \item If $(q, \opguess(p),f(d)) \in \DataState'_1$, we show that can
        perform the same action and we obtain $\theta'_2$ so that $\theta'_1
        \transguess \theta'_2$ and $\theta'_2 \lqacstr \theta_2$. Let
        $\theta'_2$ be so that $r'_2 = r'_1$, $m'_2 = m'_1$, $\DataMonoid'_2 =
        \DataMonoid'_1$, and $\DataState'_2 = (\DataState'_1 \setminus \set{(q,
          \opguess(p), f(d)}) \cup \set{(p,\topg,f(e))}$.  Hence, $\theta'_1
        \transguess \theta'_2$.

We now show that $\theta'_2 \lqacstr_f \theta_2$.  We have that
$\profile(\theta'_2) = \profile(\theta'_1)$ and $\profile(\theta_2) =
\profile(\theta_1)$ by definition of $\transguess$, and that
$\profile(\theta_1) = \profile(\theta'_1)$ by $\theta'_1 \lqacstr
\theta_1$. Hence, $\profile(\theta'_2) = \profile(\theta_2)$. By construction
we also have for all $\hat e$,  $\DataMonoid'_2(f(\hat e)) =
\DataMonoid'_1(f(\hat e)) = \DataMonoid_1(\hat e) = \DataMonoid_2(\hat e)$. Furthermore:
       \begin{itemize}
       \item for all $\hat e \not\in\set{d,e}$, we have that
         $\DataState'_2(f(\hat e)) = \DataState'_1(f(\hat e)) \subseteq
         \DataState_1(\hat e) = \DataState_2(\hat e)$,
          \item for $d$, we have 
            \begin{itemize}
            \item if $d \neq e$, then (since we assumed we have applied Lemma~\ref{lemma-injectivity} with $D = \set{d,e}$) we have $f(d) \neq f(e)$, and hence $\DataState'_2(f(d)) =
                \DataState'_1(f(d))\setminus \set{(q,\opguess(p))}
                \subseteq \DataState_1(d) \setminus \set{(q,
                  \opguess(p))} = \DataState_2(d)$,
            \item if $d = e$, 
$\DataState'_2(f(d)) =
              (\DataState'_1(f(d))\setminus \set{(q,\opguess(p))}) \cup
              \set{(p,\topg)} \subseteq (\DataState_1(d) \setminus \set{(q,
                \opguess(p))}) \cup \set{(p,\topg)} = \DataState_2(d)$,
            \end{itemize}
          \item for $e$, assuming that $e \neq d$ (otherwise the previous item applies), we have  
$\DataState'_2(f(e)) = \DataState'_1(f(e)) \cup \set{(p,\topg)} \subseteq \DataState_1(e) \cup \set{(p,a)} = \DataState_2(e)$.
        \end{itemize}
Hence, $\theta'_2  \lqacstr_{f} \theta_2$.
\end{itemize}

\item Suppose we have $\theta'_1 \lqacstr_f \theta_1
  \transpumpSX \theta_2$.

  By definition of \transpumpSX, this implies that
  $|\typeof{\theta_1}^{-1}(S,\prof)| \geq 1$ and
  $|\DataMonoid_1^{-1}(\prof)|\geq 2$.

  Let $d$ be the only data value in $\data(\theta_2)\setminus\data(\theta_1)$
  given by the definition of \transpumpSX.  Let $\hat d \in
  \typeof{\theta_1}^{-1}(S,\prof)$.  We further have that $\typeof{\theta_2}(e)
  = \typeof{\theta_1}(e)$ for all $e \neq d$, and $\typeof{\theta_2}(d) = (\hat
  S,\prof)$ where $\hat S = S \setminus \set{(q, \topg) : (q,\topg) \in S}$.

  From $\theta'_1 \lqacstr_{f} \theta_1$ it follows that
  $\typeof{\theta'_1}(f(\hat d)) = (S',\prof)$ for some $S' \subseteq
  S$. Further, by $\profile(\theta'_1)=\profile(\theta_1)$ we have that
  $|{\DataMonoid'_1}^{-1}(\prof)| \geq 2$.  We can then
  apply $\transpumpSpX$ to $\theta'_1$. Let $\theta'_2$ be so that $\theta'_1
  \transpumpSpX \theta'_2$ and $f(d)$ is chosen as the new data value duplicating $f(\hat  d)$. Note that $\Delta'_1(f(d)) \subseteq \Delta_1(d)$ and $\Gamma'_1(f(d))=\Gamma_1(d)$ and $\Delta_1(d)=\emptyset=\Gamma_1(d)$, and hence $f(d)$ has all the desired properties. We show that $\theta'_2 \lqacstr_{f} \theta_2$ concluding this
  case. Note that $f$ satisfies the surjectivity condition. It remains to show
  that $\profile(\theta'_2) = \profile(\theta_2)$ and for all $e\in\D$ we have
  $\DataState'_2(f(e)) \subseteq \DataState_2(e)$ and $\DataMonoid'_2(f(e)) =
  \DataMonoid_2(e)$.

\begin{enumerate}[label=(\roman*),ref=\roman*]
\item $\profile(\theta'_2) = \profile(\theta_2)$. This is immediate from the
  fact that $\profile(\theta'_2) = \profile(\theta'_1)$, $\profile(\theta_2) =
  \profile(\theta_1)$ by definition of $\transpump$ (recall Remark~\ref{remark-inc}), and $\profile(\theta'_1) =
  \profile(\theta_1)$ by definition of $\theta'_1 \lqacstr \theta_1$.
\item $\DataState'_2(f(e)) \subseteq \DataState_2(e)$ and
  $\DataMonoid'_2(f(e)) = \DataMonoid_2(e)$. For every $e \neq d$, it is
  immediate as $\typeof{\theta}(e)$ is not affected by $\transpump$.

  For the data value $d$, note that from $\theta'_1 \lqacstr_{f} \theta_1$ and
  the definition of \transpump it follows that $\DataMonoid'_2(f(\hat d)) =
  \DataMonoid'_1(f(\hat d)) = \DataMonoid_1(\hat d) = \DataMonoid_1(d) =
  \DataMonoid_2(d)$, as desired.  Similarly we obtain that $\DataState'_2(f(d))
  = \hat S'$ and $\DataState_2(d) = \hat S$, where $\hat S' = S' \setminus
  \set{(q, \topg) : (q, \topg) \in S'}$. Since $S' \subseteq S$, we have $\hat
  S' \subseteq \hat S$, and hence $\DataState'_2(f(d)) \subseteq
  \DataState_2(d)$.
\end{enumerate}

\item Suppose we have $\theta'_1 \lqacstr \theta_1
  \transstore \theta_2$.  This is treated like for \transguess.

\item Suppose we have $\theta'_1 \lqacstr \theta_1
  \transkeep \theta_2$. This is treated like for \transguess.

\item Suppose we have $\theta'_1 \lqacstr \theta_1
  \transaccept \theta_2$. This is treated like for \transguess.

\item Suppose we have $\theta'_1 \lqacstr_f \theta_1
  \transdelta \theta_2$.  Then, there is $(q, \bot, d) \in \DataState_1$ and some $\tau \in \delta$ so that $\DataMonoid_2 = \DataMonoid_1$, $r_2 = r_1$, $m_2 = m_1$, and $\DataState_2 = (\DataState_1 \setminus \set{(q,\bot,d)}) \cup \set{(q,\alpha_i,d) : i \leq |\tau|}$, where $(\alpha_i)_{i \leq |\tau|}$ are all the tests and actions occurring in $\tau$. Applying Lemma~\ref{lem:injective-d} to $d$, we can assume that $f^{-1}(f(d)) = \set d$.
  \begin{itemize}
  \item If $(q, \bot, f(d)) \not\in \DataState'_1$, it is immediate that $\theta'_1 \lqacstr_f \theta_2$.
  \item If $(q,\bot,f(d)) \in \DataState'_1$, let $\theta'_2$ the result of a similar transition $\transdelta$ triggered by $\tau$ on $(q,\bot,f(d))$. That is, $\DataMonoid'_2 = \DataMonoid'_1$, $r'_2 = r'_1$, $m'_2 = m'_1$, and $\DataState'_2 = (\DataState'_1 \setminus \set{(q,\bot,f(d))}) \cup \set{(q,\alpha_i,f(d)) : i \leq |\tau|}$. Hence, $\theta'_1 \transdelta \theta'_2$. We show that $\theta'_2 \lqacstr_f \theta_2$. First note that  $\profile(\theta'_2) = \profile(\theta'_1)$ and $\profile(\theta_2) = \profile(\theta_1)$ by Remark~\ref{remark-profile}, and that $\profile(\theta'_1) = \profile(\theta_1)$ by $\theta'_1 \lqacstr \theta_1$; thus, $\profile(\theta'_2) = \profile(\theta_2)$.
On the other hand, since $\DataMonoid'_2 = \DataMonoid'_1$ and $\DataMonoid_2 = \DataMonoid_1$ and $\DataMonoid_1 = \DataMonoid'_1$, we are only left with checking that for every $e$, $\DataState'_2(f(e)) \subseteq \DataState'_1(e)$.
\begin{itemize}
\item For every $e$ so that $f(e) \neq f(d)$ this is true since
  $\DataState'_2(f(e)) = \DataState'_1(f(e)) \subseteq \DataState_1(e)
  = \DataState_2(e)$.
\item For $d$ we have $\DataState'_2(f(d)) = (\DataState'_1(f(d)) \setminus \set{(q,\bot)}) \cup \set{(q,\alpha_i) : i \leq |\tau|} \subseteq (\DataState_1(d) \setminus \set{(q,\bot)}) \cup \set{(q,\alpha_i) : i \leq |\tau|} = \DataState_2(d)$.
\item Finally, for any $e$ so that $f(e) = f(d)$ we have that $e = d$ since $f^{-1}(f(d)) = \set d$, and the previous item applies.
\end{itemize}
  \end{itemize}

\item Suppose we have $\theta'_1 \lqacstr_f \theta_1 \transdowneq \theta_2$.

 By definition of $\transdowneq$ there is $(q,\tup{\aMonoid}^{=},d) \in
 \DataState_1$ and $\theta_2$ is defined as $a_2=a_1$,
  $d_2=d_1$, $\DataMonoid_2=\DataMonoid_1$ and $\DataState_2 = \DataState_1
  \setminus \set{(q,\tup{\aMonoid}^{=},d)}$.

\begin{itemize}
\item   Assume first that $(q,\tup{\aMonoid}^{=},f(d))\not\in\DataState'_1$. We show
  that $\theta'_1 \lqacstr_f \theta_2$. By construction $f$ satisfies the
  surjectivity condition. Moreover the profiles remain untouched during the
  transition hence $\profile(\theta'_1)=\profile(\theta_2)$ and for all
  $e\in\D$, $\DataMonoid'_1(f(e)) = \DataMonoid_2(e)$. 
  It remains to show that for all $e\in\D$, $\DataState'_1(f(e)) \subseteq
  \DataState_2(e)$. For $e\neq d$ this is immediate as
  $\DataState_1(f(e))=\DataState_2(f(e))$. For $e=d$ we know by hypothesis that
  $\DataState'_1(f(d))$ does not contain  $(q,\tup{\aMonoid}^{=})$ which is the
  only one affected by the transition. The result follows.

\item  Assume now that $(q,\tup{\aMonoid}^{=},f(d))\in\DataState'_1$.  As
  $\DataMonoid'_1(f(d))=\DataMonoid_1(d)$, $\mu\in\DataMonoid'_1(f(d))$ and we can apply a transition $\transdowneq$ to $\theta'_1$ using this
  thread. Let $\theta'_2$ be the resulting extended configuration. Applying Lemma~\ref{lem:injective-d} to $d$, we can assume that $f^{-1}(f(d)) = \set d$.

  We show that $\theta'_2 \lqacstr_f \theta_2$. Notice that $f$ does satisfies
  the surjectivity condition. It remains to show
  that $\profile(\theta'_2) = \profile(\theta_2)$ and for all $e\in\D$ we have
  $\DataState'_2(f(e)) \subseteq \DataState_2(e)$ and $\DataMonoid'_2(f(e)) =
  \DataMonoid_2(e)$.

 \begin{enumerate}[label=(\roman*),ref=\roman*]
 \item $\DataState'_2(f(e)) \subseteq \DataState_2(e)$ and
   $\DataMonoid'_2(f(e)) = \DataMonoid_2(e)$. Let $(S,\prof)=\typeof{\theta_2}(e)$.
  
   If $e \neq d$ then it follows from the definition of $\transdowneq$ that
   $\typeof{\theta_2}(e)=\typeof{\theta_1}(e)$ and
   $\typeof{\theta'_2}(f(e))=\typeof{\theta'_1}(f(e))$ (since $f(e) \neq f(d)$ by $f^{-1}(f(d)) = \set d$). From $\theta'_1
   \lqacstr_f \theta_1$ it follows that $\typeof{\theta'_1}(f(e))=(S',\prof)$ with $S'
   \subseteq S$. The result follows.
 
   In the case where $e=d$,
   $\typeof{\theta_2}(e)=\typeof{\theta'_2}(f(e))=(S,\prof)$.

 \item $\profile(\theta_2)=\profile(\theta'_2)$. This is obvious as by
   Remark~\ref{remark-profile}, the transition does not affect profiles.
 \end{enumerate}
\end{itemize}

\medskip
\item The cases $\transdownneq, \transtestroot, \transtestq$ and their
  negative counterparts are treated similarly. 

  We only need to verify that whenever $\theta'_1 \lqacstr \theta_1$, if
  $\theta_1$ satisfies the condition for any transition in
  $\set{\transdownneq,\transtestroot,\transtestq}$ or their negative counterparts, then
  $\theta'_1$ also satisfies this condition.  For all these cases this is a
  simple consequence of $\theta_1$ and $\theta'_1$ having the same
  profile.

  In the case $\overline{\tup{\aMonoid}^{=}}$, if $\aMonoid \not\in
  \DataMonoid_1(d)$, then the same hold in $\theta'_1$ for $f(d)$.

  For the case $\tup{\aMonoid}^{\not=}$, if $\aMonoid \in \DataMonoid_1(d')$
  for some $d'\neq d$, then by equality of profiles there is a data value $e$
  such that $\typeof{\theta'_1}(e)=\typeof{\theta_1}(d')$.  Notice that even in the case when
  $\typeof{\theta_1}(e)=\typeof{\theta_1}(d)$, the definition of $\profile$ and the fact that
  $d\neq d'$ guarantees that we can always choose $e\neq d$. Hence $\aMonoid
  \in \DataMonoid_1(e)$ and $e\neq d$ as required.

For the case $\overline{\tup{\aMonoid}^{\not=}}$, if $\aMonoid \not\in
  \DataMonoid_1(d')$ for all $d'\neq d$, $\DataMonoid_1(d)$ is the unique
  one that may contain $\mu$. By equality of the profile, this must also be the
  case for $\DataMonoid'_1(d)$ as desired.

The cases \transtestroot and \transtestnroot are straightforward as the value
of $r$ is preserved by $\lqacstr$.

The cases \transtestq and \transtestnq are also immediate.\qedhere
\end{enumerate}
\end{proof}

We now do the same for \transmerge. In this case $n$ depends on the size of the extended configurations and we may need to perform a permutation on the data values. In this case, the use of $\transpump$ becomes essential. Indeed, while it is not true that a transition $\theta_1, \theta_2  \transmerge \theta_0$ can be downward-simulated with a transition $\transmerge$, it can be simulated by a sequence of transitions $(\transpump)^*\transmerge$ as we show next, which is the reason why we needed to introduce $\transpump$ in our transition system. More precisely, while for $\theta'_1 \lqacstr \theta_1$, $\theta'_2 \lqacstr \theta_2$ we have that $\theta_1, \theta_2  \transmerge \theta_0$ may not be simulated from applying a merge transition to $\theta'_1, \theta'_2$, it still \emph{could} be simulated from some ``fatter'' versions of $\theta'_1, \theta'_2$ (\ie, from the result of applying a number of $\transpump$ transitions).
\begin{lemma}~\label{lem:monotonic-merge}
  Let $\theta_1,\theta_2,\theta_0$ be extended configurations such that
  $\theta_1, \theta_2  \transmerge \theta_0$.
  If $\theta'_1 \lqacstr \theta_1$ and $\theta'_2 \lqacstr \theta_2$,
  then there are $\hat \theta_1 \eqconf \theta'_1$, $\hat\theta'_2 \eqconf \theta'_2$ and  $\theta'_0$ such that $\theta'_0 \lqacstr \theta_0$ and
  $\set{\hat\theta'_1 , \hat\theta'_2} \tranAC^n \theta'_0$ for some $n$ computable from
  $\theta'_1$ and $\theta'_2$.
\end{lemma}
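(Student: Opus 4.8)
The plan is to obtain $\theta'_0$ by applying a $\transmerge$ step to suitable renamings of $\theta'_1$ and $\theta'_2$, after first fattening them with $\transpump$. Fix witnesses $\theta'_1\lqacstr_{f_1}\theta_1$ and $\theta'_2\lqacstr_{f_2}\theta_2$, let $c_0,d_0$ be the common label and data value of $\theta_1,\theta_2,\theta_0$ (they exist since $\theta_1,\theta_2\transmerge\theta_0$), and set $I=\data(\theta_1)\cap\data(\theta_2)$. The witness I want for $\theta'_0\lqacstr\theta_0$ is the map that agrees with a renamed copy of $f_1$ on $\data(\theta_1)$ and with a renamed copy of $f_2$ on $\data(\theta_2)$; since $\DataMonoid_0(d)=\DataMonoid_1(d)\cup\DataMonoid_2(d)$ for $d\in I$, this is well defined and $\DataMonoid$-respecting only once the two renamed copies coincide on $I$ --- equivalently $f_1$ and $f_2$ induce the same partition of $I$ and no $f_i$-fibre straddles $I$ and its complement --- and moreover the merged configuration must carry the correct $\profile$, which is not automatic because merging takes unions of $\DataMonoid$-values. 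These are precisely the points $\transpump$ is introduced for, and the reason a bare $\transmerge$ of $\theta'_1,\theta'_2$ does not work.

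\textbf{Fattening.} First I would apply the machinery of Lemmas~\ref{lemma-injectivity}, \ref{lem:augment-elements} and \ref{lem:injective-d} to both $\theta'_1\lqacstr\theta_1$ and $\theta'_2\lqacstr\theta_2$. Using Lemma~\ref{lem:augment-elements} I make every $\DataMonoid$-class that is fat ($>2$ data values) in $\theta_i$ fat in $\theta'_i$; then, arguing as in Lemma~\ref{lemma-injectivity}, I split fibres of $f_1$: whenever a fibre of $f_1$ inside $\data(\theta_1)$ mixes pieces that must be kept apart --- pieces lying in different blocks of the common refinement of the $f_1$- and $f_2$-partitions of $I$, or one piece inside $I$ and one outside it, or pieces whose separation is needed so that a fat merged class stays fat --- I keep one piece on the old data value and route each remaining piece to a fresh duplicate created by a $\transpump$ step, redefining $f_1$ accordingly. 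The $\transpump$ precondition (at least two data values of the current configuration reachable by the relevant $\prof$) holds whenever a split is actually called for, since for a class with a single data value in $\theta_1$ --- hence, by profile equality, in $\theta'_1$ --- no split is needed; one checks that the resulting $\tilde\theta'_1$ still satisfies $\tilde\theta'_1\lqacstr\theta_1$ via the refined $f_1$, and that condition~\eqref{eq:condition-Gamma-well-formed} survives because duplicated classes contain $\geq2$ data values and so carry no semigroup image of a single letter. The number $n_1$ of $\transpump$ steps is bounded in terms of $|\data(\theta'_1)|$, $|\data(\theta'_2)|$ and $|\Monoid|$ only, hence computable; symmetrically for $\theta'_2$ with $n_2$ steps.

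\textbf{Renaming, merging, and checking $\lqacstr$.} After fattening, $f_1$ and $f_2$ induce the same (common-refinement) partition of $I$, so by choosing bijections $g_1,g_2$ of $\D$ I obtain $\hat\theta'_1\eqconf\theta'_1$ and $\hat\theta'_2\eqconf\theta'_2$ --- reachable from $\theta'_1,\theta'_2$ by $n_1$, resp.\ $n_2$, $\transpump$ steps, since $\transpump$ commutes with bijective renaming --- whose data value is $d_0$, whose transported witnesses $\hat f_1,\hat f_2$ agree on $I$ and fix $d_0$, and with $\data(\hat\theta'_1)\cap\data(\hat\theta'_2)=\hat f_1(I)$, all other data values being fresh across the two configurations (using $|\D|=\infty$). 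They satisfy the preconditions of $\transmerge$: same label $c_0$ by Remark~\ref{remark-label}, same data value $d_0$ by construction, and $m=\textit{true}$, $r=r_0$ because $\profile$ is preserved by $\lqacstr$, by $\transpump$ and by $\eqconf$. So $\hat\theta'_1,\hat\theta'_2\transmerge\theta'_0$ with $\DataState'_0=\DataState'_1\cup\DataState'_2$, $\DataMonoid'_0=\DataMonoid'_1\cup\DataMonoid'_2$, $r'_0=r_0$, $m'_0=m_0$; the whole derivation has length $n=n_1+n_2+1$. Finally I would verify $\theta'_0\lqacstr_{f_0}\theta_0$ for $f_0$ equal to $\hat f_1$ on $\data(\theta_1)$, to $\hat f_2$ on $\data(\theta_2)$ (consistent by the above) and to a fixed fresh value elsewhere: for $d\in I$ one gets $\DataMonoid'_0(f_0(d))=\DataMonoid_1(d)\cup\DataMonoid_2(d)=\DataMonoid_0(d)$ and $\DataState'_0(f_0(d))=\DataState'_1(f_0(d))\cup\DataState'_2(f_0(d))\subseteq\DataState_1(d)\cup\DataState_2(d)=\DataState_0(d)$ (collapsing is harmless because $\DataState'_i(f_i(x))\subseteq\DataState_i(x)$ for every preimage $x$); for $d$ in only one of $\data(\theta_1),\data(\theta_2)$, $f_0(d)$ lies outside $\hat f_1(I)$, the irrelevant $\DataMonoid'_i$-value is empty, and the same computation degenerates; surjectivity is clear since $\data(\theta'_0)=\data(\hat\theta'_1)\cup\data(\hat\theta'_2)$ is covered by $\hat f_1$ and $\hat f_2$; and $\profile(\theta'_0)=\profile(\theta_0)$ follows because $r,m$ agree and, thanks to the fattening, the non-fat $\DataMonoid$-classes are matched bijectively through $f_0$ while the fat classes on the $\theta_0$ side remain fat on the $\theta'_0$ side.

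\textbf{Where the difficulty lies.} I expect the main obstacle to be choreographing the fattening in the second step so that all three requirements (same partition of $I$, no straddling fibres, fat merged classes preserved) are met simultaneously while keeping $n_1,n_2$ bounded by a computable function of $\theta'_1,\theta'_2$; once this is arranged, the $\transmerge$ step and the verification of $\lqacstr$ are routine bookkeeping. It is worth recording explicitly that a $\transmerge$ applied directly to $\theta'_1,\theta'_2$ (with no $\transpump$) genuinely fails --- two data values shared by $\theta_1$ and $\theta_2$ may be collapsed by $f_1$ but not by $f_2$ --- which is exactly why this case is treated separately from Lemma~\ref{lem:monotonic}.
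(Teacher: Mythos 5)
Your proposal follows essentially the same route as the paper's proof: use $\transpump$ to split the fibres of the witness functions $f_1,f_2$ until they induce the same identifications on $I=\data(\theta_1)\cap\data(\theta_2)$ (justifying each split by profile equality, which guarantees a second data value in the relevant $\DataMonoid$-class), then rename by a data bijection so the two witnesses literally agree on $I$ and the remaining data values are disjoint, apply $\transmerge$, and verify $\lqacstr$ with the combined witness. The paper packages the profile-preservation bookkeeping slightly differently (via injectivity on explicit ``special value'' sets $S_I, S_{J_1}, S_{J_2}$ and renaming only one side), but your three conditions on the fattening --- common refinement on $I$, no fibre straddling $I$ and its complement, fat merged classes kept fat --- capture exactly what is needed, and your bound $n=n_1+n_2+1$ with $n_1,n_2$ computable from $|\data(\theta'_1)|,|\data(\theta'_2)|,|\Monoid|$ matches the paper's.
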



\begin{proof}
  Throughout the proof we use the following notation for $i\in\set{0,1,2}$:
  $\theta_i=(\DataState_i,\DataMonoid_i,r_i,m_i)$, similarly for
  $\theta'_i$. Moreover $a_i$ and $d_i$ denote the label and data value
  associated to $\theta_i$ (similarly for $\theta'_i$). By definition of
  $\transmerge$ we have $a_1=a_2$ and $d_1=d_2$. Suppose $\theta'_1 \lqacstr_{f_1} \theta_1$ and $\theta'_2 \lqacstr_{f_2} \theta_2$.  In particular (recall
  Remark~\ref{remark-label}) we have $a_1=a'_1=a_2=a'_2$, $f_1(d_1)=d'_1$
  and $f_2(d_2)=d'_2$.

  By definition of $\transmerge$, $\theta_0$ is constructed by taking the union
  of $\theta_1$ and $\theta_2$. Hence, for all $d$ we have
  $\typeof{\theta_0}(d)=\typeof{\theta_1}(d) \cup \typeof{\theta_2}(d)$.

  We need to merge $\theta'_1$ and $\theta'_2$ in order to reflect the way the
  union of $\theta_1$ and $\theta_2$ is done. For instance if $d$ is a data
  value that occurs both in $\data(\theta_1)$ and $\data(\theta_2)$ then we
  would like to identify $f_1(d)$ with $f_2(d)$. We also need to keep track of
  the number of data values having the same type, up to 2. This is
  essentially what we do below.

  Let $I=\data(\theta_1) \cap \data(\theta_2)$, $J_1=\data(\theta_1)\setminus
  I$, $J_2=\data(\theta_2)\setminus I$. Figure~\ref{fig:merge-compatibility} contains a depiction of the main argument that will be described in the next paragraphs.

  \begin{figure}
    \centering
    \includegraphics[width=.75\textwidth]{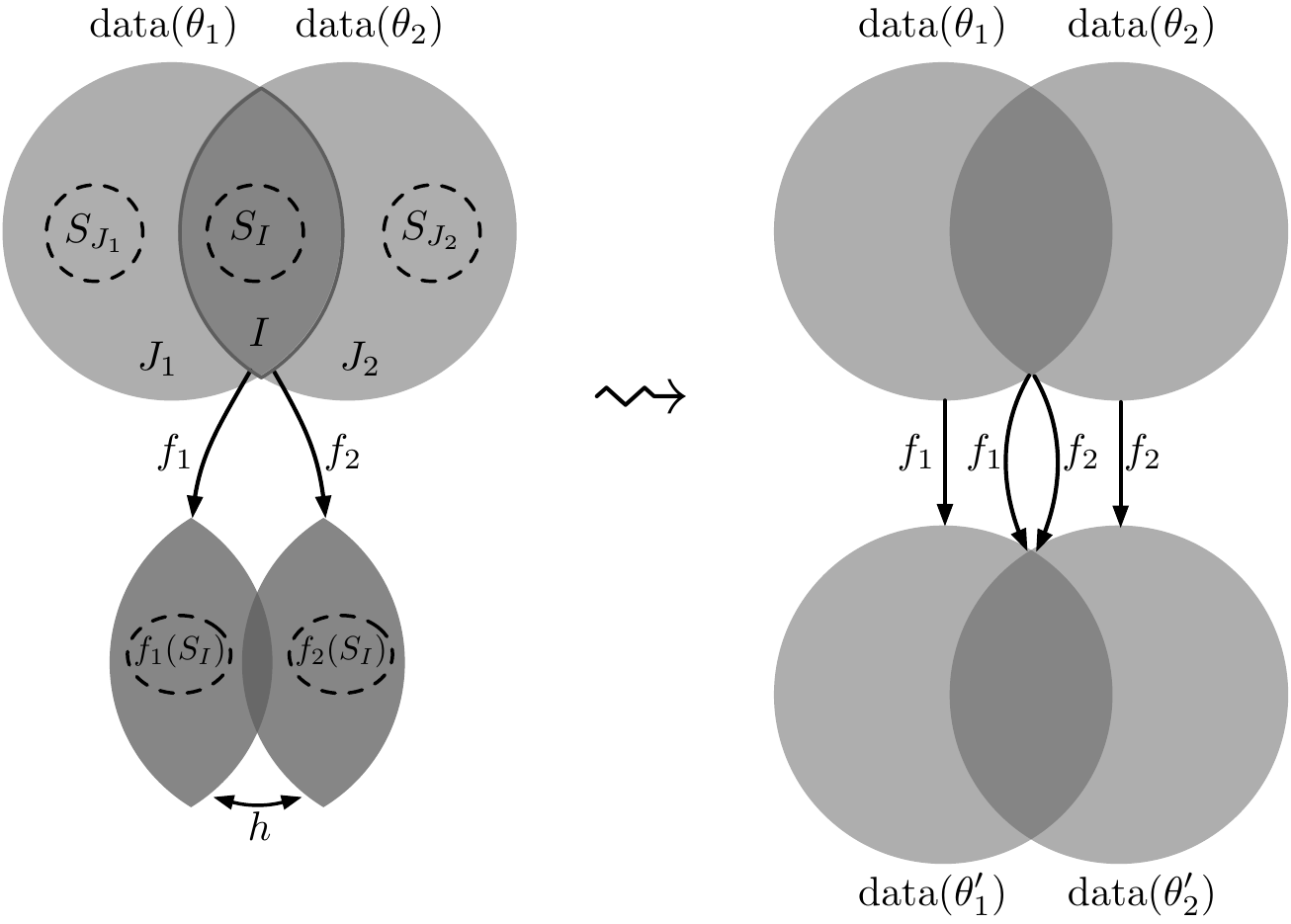}
    \caption{Argument of Lemma~\ref{lem:monotonic-merge}.}
    \label{fig:merge-compatibility}
  \end{figure}
  We first define sets $S_1=S_I\cup S_{J_1}$ and $S_2=S_I\cup S_{J_2}$ of
  \emph{special} data values. $S_I$ is constructed as follows: For each pair
  $\prof_1$, $\prof_2$ $\subseteq\Monoid$ we put in $S_I$ two (if possible)
  elements $d\in I$ such that $\DataMonoid_1(d)=\prof_1$ and
  $\DataMonoid_2(d)=\prof_2$. If only one such data value exists we put in
  $S_I$ only this one. Note that $S_I$ contains $d_1$ and hence
  $d_2$. Similarly, $S_{J_1}$ is constructed as follows: For each
  $\prof_1\subseteq\Monoid$ we add to $S_{J_1}$ two (if possible) elements
  $d\in J_1$ such that $\DataMonoid_1(d)=\prof_1$. If only one such data value
  exists we add only this one. $S_{J_2}$ is defined analogously. Note that
  the size of $S_1$ and $S_2$ are bounded by $2\cdot 2^{2 \cdot |\Monoid|}$.

  We then apply Lemma~\ref{lemma-injectivity} on $\theta'_1 \lqacstr\theta_1$
  with $S_1$ and on $\theta'_2 \lqacstr \theta_2$ with $S_2$ and further assume
  that $f_1$ and $f_2$ are injective on $S_1$ and $S_2$.

  \smallskip

  Next we establish a bijection between $f_1(I)$ and $f_2(I)$. 
Let $h$ be the relation $\set{(f_1(d),f_2(d)) \mid d \in I}$. 
Note that if we have the property $f_1(d)=f_1(d')$ iff $f_2(d)=f_2(d')$ for all $d \in I$, then $h$ is a bijection: it is an injective function due to the property, and it is surjective since we define it from $f_1(I)$ to $f_2(I)$. Although this property may not necessarily hold, we show next that in this case we can apply a number of $\transpump$ transitions to $\theta'_1$ and $\theta'_2$ in order to obtain extended configurations in which the property holds, and thus that we can safely assume that the property holds for the remaining of the proof. Indeed, suppose the property does not hold: for some $d,d'$ we have $f_1(d)=f_1(d')$ but $f_2(d)\neq f_2(d')$ (the other case of the failing of the property is, of course, symmetrical), and let $(S,\prof)=\typeof{\theta'_1}(f_1(d))$. Note that we  have
\[\prof=\DataMonoid_1(d)=\DataMonoid'_1(f_1(d))=\DataMonoid'_1(f_1(d'))=\DataMonoid_1(d').\]
Since $\profile(\theta_1)=\profile(\theta'_1)$, $\theta'_1$ must contain
another data value $e\neq f_1(d)$ such that $\DataMonoid'_1(e)=\prof$. Hence we
are enabled to apply $\transpumpSX$ to $\theta'_1$ and add a fresh copy $\hat
d$ of $f_1(d)$. The resulting extended configuration $\hat\theta'_1$ is still
smaller than $\theta_1$ by means of a function $\hat f_1$ identical to $f_1$
except that $\hat f_1(d)=\hat d\neq f_1(d')=\hat f_1(d')$. Note that now there is one
less pair $d,d' \in I$ contradicting $\hat f_1(d)=\hat f_1(d')$ iff
$f_2(d)=f_2(d')$, for $\hat \theta'_1 \lqacstr_{\hat f_1} \theta_1$ and
$\theta'_2 \lqacstr_{f_2} \theta_2$.  Thus, we can apply $\transpump$ to
$\theta'_1$ and $\theta'_2$ a number of times ---bounded by
$|\data(\theta'_1)|+|\data(\theta'_2)|$--- in order to obain extended
configurations $\hat\theta'_1 \lqacstr_{\hat f_1} \theta_1$ and $\hat\theta'_2
\lqacstr_{\hat f_2} \theta_2$ so that $\set{(\hat f_1(d),\hat f_2(d)) \mid d
  \in I}$ is a bijection. For the rest of the proof, and in order to avoid
introducing more symbols, we are simply going to assume that $h$ is a bijection
from $f_1(I)$ to $f_2(I)$ without loss of generality.

We extend $h$ to a bijection on $\D$ making sure that $h(d)\not\in S_{J_2}$ for
all $d\in S_{J_1}$, and $h(d) \not\in S_{J_1}$ for all $d \in S_{J_2}$. The
fact that we can extend $h$ follows easily from $\D$ being infinite, as we show
next. For $i =1,2$, let $D_i \subseteq \D \setminus (S_{J_1} \cup S_{J_2} \cup
f_1(I) \cup f_2(I))$ be so that $|D_i| = |S_{J_i}|$ so that $D_1 \cap D_2
=\emptyset$, and let $h_i$ be a bijection between $S_{J_i}$ and $D_i$. Now take
any bijection $h'$ between $\D \setminus (f_1(I) \cup S_{J_1} \cup S_{J_2})$
and $\D \setminus (f_2(I) \cup D_1 \cup D_2)$ ---it exists since these sets
have the same cardinality $\aleph_0$--- and define the extension as $h \cup h_1
\cup h_2 \cup h'$.

Finally, consider the extended configuration $\theta''_1$ resulting from replacing every data value $d \in \data(\theta'_1)$ with $h(d)$ in $\theta'_1$. The intuition is that in this way we make equal the data values of $f_1(I)$ with those of $f_2(I)$  without making equal any other values between $\theta'_1$ and $\theta'_2$. Let $f'_1 = f_1 \circ h$; we now have that $\theta''_1 \lqacstr_{f'_1} \theta_1$.

$\theta''_1$ and $f'_1$ satisfy the same hypothesis as the previous $\theta'_1$ and $f_1$ but now $f'_1(d)=f_2(d)$ for all $d\in I$ (in particular $d''_1=d'_2$) and furthermore $f'_1(I)=\data(\theta''_1)\cap\data(\theta'_2)=f_2(I)$.
Note that we are implicitly using that transitions (in particular $\transpump$) are closed under data bijections; that is, $\theta \transpump \theta'$ iff $f(\theta) \transpump f(\theta')$ for any data bijection $f$.
 
  Let $\theta'_0$ be the extended configuration obtained by applying
  $\transmerge$ to $\theta''_1$ and $\theta'_2$. We claim that $\theta'_0
  \lqacstr \theta_0$. This is witnessed by the function $f(d)=f'_1(d)$ if $d\in
  \data(\theta_1)$ and $f(d)=f_2(d)$ otherwise. Note that $f$ satisfies the
  surjectivity condition. It remains to show that $\profile(\theta'_0) =
  \profile(\theta_0)$ and for all $e\in\D$ we have $\DataState'_0(f(e))
  \subseteq \DataState_0(e)$ and $\DataMonoid'_0(f(e)) = \DataMonoid_0(e)$.

  \begin{enumerate}
  \item Let us first show $\DataState'_0(f(e)) \subseteq \DataState_0(e)$ and
    $\DataMonoid'_0(f(e)) = \DataMonoid_0(e)$. 
    \begin{enumerate}
    \item Assume $e\in I$. Then
    $\DataState'_0(f(e))=\DataState''_1(f(e)) \cup
    \DataState'_2(f(e))=\DataState''_1(f'_1(e)) \cup \DataState'_2(f_2(e))
    \subseteq \DataState_1(e) \cup \DataState_2(e) \subseteq
    \DataState_0(e)$. Similarly, $\DataMonoid'_0(f(e))=\DataMonoid''_1(f(e)) \cup
    \DataMonoid'_2(f(e))=\DataMonoid''_1(f'_1(e)) \cup \DataMonoid'_2(f_2(e))
    = \DataMonoid_1(e) \cup \DataMonoid_2(e) =
    \DataMonoid_0(e)$. 
    \item     If $e\in\data(\theta_1)\setminus I$ we have:
    $\DataState'_0(f(e))=\DataState''_1(f(e)) \cup
    \DataState'_2(f(e))=\DataState''_1(f'_1(e))
    \subseteq \DataState_1(e) \subseteq \DataState_0(e)$. Similarly, $\DataMonoid'_0(f(e))=\DataMonoid''_1(f(e)) \cup
    \DataMonoid'_2(f(e))=\DataMonoid''_1(f'_1(e))
    = \DataMonoid_1(e) = \DataMonoid_0(e)$.
    \end{enumerate}
  \item Let us now show $\profile(\theta'_0) = \profile(\theta_0)$. This is an immediate consequence of
    the injectivity of $f'_1$ and $f_2$ on the special values. \qedhere
\end{enumerate}
\end{proof}

\begin{proof}[Proof of Proposition~\ref{prop-compatible}]
It follows directly from Lemma~\ref{lem:monotonic} and Lemma~\ref{lem:monotonic-merge}.
\end{proof}

\medskip

\begin{corollary}\label{decid-wsts}
Given a \BUDTA \anAut, it is decidable whether an accepting extended
configuration is derivable from $\Theta_I^\eqconf$ in the transition system associated with \anAut. 
\end{corollary}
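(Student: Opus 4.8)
I would decide the question with the generic coverability procedure for well-structured transition systems, instantiated on the transition system $(\ac,\tranaconf)$ equipped with the \wqo $(\ac,\lqacstr)$ of Lemma~\ref{lem:wqo-subsets-abstract-config}, using as target the downward-closed set of accepting configurations (Lemma~\ref{lemma-downward-closed}). Concretely, run a forward saturation. Keep a finite set $\Theta$ of extended configurations, taken up to $\eqconf$ so that it stays finitely representable, initialised to $\Theta_I$. At each round, compute all $\theta'$ with $\Theta\tranAC\theta'$; up to $\eqconf$ there are finitely many and they are effectively computable, since each of the rules 1 to 14 --- including $\transup$, $\transmerge$ and $\transpump$ --- only inspects one or two members of $\Theta$ and involves only finitely many relevant choices of fresh data values up to $\eqconf$. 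If some produced $\theta'$ is accepting, i.e.\ $\DataState'=\emptyset$, answer ``yes''; otherwise add to $\Theta$ those produced $\theta'$ for which no member of $\Theta$ is already $\lqacstr$ below $\theta'$, and repeat. If a round adds nothing, answer ``no''.

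\textbf{Termination and soundness.} Termination follows from Lemma~\ref{lem:wqo-subsets-abstract-config}: a configuration is added only when it is not $\lqacstr$-minorised by any current member, and since we never remove members, an infinite run would yield configurations $\rho_1,\rho_2,\dotsc$ with $\rho_i\not\lqacstr\rho_j$ for all $i<j$, i.e.\ an infinite bad sequence, contradicting the \wqo. Soundness of a ``yes'' answer is immediate: $\Theta_I\subseteq\Theta$ is maintained, every configuration handled by the algorithm is a $\tranAC$-successor of configurations already derivable from $\Theta_I^\eqconf$, and these one-step successors compose into $\tranAC^+$ by unfolding its recursive definition; hence any accepting configuration produced is genuinely derivable from $\Theta_I^\eqconf$. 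Note $\Theta_I^\eqconf$ itself contains no accepting configuration, as every initial configuration has $\DataState=\{(q_0,\bot,d)\}\neq\emptyset$.

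\textbf{Completeness of ``no''.} Suppose $\theta^\ast$ is accepting and derivable from $\Theta_I^\eqconf$, say along a run building sets $\Psi_0=\Theta_I^\eqconf\subseteq\Psi_1\subseteq\dotsb\subseteq\Psi_m$ with $\Psi_i\tranAC\phi_{i+1}$, $\Psi_{i+1}=\Psi_i\cup\{\phi_{i+1}\}$ and $\theta^\ast\in\Psi_m$. Let $\Theta_\infty$ be the saturated set. Since it contains $\Theta_I$ and $\eqconf$ implies $\lqacstr$, we have $\Theta_\infty\lqAC\Psi_0$. Replay the run ``below'' using Proposition~\ref{prop-compatible}: inductively build a derivation from $\Theta_\infty$ whose running set $\Phi_i$ is derivable from $\Theta_\infty^\eqconf$ and satisfies $\Phi_i\lqAC\Psi_i$; at step $i$, from $\Phi_i\lqAC\Psi_i$ and $\Psi_i\tranAC\phi_{i+1}$ the proposition gives $\hat\phi_{i+1}\lqacstr\phi_{i+1}$ with $\Phi_i^\eqconf\tranAC^{n}\hat\phi_{i+1}$, and we let $\Phi_{i+1}$ be $\Phi_i$ enlarged by that derivation, which keeps $\Phi_{i+1}\lqAC\Psi_{i+1}$. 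At the end $\Phi_m\lqAC\Psi_m$, so some $\hat\theta^\ast\in\Phi_m$ has $\hat\theta^\ast\lqacstr\theta^\ast$; it is derivable from $\Theta_\infty^\eqconf$ and, being $\lqacstr$ below the accepting $\theta^\ast$, it is accepting by Lemma~\ref{lemma-downward-closed}. It then remains to conclude that an accepting configuration was actually \emph{produced} during the saturation: as $\Theta_\infty$ is saturated, the first configuration of the derivation $\Theta_\infty^\eqconf\tranAC^\ast\hat\theta^\ast$ is a data-bijection image of a $\tranAC$-successor of $\Theta_\infty$, hence was generated in the last round --- if accepting, the algorithm answered ``yes''; otherwise it is $\lqacstr$-minorised by $\Theta_\infty$ and one repeats the ``replay below'' on the remainder.

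\textbf{Main obstacle.} I expect this last bookkeeping step to be the delicate point. Because $\tranAC$ is set-valued (the $\transmerge$ rule consumes two configurations) and Proposition~\ref{prop-compatible} only returns a \emph{multi-step} simulation, the naive induction on derivation length does not close; one must instead argue by a minimal-counterexample / König-type argument on the finitely-branching, \wqo-bounded derivation forest that no accepting configuration can escape the saturated exploration. The other ingredients --- effectiveness of the one-step successor computation and the $\eqconf$-bookkeeping --- are routine.
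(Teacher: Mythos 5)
Your overall architecture (forward saturation bounded by the \wqo of Lemma~\ref{lem:wqo-subsets-abstract-config}, soundness via the definition of $\tranAC^+$, completeness via ``replaying the derivation below'' with Proposition~\ref{prop-compatible}, and Lemma~\ref{lemma-downward-closed} for downward closure of acceptance) matches the paper's. But the gap you flag at the end is real and is not just bookkeeping: with a saturation that only closes under \emph{one-step} successors, the completeness induction genuinely does not go through. Proposition~\ref{prop-compatible} replaces a one-step transition $\Psi_i\tranAC\phi_{i+1}$ by an $n$-step derivation $\Phi_i^\eqconf\tranAC^n\hat\phi_{i+1}$, and the intermediate configurations of that derivation are themselves only guaranteed to be minorized by the saturated set after \emph{another} application of Proposition~\ref{prop-compatible}, which again produces a multi-step derivation; since the replacement $\hat\sigma\lqacstr\sigma$ need not be strict, there is no decreasing measure and the regress does not terminate. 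Your proposed repair (``minimal-counterexample / K\"onig-type argument'') is not substantiated and it is not clear it can be made to work as stated.

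The paper closes this gap differently, and more simply, at the level of the algorithm rather than of the proof: the saturation loop does not add one-step successors of $\Theta$, but any $\theta$ with $\Theta\not\lqAC\set\theta$ that is reachable from $\Theta^\eqconf$ in at most $f(\Theta)$ steps, where $f(\Theta)$ is exactly the computable bound $n$ supplied by Proposition~\ref{prop-compatible}. With this stronger closure condition the completeness invariant $\Theta\lqAC\Theta_I\cup\set{\theta_1,\dotsc,\theta_i}$ is proved by a direct induction on $i$: the $n$-step witness $\hat\phi_{i+1}\lqacstr\phi_{i+1}$ falls within the horizon tested by the loop, so upon termination it is minorized by $\Theta$, hence so is $\phi_{i+1}$; no replay of intermediate configurations is ever needed. (Termination is unaffected, since configurations are still only added when not minorized, and the bounded-horizon successor computation remains finite up to $\eqconf$ by the same argument you give for one step.) So you should modify your algorithm to saturate under $\tranAC^{\leq f(\Theta)}$ rather than under $\tranAC$; as written, the decision procedure you describe is not proved complete.
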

\begin{proof}
Let $f$ be the function computing the maximum value of $n$ as specified in Proposition~\ref{prop-compatible}.
Consider the following procedure.  
\begin{enumerate}
\item Initialize $\Theta$ to $\Theta_I$. 
\item While there is some $\theta$ so that $\Theta \not \lqAC \set\theta$ and
  $\Theta^\eqconf \tranAC^n \theta$ for $n \leq f(\Theta)$: add $\theta$ to
  $\Theta$.
\item Accept if there is an accepting extended configuration in $\Theta$, otherwise reject.
\end{enumerate}
Because $\lqacstr$ is a \wqo, the procedure always terminates.
It should also be noted that the second step is computable. Indeed there are
only finitely many elements within $\Theta^\eqconf$ to consider, those
derived from $\Theta$ via a permutation of the data
values $\data(\Theta)$, as the other ones would derive an extended
configuration equivalent to one already derived. This means that for example if
$n=1$, it suffices to consider all distinct $\theta$ modulo bijection of data values so that $\Theta \not \lqAC
\set\theta$ and 
\begin{itemize}
\item $\theta_1 \rightarrow \theta$ for some $\theta_1 \in \Theta$ for
  any transition except $\transmerge$, or
\item $\theta'_1, \theta'_2 \transmerge \theta$ for some $\theta'_1 \eqconf
  \theta_1$, $\theta'_2 \eqconf \theta_2$, and $\theta_1,\theta_2 \in
  \Theta$. However, in this case we only need to consider all pairs
  $(\theta'_1, \theta'_2)$ that are images of $(\theta_1,\theta_2)$ via a
  bijection of $\data(\theta_1) \cup \data(\theta_2)$, and there are only
  finitely many such bijections.
\end{itemize}
Note that due to transitions $\transguess$ and $\transup$ there may be
infinitely many such $\theta$, but only finitely many modulo $\eqconf$.

Let $\Theta$ be the final set after the evaluation of the algorithm. First,
note that all the extended configurations $\theta$ from $\Theta$ are so that
$\Theta_I^\eqconf \tranAC^+ \theta$. Therefore, if the algorithm accepts, there is
an accepting extended configuration derivable from $\Theta_I^\eqconf$. We
therefore show the converse.

Suppose that $\Theta_I^\eqconf \tranAC^t \theta_t$ and let $\theta_1, \dotsc,
\theta_t$ be the extended configurations derived at each step.
We show by induction that for every $i$ we have $\Theta \lqAC
\Theta_I \cup \set{\theta_1, \dotsc, \theta_i}$.  The base case when $i=0$ is
trivial since $\Theta_I \subseteq \Theta$. For the inductive case, suppose that
$\Theta \lqAC \Theta_I \cup \set{\theta_1, \dotsc, \theta_i}$. By
Proposition~\ref{prop-compatible}, since $\Theta_I \cup \set{\theta_1, \dotsc,
  \theta_i} \tranAC \theta_{i+1}$ there is some $n \leq f(\Theta)$ so that $\Theta^\eqconf
\tranAC^n \theta$ with $\theta \lqacstr \theta_{i+1}$. By the condition of the
algorithm, it must be so that $\Theta \lqAC \set \theta$ and hence $\Theta
\lqAC \set{\theta_{i+1}}$. Therefore, $\Theta \lqAC \Theta_I \cup
\set{\theta_1, \dotsc, \theta_{i+1}}$.

As a consequence of this property, if there is an accepting configuration
$\theta_F$ so that $\Theta_I^\eqconf \tranAC^+ \theta_F$, then in particular we have
$\Theta \lqAC \set {\theta_F}$ and hence there must be some $\theta \in \Theta$
so that $\theta \lqacstr \theta_F$. Since $\theta_F$ is accepting, and the set of
accepting extended configurations is downward closed
(Lemma~\ref{lemma-downward-closed}), it follows that $\theta$ is accepting.
\end{proof}

As shown next,  this implies the decidability for the emptiness problem for $\BUDTA$.

\subsection{From \BUDTA to its extended configurations}\label{section-BUDA-AC}
\newcommand{\transeps}{\ensuremath{T_\epsilon}\xspace}
\newcommand{\anWSTSM}{\ensuremath{\anWSTS_\anAut}\xspace}

The transition system $\anWSTS_\anAut$ associated to a \BUDTA \anAut is then
defined as follows.  Its elements are the extended configurations of \anAut as
defined in Section~\ref{section-abstract-configuration}. Its transition
relation is as defined in Section~\ref{sec-trans}. As shown in
Section~\ref{section-compatibility} the \wqo defined in Section~\ref{sec-wqo}
is compatible with the transition system. Hence coverability of
$\anWSTS_\anAut$ is decidable. It remains to show that  $\anWSTS_\anAut$ has the
desired behavior, i.e. that its coverability problem is equivalent to the
emptiness problem of \anAut. This is what we do in this section.

One direction is easy as the transition system can easily simulate \anAut. The
other direction requires more care. As evidenced in
\eqref{eq:form-of-derivation}, $\anWSTS_\anAut$ may perform a $\transpumpSX$
transition anytime. Remember that the effect of $\transpumpSX$ can be seen as a
result of the tree growing in width. We will see that this can be simulated by
a \BUDTA only when it moves up in the tree. This issue is solved by showing
that all transitions except $\transup$ commute with $\transpumpSX$ hence all
transitions \transpumpSX can be grouped just before a \transup and combined
with it in order to form an up-transition of a \BUDTA. As a consequence, we
obtain the following.
\begin{proposition}\label{prop-wsts-aut}
  Let \anAut be a \BUDTA. Let $\anWSTS_\anAut$ be the transition system
  associated with \anAut.  Then \anAut has an accepting run if, and only if,
  $\anWSTS_\anAut$ can reach an accepting extended configuration from the set
  of initial extended configurations.
\end{proposition}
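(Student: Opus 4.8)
The plan is to prove the two directions separately. For the \emph{easy direction} (\anAut accepting $\Rightarrow$ $\anWSTS_\anAut$ covers an accepting configuration), I would take an accepting run $\rho$ of \anAut on a data tree $\tT = \aA \prd \bB \prd \dD$ and show by structural induction on $\tT$ that for every node $x$ the extended configuration $\theta_x$ that records $\rho(x)$ together with the set $\{(\morphism(w),\dD(y)) : w \text{ labels a downward path from } x \text{ to } y\}$ (and the correct $r$, $m$ flags) is reachable in $\anWSTS_\anAut$ from $\Theta_I$. The base case is a leaf, where $\theta_x$ is an initial extended configuration up to $\eqconf$. For the inductive step at an inner node $x$ with children $x_1,\dots,x_k$: by induction each $\theta_{x_i}$ is reachable; a $\transup$ step on each glues on the new root label $\aA(x),\bB(x)$ and data $\dD(x)$, turning each thread $(q,\alpha,d)$ with $\alpha \in \{\topng,\topg\}$ into $(q,\bot,d)$; then $(k-1)$ successive $\transmerge$ steps union these together; finally a sequence of $\tranaconfe$ steps replays, for each thread $(q,\bot,d)$, the transition $\tau \in \delta$ used by $\rho$ (a $\transdelta$ step followed by the appropriate test/action steps — $\transdowneq$, $\transdownneq$, $\transtestroot$, $\transuniv$, $\transguess$, $\transstore$, $\transkeep$, $\transaccept$, and the negated tests), which is possible precisely because the tests true of $(q,d)$ at $x$ in $\tT$ are mirrored by the $\DataMonoid$ component — this is exactly why $\DataMonoid$ was built that way. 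At the root, acceptance of $\rho$ means all threads fire $\opaccept$, so the resulting $\DataState$ is empty, i.e. the configuration is accepting.

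For the \emph{hard direction} ($\anWSTS_\anAut$ covers an accepting configuration $\Rightarrow$ \anAut accepting), I would first reduce to a canonical shape of derivation. By~\eqref{eq:form-of-derivation} every derivation matches $\big((\tranaconfe \mid \transpump)^* \transup (\transmerge)^*\big)^* (\tranaconfe \mid \transpump)^*$. The obstacle, as the surrounding text flags, is that $\transpump$ can be applied at \emph{any} moment, whereas a \BUDTA can only ``widen'' the tree when it moves up through a node. So the key lemma to prove is a \textbf{commutation lemma}: every transition other than $\transup$ commutes with $\transpump$, in the sense that if $\theta \transpump \theta' \xrightarrow{\tau} \theta''$ for $\tau \in \tranaconfe \cup \{\transmerge\}$ then there is $\hat\theta$ with $\theta \xrightarrow{\tau} \hat\theta \transpump^* \theta''$ (and similarly for the merge's two inputs). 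This is checked case by case using the fact (noted in Remark~\ref{remark-inc}) that $\transpump$ preserves the profile and only adds a fresh data value of an already-duplicable type, hence changes the truth value of no test; the $\topg$-threads dropped by $\transpump$ cause no trouble because a $\topg$-thread is one on which a $\opguess$ has already succeeded, so not duplicating it only shrinks $\DataState$. Having established commutation, I push all $\transpump$ steps rightward, so that within each ``block'' $(\tranaconfe \mid \transpump)^* \transup (\transmerge)^*$ the $\transpump$'s are collected immediately before the $\transup$; then each such $\transpump^* \transup$ is exactly the effect of one up-move of the \BUDTA on a tree whose root has been given extra duplicated immediate subtrees (Figure~\ref{fig:inc-tran}). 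From the resulting derivation I read off, by induction on its block-structure, a data tree $\tT$ and a run $\rho$ of \anAut on it: leaves $\leftrightarrow$ initial configurations, $\transmerge$ $\leftrightarrow$ appending another child, $\transup$ $\leftrightarrow$ creating the parent node, and the intervening $\tranaconfe$ steps $\leftrightarrow$ the transitions of $\delta$ threads apply at that node; the $\transpump$'s just tell us to duplicate certain immediate subtrees. An accepting final configuration ($\DataState = \emptyset$) forces every root thread to have fired $\opaccept$, so $\rho$ is accepting.

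The main obstacle is the commutation lemma for $\transmerge$ with $\transpump$: one must be careful that duplicating a data value in one of the two configurations before merging yields the same (up to data bijection) result as duplicating after merging, and that a $\transpump$ applied to the merged configuration can always be ``factored'' as a $\transpump$ on one of the two summands — this requires that the type used by the $\transpump$ already has $\geq 2$ witnesses in one summand, which follows from the precondition $|(\DataMonoid_1)^{-1}(\prof)| \geq 2$ of $\transpumpSX$ together with the profile bookkeeping of $A_0, A_1$. Once the bookkeeping here is done cleanly (largely parallel to the manipulations already carried out in Lemma~\ref{lem:monotonic-merge}), assembling the tree and the run is routine structural induction. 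Combining this with Corollary~\ref{decid-wsts} then yields Theorem~\ref{thm-budta-decid}, and together with the translation of Section~\ref{section-xpath} it yields Theorem~\ref{thm:rvxpath-decidable}.
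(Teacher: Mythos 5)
Your first direction is essentially the paper's Lemma~\ref{lemma-aut-wsta}: a bottom-up induction associating to each node $x$ the extended configuration that records $\rho(x)$ together with a $\DataMonoid$ consistent with $\tT|_x$, built up by $\tranaconfe$ steps, one $\transup$ per child and $(k-1)$ merges. One ordering slip: the $\tranaconfe$ steps that discharge a thread $(q,\bot,d)$ via a transition $\tau$ must be fired on the extended configuration of the node at which the tests of $\tau$ are evaluated, i.e.\ \emph{before} the $\transup$ that creates the parent (otherwise $\transup$ is not even enabled, since it requires all threads to be flagged); your invariant and base case need to be adjusted accordingly, but this is cosmetic.

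The hard direction has a genuine gap: your commutation lemma is stated in the wrong direction, and in that direction it is false. You propose to push every $\transpump$ \emph{rightward} past $\tranaconfe$, i.e.\ to go from $\theta \transpump \theta' \xrightarrow{\tau} \theta''$ to $\theta \xrightarrow{\tau} \hat\theta \,\transpump^{\!*}\, \theta''$. Consider $\theta \transpumpSX \theta'$ adding a fresh value $e'$ whose thread set is a copy of that of some existing $e$, and let $\tau$ fire one of the \emph{new} threads $(q,\alpha,e')$. In $\theta''$ the value $e$ still carries its unprocessed thread set while $e'$ carries a processed one; performing $\tau$ first on $\theta$ must instead consume the corresponding thread of $e$ (no thread of $e'$ exists yet), after which any duplication can only copy the already-processed set --- so the asymmetric configuration $\theta''$ is not reachable with $\tau$ preceding the duplication. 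The paper's Lemma~\ref{lem:move-inc-forward} goes the \emph{opposite} way: it moves each $\transpump$ to the \emph{front} of its $(\tranaconfe \mid \transpump)^*$ block, at the price of possibly replaying the intervening $\tranaconfe$ step \emph{twice} (once for $e$ and once for its copy); this built-in doubling is exactly why only that direction works. The payoff is that the $\transpump$'s then act on \emph{starting} configurations (all threads $\bot$), where the inductive hypothesis already supplies a concrete tree and run that can be duplicated as an extra immediate subtree of the root; your placement ``just before $\transup$'' offers no such object to duplicate, because mid-block configurations do not correspond to runs of \anAut on any tree. (Your concern about commuting $\transpump$ with $\transmerge$ is moot under either scheme: by \eqref{eq:form-of-derivation} the $m$-flag already forbids a $\transpump$ between a $\transup$ and its merges.)
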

In the sequel, we say that $\DataMonoid\subseteq (\Monoid\times\D)$ is
\emph{consistent}  with a data tree
$\tT=\aA\prd\bB\prd\dD\in\Trees(\A\times\B\times\D)$ when for every possible
$(\aMonoid,d)$, $\DataMonoid$ contains $(\aMonoid,d)$ if{f} there is a downward
path in $\tT$ that starts at the root and ends at some position $x$ such that
$\dD(x) = d$ and evaluates to $\aMonoid$ via $\morphism$. In particular this
implies that the label and data value of $\DataMonoid$ are the label and data
value of the root of $\tT$.

We first show that the transition system associated with a \BUDTA at least
simulates its behavior.

\begin{lemma}\label{lemma-aut-wsta}
  Consider an $\anAut \in \BUDTA$ and its associated transition system \anWSTSM. If
  \anAut has an accepting run then \anWSTSM can reach an accepting extended
  configuration from its initial extended configuration.
\end{lemma}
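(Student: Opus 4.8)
The plan is to take an accepting run $\rho$ of $\anAut$ on a data tree $\tT=\aA\prd\bB\prd\dD$ and to reconstruct, bottom-up, a derivation in $\anWSTSM$ that mirrors it. First I would associate to each position $x$ of $\tT$ an extended configuration $\theta_x=(\DataState_x,\DataMonoid_x,r_x,m_x)$ that ``summarizes'' the subtree $\tT|_x$: take $\DataMonoid_x$ to be the set of pairs $(\aMonoid,d)$ such that there is a downward path in $\tT$ starting at $x$, evaluating to $\aMonoid$ via $\morphism$, and ending at a node with data value $d$ (so $\DataMonoid_x$ is consistent with $\tT|_x$ in the sense defined above), take $\DataState_x=\set{(q,\bot,d) : (q,d)\in\rho(x)}$, set $r_x=\textit{true}$ iff $x$ is the root of $\tT$, and $m_x=\textit{false}$. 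For a leaf $x$, since $\rho(x)$ is initial, $\theta_x$ is an initial extended configuration (up to $\eqconf$). The claim I would then prove, by induction on the height of $x$, is that $\Theta_I^\eqconf$ can reach (via $\tranAC^+$) an extended configuration equivalent to $\theta_x$; applying this to the root $y$ and noting that the accepting condition of $\rho$ forces every thread of $\DataState_y$ to eventually fire an $\opaccept$ action, $\theta_y$ can be driven by a chain of $\transdelta$/$\transaccept$ (and the other $\epsilon$-transitions for the tests in the accepting rules) to an accepting extended configuration with $\DataState=\emptyset$.

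For the inductive step, let $x$ have children $x\conc 1,\dotsc,x\conc k$. By the induction hypothesis I have derivations of extended configurations equivalent to $\theta_{x\conc 1},\dotsc,\theta_{x\conc k}$. The idea is: (1) for each child $x\conc i$, run the $\epsilon$-transitions $\tranaconfe$ to simulate the transitions that $\rho$ applies to the threads in $\rho(x\conc i)$ — for each $(q,d)\in\rho(x\conc i)$ pick the rule $(t,\action)\in\delta$ used by $\rho$, fire $\transdelta$ to spawn the thread into its tests and actions, then discharge each test ($\transdowneq$, $\transdownneq$, $\transtestroot$/$\transtestnroot$, $\transtestq$, and their negated forms, plus the $\overline{\tup{\aMonoid}}$-style tests that in normal form replace $\optesteq$, $\optestleaf$, label tests, etc.) — these all succeed precisely because $\DataMonoid_{x\conc i}$ is consistent with $\tT|_{x\conc i}$ and $r_{x\conc i}$ correctly records rootness — and resolve each action ($\transkeep$, $\transstore$, $\transguess$ choosing the data value that $\rho$ chose, $\transuniv$, $\transaccept$), leaving every remaining thread flagged $\topng$ or $\topg$; (2) flip into merge mode is not needed yet, instead fire $\transup$ on each resulting configuration, using the label $c=(\aA(x),\bB(x))$ of $x$ and data value $\dD(x)$, which turns the $\topng/\topg$-flagged threads back to $\bot$-threads sitting at $x$ (giving exactly the threads $\rho$ places in $\rho(x)$ coming from that child) and prepends $\morphism(c)$ to every path in $\DataMonoid_{x\conc i}$ plus adds $(\morphism(c),\dD(x))$; (3) with all $k$ resulting configurations now in merge mode, apply $\transmerge$ $k-1$ times to take their union, obtaining exactly $\DataState_x=\bigcup_i\set{\dotsc}$ and $\DataMonoid_x=\set{(\morphism(c)\conc\aMonoid,e) : (\aMonoid,e)\in\bigcup_i\DataMonoid_{x\conc i}}\cup\set{(\morphism(c),\dD(x))}$, which is precisely the set consistent with $\tT|_x$; and (4) a final point: if $x$ is a leaf the procedure degenerates — $\theta_x$ is simply initial. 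One subtlety to handle with the semigroup-versus-regular-expression distinction: since we are using the normal form (NF1), a one-letter path $c$ evaluates to the unique element $\morphism(c)$, so the data value and label of each configuration are well-defined throughout, and property~\eqref{eq:condition-Gamma-well-formed} is preserved by $\transup$ and $\transmerge$ as already observed.

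The main obstacle I anticipate is bookkeeping rather than any deep difficulty: I must be careful that when $\rho$ uses an $\opuniv(p)$ action at $x\conc i$, the $\transuniv$ transition generates a thread $(p,\topng,e)$ for every $e$ such that $(\aMonoid,e)\in\DataMonoid_{x\conc i}$ for some $\aMonoid$, and I need $\DataMonoid_{x\conc i}$ to faithfully record $\data(\tT|_{x\conc i})$ — this follows from consistency of $\DataMonoid_{x\conc i}$ together with the fact that every node of $\tT|_{x\conc i}$ is reachable from $x\conc i$ by some downward path (and (NF1) guarantees at least the length-one paths are present, so every data value in the subtree shows up). A second point of care is ordering: the form~\eqref{eq:form-of-derivation} of legal derivations forces all $\transmerge$'s to occur immediately after a $\transup$ and before any $\tranaconfe$, which is exactly the schedule described in steps (1)–(3) above, so the constructed derivation is indeed well-formed; I just need to state this explicitly. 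No $\transpump$ transitions are needed in this direction, since we are only simulating an existing run, not downward-closing. Once the root configuration $\theta_y$ is reached, steps analogous to (1) for the accepting rules (which have action $\opaccept$, so each thread disappears via $\transdelta$ then $\transaccept$ after discharging its tests, the root test $\transtestroot$ now succeeding because $r_y=\textit{true}$) bring $\DataState$ down to $\emptyset$, yielding an accepting extended configuration, which completes the proof.
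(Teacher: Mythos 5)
Your proposal is correct and follows essentially the same route as the paper's proof: the same per-position extended configurations $\theta_x$ (with $\DataMonoid_x$ consistent with $\tT|_x$ and $\DataState_x$ read off from $\rho(x)$), the same bottom-up induction simulating each rule of $\rho$ by a $\transdelta$ followed by the test/action $\epsilon$-transitions, then $\transup$ and $k-1$ applications of $\transmerge$, and the same final discharge of the $\opaccept$ rules at the root. Your explicit remarks on the $\opuniv$ bookkeeping and on conformance with the derivation shape~\eqref{eq:form-of-derivation} are points the paper leaves implicit, but they do not change the argument.
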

\begin{proof}
  We show that from every accepting run $\rho$ of $\anAut$ on $\tT=\aA\prd\bB\prd\dD$ there
  exists a finite sequence of transitions in \anWSTSM starting in $\Theta_I^\eqconf$ and
  ending in an accepting configuration.

  Given a position $x \in \tpos(\tT)$, we define the extended configuration of $\tT|_x$ as $\theta_x=(\DataState_x,\DataMonoid_x,r_x,m_x)$, where
  $r_x=\textit{false}$ (unless $x$ is the root position), $m_x=\textit{false}$,
  $(\aA\prd\bB) (x)$ is the label of $\theta_x$, $\dD(x)$ is the data value of $\theta_x$, $\DataMonoid_x$ is consistent with $\tT|_x$, and
$\DataState_x = \set{(q,\bot,d) \mid (q,d) \in \rho(x)}$.
  
  We show by bottom-up induction on $x$ that $\theta_x$ is reachable from $\Theta_I^\eqconf$. By
  definition of $\Theta_I^\eqconf$, for any leaf node $x$ of $\tT$ we do have
  $\theta_x \in \Theta_I^\eqconf$ and hence our inductive process can start.

  Let now $x$ be a node of $\tT$ and let $x \conc 1, \dotsc, x \conc n$ be its
  children. By induction $\theta_{x \conc 1}, \dotsc, \theta_{x \conc n}$ are
  reachable from $\Theta_I^\eqconf$. For each $i$, and thread $(q,\bot,d)$ in $\theta_{x
    \conc i}$ there exists a transition $\tau_{i,q,d} \in \delta$ witnessing
  the fact that $\rho$ is a valid run. We derive $\theta_x$ from there as follows:
  \begin{enumerate}
  \item for each $i$, and each thread $(q,\bot,d)$ let $\theta_i$ be the
    extended configuration obtained from $\theta_{x\conc i}$ using a transition
    $\transdelta$ based on $\tau_{i,q,d}$,
\item for each $i$, we derive the extended configuration $\theta'_i$ from
    $\theta_i$ using the appropriate transitions for each thread newly introduced at the
    previous step. Note that because $\rho$ is accepting, each operation is
    successful and each new thread has $\topng$ or $\topg$ as second component. Hence, 
  \item for each $i$ we can apply $\transup$ to $\theta'_i$ and derive the
    extended configuration $\theta''_i$ using $\dD(x)$ and $(\aA\prd\bB)(x)$ as the data value and label of
    the new extended configuration,
\item we then make a sequence of $(n-1)$ applications of $\transmerge$ adding
  one by one $\theta''_i$ to the previously derived extended configuration,
\end{enumerate}

Repeating this simulation we finally derive the configuration $\theta_\eps$ of
the root of $\tT$. Since $\rho$ is accepting, for every $(q,d) \in \rho(\eps)$
there is $\tau = (t,\opaccept) \in \delta$ so that $\tT, \eps, (q,d) \models
t$. This means that for every $(q,\bot,d) \in \DataState_\eps$ one can apply a
$\transdelta$ transition based on $\tau$ arriving to an extended configuration
$\theta_\eps'
=(\DataState,\DataMonoid,r,m)$ with $r = \textit{true}$ and $\DataState =
\emptyset$ which is hence accepting.
\end{proof}

The other direction requires more care. First, we need to prove the following
lemma.

\begin{lemma}\label{lem:move-inc-forward}
  If we have $\theta_1 \tranaconfe \theta_2 \transpumpSX \theta_3$ then either
\begin{itemize}
\item $\hat \theta_1 \transpumpSpX \theta' \tranaconfe \theta_3$, or
\item 
  $\hat \theta_1 \transpumpSpX \theta' \tranaconfe
  \theta''\tranaconfe \theta_3$ 
\end{itemize}
for some extended configuration $\hat \theta_1, \theta',\theta''$ and set
$S'\subseteq Q$ such that $\hat \theta_1 \eqconf \theta_1$.
\end{lemma}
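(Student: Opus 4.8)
The plan is to do a case analysis on the $\epsilon$-transition $\theta_1 \tranaconfe \theta_2$, relying on three observations. First, every $\epsilon$-transition leaves $\DataMonoid$ untouched and preserves the profile (Remark~\ref{remark-profile}); hence the two preconditions of a $\transpumpSX$ step — namely $|\DataMonoid^{-1}(\prof)| \ge 2$ and the equality of profiles — hold for $\theta_1$ as soon as they hold for $\theta_2$. Second, by \eqref{eq:condition-Gamma-well-formed} together with the normal form~\ref{eq:semigroup-normal-form}, the data value $d^\circ$ of an extended configuration is the unique value with $h(c)\in\DataMonoid(d^\circ)$ for some $c\in\A\times\B$; hence $|\DataMonoid^{-1}(\DataMonoid(d^\circ))|=1$, so a $\transpump$ step can never be performed ``at'' $d^\circ$. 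Third, the pruning $S\mapsto\hat S$ in the definition of $\transpump$ depends only on the non-$\topg$ threads of $S$, whereas an $\epsilon$-step modifies $\DataState$ only by deleting the (non-$\topg$) triggering thread and inserting threads, among which only the $(\cdot,\topg)$ thread inserted by $\transguess$ is a $\topg$ thread; so $\hat\cdot$ commutes, up to that harmless thread, with the action of an $\epsilon$-step on $\DataState$.

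Fix the thread $(q,\alpha,d)\in\DataState_1$ triggering $\theta_1\tranaconfe\theta_2$, let $S,\prof$ be such that the given pump is $\transpumpSX$, let $\hat d$ be a data value of $\theta_2$ witnessing $\typeof{\theta_2}(\hat d)=(S,\prof)$, and let $e$ be the fresh value it introduces. If $e\in\data(\theta_1)$, rename it away inside $\theta_1$ to obtain $\hat\theta_1\eqconf\theta_1$ with $e\notin\data(\hat\theta_1)$ (otherwise $\hat\theta_1=\theta_1$). By the second observation, $\hat d\ne d^\circ$. \textbf{Case $\hat d\ne d$:} apply $\transpumpSpX$ to $\hat\theta_1$ at $\hat d$ with $S':=\DataState_1(\hat d)$ (legal by the first two observations), so the copy $e$ receives type $(\hat{S'},\prof)$; then fire the same $\epsilon$-transition on the still-present thread $(q,\alpha,d)$. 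Since $e$ differs from $d$, from $d^\circ$, and from the action's target, the step leaves $e$ alone except that $\transuniv$ adds $(q',\topng)$ to it when $\prof\ne\emptyset$; using the third observation one checks $\hat{S'}$ (resp.\ $\hat{S'}\cup\{(q',\topng)\}$) equals $\hat S$, so the result is exactly $\theta_3$, giving the first alternative with a single $\tranaconfe$-step.

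\textbf{Case $\hat d=d$:} then $\prof=\DataMonoid_1(d)$ and $|\DataMonoid_1^{-1}(\prof)|\ge 2$, so we may apply $\transpumpSpX$ to $\hat\theta_1$ at $d$ with $S':=\DataState_1(d)$, obtaining $\theta'$ in which the copy $e$ carries a duplicate $(q,\alpha,e)$ of the triggering thread (a duplicate since $\alpha$ is not $\topg$). Fire the $\epsilon$-transition twice, on $(q,\alpha,d)$ and then on $(q,\alpha,e)$ — with the same rule $\tau$ in the $\transdelta$ case, the same $\opuniv(q')$ in the $\transuniv$ case, and the obviously matching choice of guessed value in the $\transguess$/$\transstore$/$\transkeep$ cases. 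The first firing reproduces $\theta_3$ everywhere except on $e$, where the duplicate is still pending; the second firing affects only $e$ — its ``target'' additions are already present (idempotently so, for $\transuniv$'s universal re-addition) — and, by the third observation, leaves $e$ with content exactly $\hat S$. This gives the second alternative $\hat\theta_1\transpumpSpX\theta'\tranaconfe\theta''\tranaconfe\theta_3$. The tests $\transdowneq,\transdownneq,\transtestroot,\transtestq$ and their negations have no action target and fit both schemes verbatim; $\transstore$, $\transguess$, $\transkeep$, $\transaccept$ reduce to the $\transdelta$ scheme with the bookkeeping of their target value, and one verifies that the second firing's precondition (e.g.\ $(\aMonoid,e)\in\DataMonoid'$ for $\transdowneq$) holds because $\DataMonoid'(e)=\prof=\DataMonoid_1(d)$.

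The main obstacle is the $\hat d=d$ double-firing case: one must verify that the second firing yields \emph{exactly} $\theta_3$, neither more nor less. The delicate subcase is $\transuniv$, whose second firing re-adds $(q',\topng)$ to every data value reachable by a downward path; this is harmless only because those threads were already added by the first firing and because the pruning $S\mapsto\hat S$ never removes $(q',\topng)$. The other subcase requiring care is $\transguess$, where the guess inserts a $\topg$ thread that $\hat\cdot$ discards, so that it is immaterial whether $\hat S$ ``sees'' it — which is exactly what makes the bookkeeping in the previous paragraph go through.
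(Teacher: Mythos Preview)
Your argument is correct and follows essentially the same route as the paper's proof: rename $\theta_1$ so the fresh value $e$ is absent, apply $\transpumpSpX$ first at the witness with $S'=\DataState_1(\hat d)$, and then replay the $\epsilon$-transition once (when the witness is not the triggering datum) or twice (when it is). The paper organises the case analysis slightly differently---it first separates the easy case $\DataState_1(e)=\DataState_2(e)$ and only then splits on $d=e$ versus $d\neq e$---whereas you fold its first and third cases into a single ``$\hat d\neq d$'' case handled uniformly via your third observation about $\hat\cdot$ commuting with the non-$\topg$ effect of an $\epsilon$-step. Both decompositions cover the same ground and the key verifications (the $\transuniv$ double-firing idempotence, the $\transguess$ pruning of the $\topg$ thread) are the same.
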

\begin{proof}
  Suppose that $\theta_1 \tranaconfe \theta_2 \transpumpSX \theta_3$. Notice
  that $\DataMonoid_1 = \DataMonoid_2$ by definition of $\tranaconfe$, and that
  $|\typeof{\theta_2}^{-1}(S,\prof)|\geq 1$ by definition of
  $\transpumpSX$. Let $e$ be a data value so that
  $\typeof{\theta_2}(e)=(S,\prof)$, and let $e' \in\data(\theta_3) \setminus
  \data(\theta_2)$ be the new data value added as a result of $\theta_2
  \transpumpSX \theta_3$. We thus have $\typeof{\theta_3}(e')=(\hat S,\prof)$
  with $\hat S = S \setminus \set{(q, \topg) : (q, \topg) \in S}$. Modulo
  replacing $\theta_1$ by an equivalent extended configuration we can further
  assume without any loss of generality that $e' \not\in \data(\theta_1)$.

Let $(q,\alpha,d)$ be the thread of $\DataState_1$ that triggers
$\tranaconfe$. We will treat all cases of $\tranaconfe$ at once, independently
of which particular transition it is.  Let $H \subseteq \DataState_2$ be the
new threads generated from $(q,\alpha,d)$ by $\theta_1 \tranaconfe
\theta_2$ (note that $H$ may be empty). We then have that:
 \begin{align}
   \DataState_2 &= (\DataState_1 \setminus \set{(q,\alpha,d)}) \cup H\label{eq:moveinc:ds2}\\
    \DataState_3 &= \DataState_2 \cup (\hat S\times\set{e'})\label{eq:moveinc:ds3}\\
    \DataMonoid_2 &= \DataMonoid_1 \label{eq:moveinc:dm2}\\
    \DataMonoid_3 &= \DataMonoid_2 \cup (\prof\times\set{e'}) \label{eq:moveinc:dm3}
 \end{align}

 Let $(S',\prof) = \typeof{\theta_1}(e)$ (recall that
 $\DataMonoid_1(e)=\DataMonoid_2(e)=\prof$). We show that $\transpumpSpX$ can
 be applied to $\theta_1$. In other words we show that
 $|\DataMonoid_1^{-1}(\prof)| \geq 2$. As $\transpumpSX$ was applied to
 $\theta_2$ we have that $|\DataMonoid_2^{-1}(\prof)|
 \geq 2$. Since $\DataMonoid_1 =
 \DataMonoid_2$, this implies $|\DataMonoid_1^{-1}(\prof)| \geq 2$ and we are
 done.

 We then define $\theta'= (\DataState', \DataMonoid',r_1,m_1)$ such that
 $\theta_1 \transpumpSpX \theta'$ with
\begin{align}
  \DataState' = \DataState_1 \cup (\hat S'\times\set{e'})\label{eq:moveinc:ds'}\\
\DataMonoid' = \DataMonoid_1  \cup (\prof\times\set{e'})\label{eq:moveinc:ms'}
\end{align}
where $\hat S' =  S' \setminus \set{(q, \topg) : (q, \topg) \in S'}$.

Notice that by \eqref{eq:moveinc:dm2}, \eqref{eq:moveinc:dm3} and
\eqref{eq:moveinc:ms'} we get $\DataMonoid' = \DataMonoid_3$.

\smallskip

We show that $\theta' \tranaconfe \theta_3$ or $\theta' \tranaconfe \theta''
\tranaconfe \theta_3$ for some $\theta''$. We distinguish between two possibilities: either $\DataState_1(e) = \DataState_2(e)$ or not.
\begin{itemize}
\item The easiest case is when $\DataState_1(e) = \DataState_2(e)$ (in
  particular $S=S'$).  This means that the two transitions of $\theta_1
  \tranaconfe \theta_2 \transpumpSX \theta_3$ do not interact with one
  another. Since a transition $\transpump$ preserves the truth of tests, the
  same transition as the one between $\theta_1$ and $\theta_2$ can be applied
  to $\theta'$. We show that this yields $\theta_3$.  As already mentioned,
  $\DataMonoid_3 = \DataMonoid'$.

  Let $H'$ be the new threads generated from $(q,\alpha,d)$ by applying this
  transition (i.e. the set of threads in the resulting extended configuration
  is $(\DataState' \setminus \set{(q,\alpha,d)}) \cup H'$). If the transition
  was a \transguess then we make sure that the guessed data value is the one
  that was guessed in $\theta_2$. It is therefore immediate to verify that
  $H=H'$ unless the transition was a $\transuniv$. In this latter case, assuming $\alpha = \opuniv(p)$, we have $H'=H\cup
  \set{(p,\top,e')}$. But we also have $(p,\top,e)\in\DataState_2$ because
  $e\in\data(\theta_1)$ and $\theta_1 \transuniv \theta_2$. As
  $\DataState_1(e)=\DataState_2(e)$ we have $(p,\top,e)\in\DataState_1$ and therefore
  $(p,\top,e')\in\DataState'$. In all cases we get  $(\DataState' \setminus
  \set{(q,\alpha,d)}) \cup H' = (\DataState' \setminus \set{(q,\alpha,d)}) \cup
  H$. Further:
  \begin{align*}
    (\DataState' \setminus \set{(q,\alpha,d)}) \cup H' &= (\DataState' \setminus \set{(q,\alpha,d)}) \cup H
    \tag{by the previous remark}\\
 &= \big( (\DataState_1
    \cup (\hat S'\times\set{e'})) \setminus \set{(q,\alpha,d)}
    \big) \cup H
    \tag{by \eqref{eq:moveinc:ds'}}\\
    &=(\DataState_1 \setminus \set{(q,\alpha,d)}) \cup H \cup (\hat S'\times\set{e'}) \tag{since $e' \neq d$}\\
    &=\DataState_2 \cup (\hat S'\times\set{e'}) \tag{by \eqref{eq:moveinc:ds2}}\\
    &=\DataState_2 \cup (\hat S\times\set{e'}) \tag{since $S=S'$}\\
    &=\DataState_3 . \tag{by \eqref{eq:moveinc:ds3}}
  \end{align*}
  Hence, we have that $\theta' \tranaconfe \theta_3$.

\item
  If $\DataState_1(e) \neq \DataState_2(e)$, we
  distinguish again between two possibilities depending on whether $d = e$ or not.
  \begin{itemize}
  \item Assume first that $d = e$. We are in a situation as
    the one depicted in Figure~\ref{fig:move-inc-forward}. 
\begin{figure}
    \centering
    \includegraphics[width=.75 \textwidth]{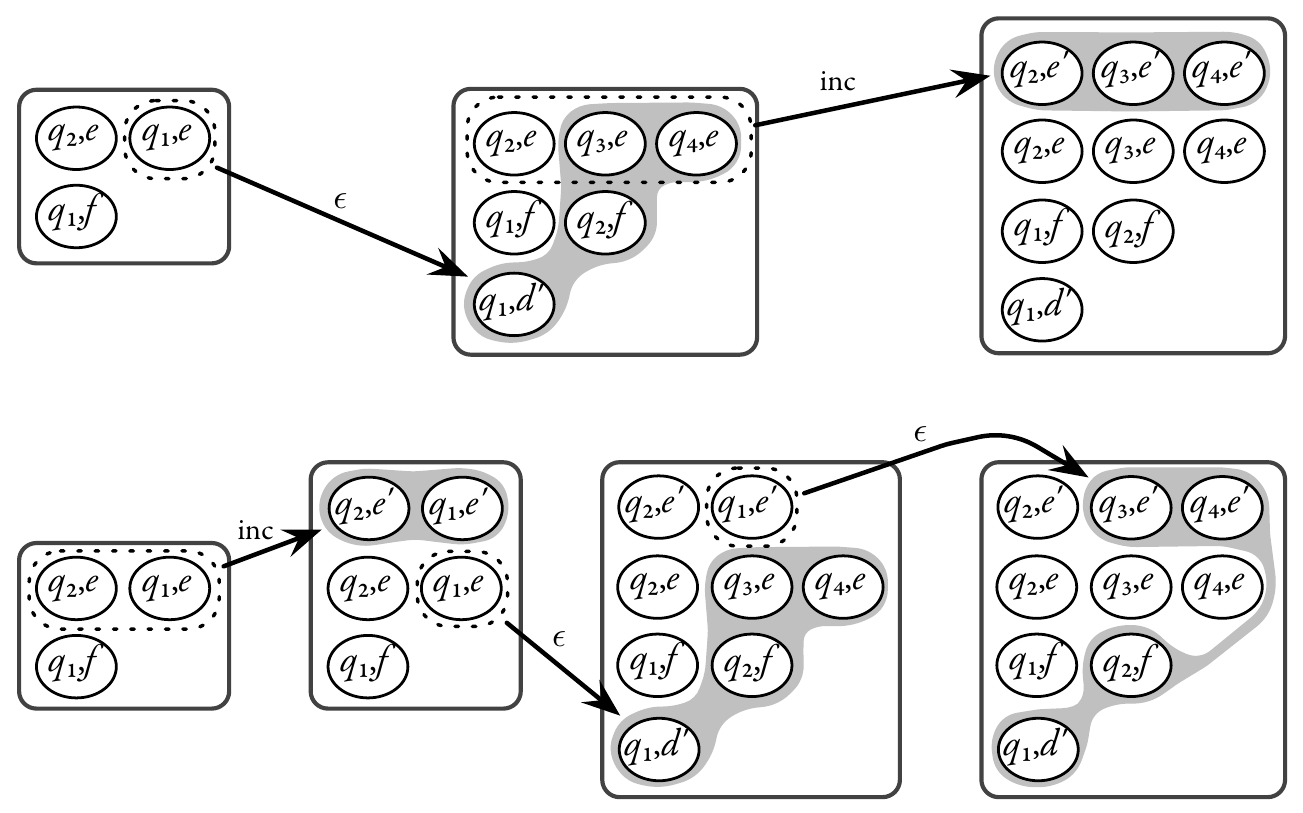}
    \caption{General idea of how to commute $\transpumpSX$ with $\tranaconfe$. In the figure we use a general $\tranaconfe$ which  creates new threads with the same data value and with fresh data value.}
    \label{fig:move-inc-forward}
  \end{figure}
  In this case we simply apply twice the same transition to $\theta'$, the
  first time using the thread $(q,\alpha,e)$ and the second time using
  $(q,\alpha,e')$ (in the case of $\transguess$ we guess twice the same data
  value). The resulting extended configuration is $\theta_3$.
\item The remaining situation is when $d \neq e$ and $\DataState_1(e) \neq
  \DataState_2(e)$. It could come from a transition $\transuniv$, $\transguess$
  or $\transstore$ (the other transitions only affect the data value $d$ and
  therefore $\DataState_1(e)=\DataState_2(e)$). 

  Suppose first that $\alpha =
  \opguess(p)$, and that it produces the thread $(p, \topg, e)$ as a result.
  We show that $\theta' \transguess \theta_3$.

  Notice that since $e$ is with a thread $(p, \topg)$, we have that $\hat S' =
  \hat S$, and we can apply the previous reasoning.  Indeed, by definition of
  $\transguess$ we have that the resulting $\DataState$ is
\begin{align*}
&  (\DataState' \setminus \set{(q,\opguess(p),d)}) \cup
  \set{(p,\topg,e)} \\
&=
((\DataState_1 \cup (\hat S'\times\set{e'})) \setminus \set{(q,\opguess(p),d)}) \cup
  \set{(p,\topg,e)} \tag{by \eqref{eq:moveinc:ds'}}\\
&=(\DataState_1 \setminus \set{(q,\opguess(p),d)}) \cup 
\set{(p,\topg,e)} \cup (\hat S'\times\set{e'}) \tag{since $e' \neq d$}\\
&= \DataState_2 \cup (\hat S'\times\set{e'})  \tag{by \eqref{eq:moveinc:ds2}}\\
&= \DataState_2 \cup (\hat S\times\set{e'})  \tag{since $\hat S' = \hat S$}\\
&= \DataState_3 \tag{by \eqref{eq:moveinc:ds3}}
.
\end{align*}
The fact that the resulting $\DataMonoid$ is $\DataMonoid_3$ is immediate. Hence, $\theta' \transguess \theta_3$.

\medskip

Suppose now that $\alpha = \opuniv(p)$.  We show that $\theta' \transuniv \theta_3$.

 By definition of $\transuniv$ we have that
 $S=S'\cup\set{(p,\top)}$. Furthermore we have:
\begin{align*}
&  (\DataState' \setminus \set{(q,\opuniv(p),d)}) \cup
  \set{(p,\top,d') : d' \in \data(\DataMonoid')} \\
&=
((\DataState_1 \cup (\hat S'\times\set{e'})) \setminus \set{(q,\opuniv(p),d)}) \cup
  \set{(p,\top,d') : d' \in \data(\DataMonoid')} \tag{by \eqref{eq:moveinc:ds'}}\\
&=(\DataState_1 \setminus \set{(q,\opuniv(p),d)}) \cup 
\set{(p,\top,d') : d' \in \data(\DataMonoid')} \cup (\hat S'\times\set{e'}) \tag{since $e' \neq d$}\\
&=(\DataState_1 \setminus \set{(q,\opuniv(p),d)}) \cup 
\set{(p,\top,d') : d' \in \data(\DataMonoid_1)} \cup \set{(p,\top,e')} \cup (\hat S'\times\set{e'}) \tag{since $e' \neq d$}\\
&= \DataState_2 \cup \set{(p,\top,e')} \cup (\hat S'\times\set{e'})  \tag{by
  definition of $\transuniv$}\\
&= \DataState_2 \cup (\hat S\times\set{e'})  \tag{since $S=S'\cup\set{(p,\top)}$}\\
&= \DataState_3 \tag{by \eqref{eq:moveinc:ds3}}
.
\end{align*}
Hence, $\theta' \transuniv \theta_3$.

The case of $\transstore$ is treated similarly. \qedhere
      \end{itemize}
\end{itemize}

\ 
\end{proof}

Finally we show:
\begin{lemma}\label{lemma-wsts-aut}
  Consider a \BUDTA \anAut and its associated transition system \anWSTSM. If \anWSTSM can reach
  an accepting extended configuration from the set of initial extended configurations
  then \anAut has an accepting run.
\end{lemma}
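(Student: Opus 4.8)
The plan is to invert the construction of Lemma~\ref{lemma-aut-wsta}: given a derivation witnessing that \anWSTSM{} reaches an accepting extended configuration from $\Theta_I^\eqconf$, I would reconstruct a data tree $\tT$ together with an accepting run of \anAut{} on $\tT$. This goes through two normalisations of the derivation and then an induction on its structure.

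First I would reshape the derivation. By \eqref{eq:form-of-derivation} it has the form $\big((\tranaconfe\midd\transpump)^*\transup(\transmerge)^*\big)^*(\tranaconfe\midd\transpump)^*$, and inside each maximal block $(\tranaconfe\midd\transpump)^*$ I would iterate Lemma~\ref{lem:move-inc-forward} to slide every \transpump{} step past the $\tranaconfe$ steps to its right (permuting mutually independent \transpump{} steps among themselves), so that afterwards every \transpump{} step is immediately followed by a \transup{} step, or belongs to the last block. Lemma~\ref{lem:move-inc-forward} may replace a configuration by an $\eqconf$-equivalent one and split one $\tranaconfe$ step into two, but since the transition relation is closed under data bijections the earlier part of the derivation can be renamed to match, and as the number of \transpump{} steps is invariant the rewriting terminates. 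I would also append one extra \transup{} step if necessary so that the reached accepting configuration has $r=\textit{true}$: since \transup{} requires $r=\textit{false}$ on its input, no \transup{} is applied once $r$ has become true, so the part of the derivation where $r=\textit{true}$ corresponds exactly to the root of $\tT$, and the $\optestroot/\optestnroot$ steps occurring there are consistent with that node being the root.

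Second, I would unfold the reuse of configurations so that the derivation becomes a finite tree $D$: its leaves carry initial extended configurations, a unary node is an $\tranaconfe$, \transpump{} or \transup{} step applied to its child, a binary node is a \transmerge{} of its two children, and its root carries the accepting configuration. By bottom-up induction on $D$ I would attach to the configuration $\theta$ at each node a data tree $\tT_\theta\in\Trees(\A\times\B\times\D)$ and an assignment $\rho_\theta$ of configurations to its positions so that: $\rho_\theta$ satisfies the run conditions at every non-root position; leaves carry $\{(q_0,e)\}$; $\DataMonoid_\theta$ is consistent with $\tT_\theta$ (so the label and data value of $\theta$ are those of the root, and $\DataMonoid_\theta$ determines which data values occur in $\tT_\theta$, since every node is reached from the root by a downward path); and $\DataState_\theta$ records the state of the ongoing computation at the root ($\bot$-threads not yet fired, test/action-threads mid-transition, $\topng/\topg$-threads already fired and bound for the parent). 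Each transition type is realised directly. An $\tranaconfe$ step (committing a root thread to a rule $\tau\in\delta$, checking a test, or executing an action) leaves $\tT_\theta$ unchanged and advances the root computation: the tests $\tup{\aMonoid}^{=}$, $\tup{\aMonoid}^{\neq}$ and their negations are validated by consistency of $\DataMonoid_\theta$, $\optestroot/\optestnroot$ by the value of $r$, $q/\bar q$ by the thread state; the actions $\opnothing, \opstore, \opguess$ and $\opaccept$ spawn or drop the obvious threads, and $\opuniv(p)$ spawns one thread per data value reachable from the root by a downward path, which by consistency is exactly $\data(\DataMonoid_\theta)=\data(\tT_\theta)$ ---matching \transuniv. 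A \transmerge{} step identifies the roots of $\tT_{\theta_1}$ and $\tT_{\theta_2}$ (which agree on label and data value) and concatenates their children lists; the merged root then collects the unions $\DataState_1\cup\DataState_2$ and $\DataMonoid_1\cup\DataMonoid_2$, exactly as \transmerge{} prescribes, the two subtrees sharing precisely the data values common to $\DataMonoid_1$ and $\DataMonoid_2$ as forced by the invariant. A \transup{} step adds a new root above $\tT_\theta$ with the chosen label $c$ and data value $d'$ and turns every $\topng/\topg$-thread of $\theta$ into a $\bot$-thread of the new root; normal form \ref{eq:semigroup-normal-form} ensures \eqref{eq:condition-Gamma-well-formed} survives. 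Finally a \transpump{} step, which after normalisation sits immediately before a \transup{} step and is grouped with it, corresponds to duplicating one immediate subtree of the root with a data-value renaming of the prescribed type before that new root is added. Applying the invariant at the root of $D$ and taking $\theta$ to be the last configuration along the derivation with $\DataState_\theta\neq\emptyset$ (reached just before the $\tranaconfe$ steps that empty it, all of which are \transaccept{} steps and tests at the root), those last steps exhibit, for every $(q,e)$ with $(q,\bot,e)\in\DataState_\theta$, a rule $(t,\opaccept)\in\delta$ with $\tT_\theta,\eps,(q,e)\models t$; hence $\rho_\theta$ is an accepting run of \anAut{} on $\tT := \tT_\theta$.

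The step I expect to be the main obstacle is the \transpump{} case: one must check that duplicating an immediate subtree of the root, while renaming one data value of type $(S,\prof)$ to a fresh value, really produces a legal run on the widened tree whose abstraction is exactly the target configuration ---in particular that the $\topg$-flagged threads are not copied (the new $\opguess$-copies re-guess the old value rather than the fresh one) and that the preconditions $|\typeof{\theta_1}^{-1}(S,\prof)|\geq1$ and $|\DataMonoid_1^{-1}(\prof)|\geq2$ are precisely what keeps the truth value of every already-performed or still-pending test unchanged (e.g. ``exactly one data value is reachable through $\aMonoid$''). Making the first normalisation fully rigorous (iterating Lemma~\ref{lem:move-inc-forward} and propagating the $\eqconf$-renamings backwards through the derivation) and carrying the in-progress thread bookkeeping through the induction are routine but tedious.
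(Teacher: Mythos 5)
Your overall architecture is the same as the paper's: reconstruct the tree and the run by induction on the derivation, reading \transmerge as identification of roots, \transup as adding a root, \transpump as duplication of subtrees, and using Lemma~\ref{lem:move-inc-forward} to normalise the placement of the \transpump steps so that they admit a tree-level interpretation. The bookkeeping invariant (consistency of $\Gamma$, $\Delta$ recording the in-progress threads at the root, the treatment of the final $\opaccept$ steps) is also essentially the paper's.

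There is, however, one concrete error: your normalisation goes in the wrong direction. Lemma~\ref{lem:move-inc-forward} rewrites $\theta_1 \tranaconfe \theta_2 \transpumpSX \theta_3$ into $\hat\theta_1 \transpumpSpX \theta' \tranaconfe (\tranaconfe) \,\theta_3$, i.e.\ it moves each \transpump step \emph{leftward}, past the $\tranaconfe$ steps preceding it; iterating it groups all \transpump steps at the \emph{beginning} of each block, immediately after the previous starting configuration (one in which every thread still carries $\bot$). The paper then interprets a \transpump applied to such a starting configuration as duplicating the current tree and identifying the roots, and handles the remaining $\tranaconfe^+\transup$ segment as a single up-move of the automaton. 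You instead propose to slide each \transpump \emph{rightward} so that it sits immediately before the next \transup. That rewriting is not what the lemma provides, and the rightward commutation $\transpump\,\tranaconfe \to \tranaconfe\,\transpump$ fails in general: if the $\tranaconfe$ step fires a thread $(q,\alpha,e')$ whose data value $e'$ is precisely the fresh value introduced by the preceding \transpump, that step cannot be performed before the \transpump (and firing the corresponding thread $(q,\alpha,e)$ on the original value first changes what the subsequent \transpump copies, e.g.\ a $\opguess$ leaves only a non-copied $\topg$ thread). So as written the first normalisation step does not go through. The fix is to reverse the direction — push the \transpump steps to the front of each block, where the configuration is starting — and then your tree-level interpretation of \transpump (duplication of the immediate subtrees with the prescribed renaming, $\topg$ threads not copied) applies verbatim; the rest of your argument is sound.
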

\begin{proof}
  An extended configuration $\theta=(\Delta,\Gamma,r, m)$ is called
  \emph{starting} if for all $(q,\alpha,d)$ in
  $\Delta$ we have $\alpha=\bot$. Note that by definition of the transition
  relations, a starting extended configuration can only be obtained from a non starting
  extended configuration via a $\transup$ transition. Moreover starting extended
  configurations are preserved only by $\transmerge$ and $\transpump$ transitions.

  We show by induction on the length of the derivation that for every starting
  reachable extended configuration $\theta$ (by reachable we mean such that
  $\Theta_I^\eqconf \tranAC^+ \theta$), there exists a tree $\tT_\theta$ and a run
  $\rho_\theta$ of $\anAut$ on $\tT_\theta$, such that $\Gamma$ is consistent
  with $\tT_\theta$ and $\rho(x)$ is $\Delta$ for the root $x$ of $\tT_\theta$.

  This is clearly the case for all extended configurations in $\Theta_I^\eqconf$.

  For the inductive argument it is useful to notice that if a starting
  extended configuration $\theta$  can be associated with a tree $\tT_\theta$ and a run
  $\rho_\theta$ of $\anAut$ on $\tT_\theta$ satisfying the inductive
  hypothesis then, for any bijection $h$ on the data values, $h(\theta')$, $h(\tT_\theta)$
  and $h(\rho_\theta)$ satisfy also the inductive hypothesis. 
  
    Assume $\theta_1$ and $\theta_2$ are both starting reachable extended
  configurations and that $\theta_1,\theta_2 \transmerge \theta_0$. By
  induction we have trees $\tT_{\theta_1}$ and $\tT_{\theta_2}$ and runs
  $\rho_{\theta_1}$ and $\rho_{\theta_2}$ satisfying the induction
  hypothesis. By definition of \transmerge, $\theta_1$ and $\theta_2$ have the
  same label and data value. Hence by consistency of $\DataMonoid_1$ and
  $\DataMonoid_2$, the roots of $\tT_{\theta_1}$ and $\tT_{\theta_2}$ have the
  same label and data value. Let $\tT$ be the tree constructed from the union
  of $\tT_{\theta_1}$ and $\tT_{\theta_2}$ by identifying their roots. We show
  that $\tT$ is the desired $\tT_{\theta_0}$. The reader can easily verify
  that $\rho$, constructed by taking the union of $\rho_{\theta_1}$ and
  $\rho_{\theta_2}$, is a run of $\anAut$ on $\tT$ and that $\DataMonoid_0$ is
  consistent with $\tT$.

  Assume now that $\theta$ is a starting reachable extended configuration and that $\theta \transpumpSX \theta'$. Let $e$ be the data value such that $(S,\prof)=\typeof{\theta}(e)$ and let $e'$ be the new data value added in $\theta'$. By induction we have a tree $\tT_\theta$ and a run $\rho_\theta$ satisfying the induction hypothesis. Notice that by definition of $\transpumpSX$ the data value $e$ duplicated cannot be the one of the root of $\tT_\theta$ (because the associated $\prof'$ occurs only once). Let $\tT'_\theta$ be the tree obtained from $\tT_\theta$ by replacing $e$ with $e'$. Let $\tT$ be the tree constructed from the union of $\tT_\theta$ and $\tT'_\theta$ by identifying their roots. We show that $\tT$ is the desired tree. The reader can easily verify that $\rho$, constructed by taking the union of $\rho_{\theta}$ and $\rho'_{\theta}$, is a run of $\anAut$ on $\tT$, where $\rho'_{\theta}$ is the copy of $\rho_\theta$ on $\tT'_\theta$ and that $\DataMonoid'$ is consistent with $\tT$.

  It remains to show that a starting extended configuration obtained via \transup
  corresponds to a real configuration of $\anAut$. Assume $\theta'$ is a
  reachable extended configuration and that $\theta$ is such that $\theta'
  \transup \theta$. By definition, all threads $(q,\alpha,d)\in\DataState'$ are
  such that $\alpha\in\set{\topng, \topg}$. Consider a derivation witnessing the fact that
  $\theta'$ is reachable. Let $\theta_1$ be a starting extended configuration in
  this derivation such that all other extended configurations between
  $\theta_1$ and $\theta'$ are not starting. Hence we have $\theta_1
  (\tranaconfe|\transpump)^+ \theta'$ (no \transup nor \transmerge can occur in
  this derivation as a \transmerge can only appear right after a \transup,
  because of the $m$ flag, and a \transup would derive a starting extended
  configuration).
 By Lemma~\ref{lem:move-inc-forward}, $\theta'$ can
  equivalently be derived from $\hat \theta_1\eqconf\theta_1$ using a derivation of the form
  $\transpump\!\!{}^* \tranaconfe^+$ (a sequence of $\transpump$ followed by a
  sequence of $\tranaconfe$). As \transpump preserves startingness, this shows that
  there is a reachable starting extended transition $\theta_2$ such that $\theta_2
  \tranaconfe^+ \theta'\transup\theta$. By induction hypothesis we have a tree
  $\tT_{\theta_2}$ and a run $\rho_{\theta_2}$ satisfying the induction
  hypothesis. Let $\tT$ be the tree constructed from $\tT_{\theta_2}$ by adding
  a new node, having for label and data value those of $\theta$, and a unique immediate
  subtree $\tT_{\theta_2}$. Let $\rho$ be constructed from $\rho_{\theta_2}$ as
  follows: If $x$ is a node of $\tT$ occurring in $\tT_{\theta_2}$ then
  $\rho(x)=\rho_{\theta_2}(x)$. If $y$ is the new root of $\tT$ then, for each
  thread $(p,\bot,d)\in\DataState_2$ there must be a transition $\transdelta$
  in the derivation from $\theta_2$ to $\theta'$ using that thread (otherwise
  that thread would never disappear and a \transup transition would not be
  applicable on $\theta'$). Let $\tau=(t,a) \in \delta$ be the corresponding
  transition of $\anAut$. Because of the consistency condition of
  $\DataMonoid_2$ on $\tT_{\theta_2}$ we have that $\tT_{\theta_2},x,(p,d)
  \models t$, where $x$ is the root of $\tT_{\theta_2}$ and we add to $\rho(y)$
  the effect of the action $a$. The reader can now easily verify that this
  gives the desired tree and run.
\end{proof}

By combining the previous Lemmas we immediately obtain the proof of
Proposition~\ref{prop-wsts-aut}. Hence, combining
Proposition~\ref{prop-wsts-aut} and Corollary~\ref{decid-wsts}
Theorem~\ref{thm-budta-decid} is proven.


 \section{Satisfiability of vertical {XP}ath.}\label{section-xpath}
\newcommand{\upa}{{\upw}}
\newcommand{\downa}{{\dow}}
\newcommand{\fkey}{\textit{dep}}
\newcommand{\xpathv}{\ensuremath{\xpath(\frak V,=)}\xspace}
\newcommand{\Nonoid}{\+M}
\newcommand{\aNonoid}{\nu}
\newcommand{\norphism}{g}

\newcommand{\llpar}{{\langle}\hspace{-.25em}{\langle}} 
\newcommand{\rrpar}{{\rangle}\hspace{-.25em}{\rangle}} 
\newcommand{\spsubf}{\msf{spsub}}

\newcommand{\moneps}{\varepsilon}

\newcommand{\qtyupdow}[1]{f_{#1}}%
\newcommand{\apath}{\pi}%


In order to conclude the proof of Theorem~\ref{thm:rvxpath-decidable} it
remains to show that \BUDTA can capture emptiness of $\rxpath(\frak V, =)$ expressions.
Given a formula $\eta$ of $\rxpath(\frak V,=)$, we say that a $\budta$ $\anAut$
is \emph{equivalent} to $\eta$ if for every data tree $\tT$, $\tT$ is accepted
by \anAut if{f} $\dbracket{\eta}^{\tT} \neq \emptyset$. The main contribution
of this section is the following.

\begin{proposition}\label{prop:regvxpath-2-buda}
  For every $\eta \in \rxpath(\frak V, =)$ there exists an equivalent
 $\anAut \in \budta$ computable from $\eta$.
\end{proposition}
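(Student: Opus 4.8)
The plan is to build $\anAut$ by a compositional translation driven by the syntax of $\eta$. First I would reduce to the case that $\eta$ is a node expression, since a path expression $\alpha$ has $\dbracket{\alpha}^\tT\neq\emptyset$ iff the node expression $\tup{\alpha}$ does. I would then use the internal alphabet $\B$ to store, at every node, a ``type'': one bit for each member of a finite Fischer--Ladner-style closure $\mathrm{cl}(\eta)$, consisting of all node subexpressions of $\eta$ together with finitely many auxiliary downward node expressions produced by the normal form below. The automaton $\anAut$ guesses this labeling, and in parallel runs one small $\budta^\eps$ per element of $\mathrm{cl}(\eta)$ checking that its bit is correct everywhere; since $\budta^\eps$ is no more expressive than $\budta$ (Proposition~\ref{prop:budaeps-2-buda}) and $\budta$ is effectively closed under intersection (Proposition~\ref{prop:buda-iner-union}), intersecting all these with a $\budta$ that accepts iff the bit of $\eta$ is set at some node yields the required $\budta$. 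After this reduction every test $[\varphi]$ occurring inside a path expression may be taken to be a test for an internal symbol $b_\varphi$.

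The key structural ingredient, and the step I expect to be the main obstacle, is a \emph{normal form} for vertical path expressions: over all data trees, every path expression is equivalent to a finite union $\bigcup_i \pi^\uparrow_i\,\pi^\downarrow_i$, where each $\pi^\uparrow_i$ is built only from $\upw$, internal-label tests, union, composition and star (an \emph{upward} expression), and each $\pi^\downarrow_i$ similarly from $\dow$ (a \emph{downward} expression). I would prove this by induction on path expressions: upward expressions and downward expressions are each closed under union, composition and star, so everything reduces to collapsing a downward expression immediately followed by an upward one. For this one uses that $\dow[b]\,\upw$ is equivalent to the test $[\,\tup{\dow[b]}\,]$, and more generally that $\pi^\downarrow\pi^\uparrow$ is equivalent to a finite union of an upward expression and a downward expression, each carrying extra node tests of the form $\tup{\pi^\downarrow_0}$ for downward $\pi^\downarrow_0$ derived from $\pi^\downarrow$ (these are the auxiliary members of $\mathrm{cl}(\eta)$); this identity is then propagated through compositions and through the unfolding $(\pi^\uparrow\pi^\downarrow)^* \equiv \eps \cup \pi^\uparrow(\pi^\downarrow\pi^\uparrow)^*\pi^\downarrow$. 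The delicate point is termination of the rewriting, for which one needs an appropriate well-founded measure on path expressions.

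With the path expressions in this form, the \emph{positive} direction of each type check is easy precisely because $\anAut$ is bottom-up: when a thread is at a node $w$ the run has already been defined on the subtree of $w$, so every downward property of $w$ is testable through the primitives $\tup{\mu}$, $\tup{\mu}^=$, $\tup{\mu}^{\neq}$, after extending the semigroup $\Monoid$ so that via $\morphism$ it recognises all the regular languages over $\A\times\B$ defined by the finitely many downward expressions $\pi^\downarrow_i$; in particular $w\models\tup{\pi^\downarrow_i}$ is a downward fact that can itself be recorded in $\B$ and checked bottom-up. An upward expression $\pi^\uparrow_i$, viewed as a regular language over $\A\times\B$, is checked by a thread that starts at the node to be tested, moves up toward the root with $\opnothing$ actions simulating an NFA for $\pi^\uparrow_i$ one $\upw$-step at a time (using $\eps$-steps, available in $\budta^\eps$, for the NFA's test moves), and $\opaccept$s when the NFA is final at a node carrying the relevant $\B$-bit. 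Thus $\tup{\alpha}$ at $x$ is verified by one such up-thread reaching an ancestor $w$ with the recorded bit $\tup{\pi^\downarrow_i}$. For $\tup{\alpha=\beta}$ the thread first $\opguess$es the common data value $d$ into the register and, by an action $\opnothing(p_\alpha)\wedge\opnothing(p_\beta)$ duplicating $d$, sends two up-threads, one for a decomposition of $\alpha$ and one for a decomposition of $\beta$, each performing $\tup{\mu}^=$ at its peak; for $\tup{\alpha\neq\beta}$ the $\beta$-thread performs $\tup{\mu}^{\neq}$ instead, so a single register suffices.

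The \emph{negative} checks are where real care is needed, and this is the remaining hard part. Here one exploits that the ancestors of any node form a single branch. To verify $\neg\tup{\alpha}$ at $x$ one runs a single up-thread from $x$ that simultaneously tracks the subset-determinisations of the NFAs of all the $\pi^\uparrow_i$, rejecting if any of them ever reaches an accepting subset at a node whose $\B$-bit $\tup{\pi^\downarrow_i}$ is set. For $\neg\tup{\alpha=\beta}$ at $x$ one again sends an up-thread that simultaneously tracks, for every pair $(i,j)$, the automata of $\pi^\uparrow_i$ and $\pi^\uparrow_j$; having guessed the two peaks it reaches the higher one $w'$, where it uses $\opuniv$ to spawn a thread for every data value $d$ in the subtree of $w'$ (which, since the lower peak $w$ lies in that subtree, includes every value reachable downward from $w$ or from $w'$), and each such thread checks, using a monoid element accumulated while travelling from $w$ up to $w'$, that $d$ is not simultaneously reachable by $\pi^\downarrow_i$ from $w$ and by $\pi^\downarrow_j$ from $w'$; this uses only $\opuniv$ and the complementary tests $\overline{\tup{\mu}^=}$, never the (undecidable) dual of $\opuniv$. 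The case $\neg\tup{\alpha\neq\beta}$ is analogous. Combining the positive and negative constructions for every element of $\mathrm{cl}(\eta)$ and intersecting them as above yields $\anAut$, which is manifestly computable from $\eta$.
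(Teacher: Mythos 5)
Your overall architecture is sound and in fact closely mirrors the paper's: guess an internal labelling that records downward/loop information at every node and verify it with auxiliary bottom-up threads; simulate the upward part of a path by a thread climbing towards the root while doing regular bookkeeping; discharge the downward part through the tests $\tup{\mu}^{=}$, $\tup{\mu}^{\neq}$ and their negations; use $\opguess$ for positive data comparisons, and $\opuniv$ together with the complemented downward tests for the negated ones. The paper's construction is exactly this, with states indexed by a path subexpression and a monoid element accumulating the history of the climb, and pair-states for the negated comparisons.

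The genuine gap is the step you yourself flag as the main obstacle: the normal-form lemma reducing every vertical path expression to a finite union $\bigcup_i \pi^\uparrow_i\pi^\downarrow_i$. The statement is true, but the rewriting you sketch does not terminate as described. Collapsing $\pi^\downarrow\pi^\uparrow$ yields a union $U\cup D$ of upward and downward expressions, and the star $(U\cup D)^*$ arising inside the unfolding $(\pi^\uparrow\pi^\downarrow)^*\equiv\eps\cup\pi^\uparrow(\pi^\downarrow\pi^\uparrow)^*\pi^\downarrow$ rewrites (for instance via $(U\cup D)^*=D^*(UD^*)^*$) back into a star of up-then-down blocks of the very shape you started from, so no obvious measure decreases; in addition, each collapse manufactures new auxiliary node tests $\tup{\pi^\downarrow_0}$, so the closure you want to tabulate in $\B$ is not visibly finite. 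The paper avoids syntactic rewriting altogether: it fixes once and for all a finite monoid recognising all path subexpressions of a given nesting level, lets the internal label $\bB_i(x)$ record the \emph{set of monoid elements realised by non-ascending looping walks at $x$} (a bottom-up fixed point verified by dedicated threads, stratified by nesting level to break the circularity between the internal labels and the node tests they encode), and obtains your decomposition only semantically, in the correctness proof, by peeling an arbitrary walk from $x$ into a maximal non-ascending loop at $x$ followed by a step to the parent and inducting on the depth of $x$. If you replace your rewriting lemma by this monoid-indexed bookkeeping, the rest of your construction goes through essentially as you describe it.
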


We first give the general idea of the construction. From
$\eta\in\rxpath(\frak V,=)$ we actually compute an equivalent
$\budta^\eps$. By Proposition~\ref{prop:budaeps-2-buda} this is enough
to prove the result. Note that $\budta^\eps$ can easily simulate any
positive test $\tup{\alpha = \beta}$ or $\tup{\alpha \neq \beta}$ of
$\rxpath(\frak V, =)$ using a $\opguess$ action and tests of the form
$\tup{\rexp}^=$ and $\tup{\rexp}^{\neq}$. In the following examples, we disregard the internal alphabet $\B$ in the expressions $\rexp$ for clarity.  Consider, for example, the property $\tup{\rtdow[a] \neq \upw\dow[b]}$, which states that there is a descendant labeled $a$ with a different data value than a sibling labeled $b$. A $\budta^\eps$ automaton can test this property as follows.
\begin{enumerate}
\item It guesses a data value $d$ and stores it in the register.
\item It tests that $d$ can be reached by $\rtdow[a]$ with a test
  $\tup{\A^*a}^=$.
\item It moves up to its parent.
\item It tests that a data value different from $d$ can be reached in one of its
  children labeled with $b$, using the test $\tup{\A b}^{\neq}$.
\end{enumerate}

On the other hand, the simulation of negative tests ($\neg\tup{\alpha=\beta}$
or $\neg\tup{\alpha\neq\beta}$) is more complex as $\budta^\eps$ is not closed
under complementation. Nevertheless, the automaton has enough universal
behavior (in the operations $\opuniv$, $\overline{\tup{\rexp}^=}$ and
$\overline{\tup{\rexp}^{\neq}}$) in order to do the job. Consider for
example the formula $ \lnot\tup{\rtupw[b]\dow[a] = \rtdow[c]}$, that states
that no data value is shared between a descendant labeled $c$ and any $a$-child
of a $b$-ancestor. To test this property, the automaton behaves as follows.
\begin{enumerate}
\item It creates one thread in state $q$ for every data value in the subtree,
  using $\opuniv(q)$.
\item A thread in state $q$ tests whether the data value of the register
  is reachable by $\rtdow[c]$, using a test $\tup{\A^*c}^=$. If the test is
  successful, it changes to state $p$, otherwise it stops and accepts.
\item A thread in state $p$ moves up towards the root, and each time it
  finds a $b$, it tests that the currently stored data value cannot be reached
  by $\dow[a]$. This is done with a test of the kind
  $\overline{\tup{b a}^{=}}$.
\end{enumerate}

This is essentially what we do. As usual the details are slightly more
complicated. In particular we will have to deal with more complicated regular
expressions involving possibly complex node expressions. As our
automaton is bottom-up, we will need to compute all loops within a
subtree. We will use the internal alphabet \B for this purpose.

\smallskip

\begin{proof}[Proof of Proposition~\ref{prop:regvxpath-2-buda}]
  Let $\eta$ be a node expression in $\rxpath(\frak V,=)$. We construct a
  $\budta^\eps$ $\cl{A}_\eta$ that tests whether $\eta$ holds at all the leaves of the tree. Note that this is without any loss of generality, since it is then easy to test any formula $\eta$ at the root with $\tup{\rtupw[\lnot\tup{\upw} \land \eta]}$.

  We denote by $\nsubf(\eta)$ the node subformulas of $\eta$, and by
  $\psubf(\eta)$ the path subformulas of $\eta$. For any $\varphi \in
  \nsubf(\eta)$ we denote by $\overline\varphi$ its \emph{simple negation},
  that is, $\overline \varphi = \psi$ if $\varphi$ is of the form $\lnot \psi$,
  and $\overline \varphi = \lnot \varphi$ otherwise. By $\nsubf^\lnot(\eta)$ we denote
  the closure of $\nsubf(\eta)$ under simple negations.

  For technical reasons, we distinguish between the \emph{nesting levels} of
  the formulas in $\nsubf(\eta)$. Node expressions of  nesting
  level $0$ are those testing node labels and any boolean combination of those. Node expressions of nesting level~$i+1$ are those of
  nesting level~$i$ plus those the form $\tup{\alpha=\beta}$ or
  $\tup{\alpha\neq\beta}$, and any boolean combination of them, where $\alpha$ and $\beta$ are path expressions using only node
  subexpressions of nesting level~$i$. We denote by $\nsubf_i(\eta)$ the
  subset of $\nsubf^{\lnot}(\eta)$ containing node expressions of nesting level~$i$.
  Similarly we denote by $\psubf_i(\eta)$ the path expressions of $\psubf(\eta)$
  using only formulas in $\nsubf_i(\eta)$ as node subexpressions. Note that the
  maximal nesting level $n$ is bounded by $|\eta|$.

  For each $i\leq n$, consider now the finite alphabet
\[ \A_{\eta,i} = \set{\upw, \dow, [\varphi] \mid \varphi \in \nsubf_i(\eta)} .\]
Every path expression $\alpha \in \psubf_i(\eta)$ can then be interpreted as a
regular expression over $\A_{\eta,i}$, and every word $w \in \A_{\eta,i}^*$ can be
interpreted as a path expression.

A \emph{path} $\apath$ of $\tT$ is a non-empty string of node positions of
$\tT$ (\ie, $\apath \in \tpos(\tT)^+$) so that every two consecutive elements
of $\apath$ are in a parent/child relation (\ie, one is the parent of the other). A path $\apath$ is \emph{looping}
if the first and last elements are the same, and it is \emph{non-ascending} if
each of its elements is either a descendant of the first element or equal to it.  We say that a
path $\apath$ of $\tT$ \emph{verifies} $w \in \A_{\eta,i}^*$ if $\apath$ behaves
according to the sequence of letters of $w$. More formally this means:
\begin{itemize}
\item $w=\epsilon$ and $|\apath|=1$;
\item $w=\upw w'$, $\apath = u v \apath'$ and $v$ is the parent of $u$ in
  $\tT$ and $v \apath'$ verifies $w'$;
\item $w=\dow w'$, $\apath = u v \apath'$ and $v$ is a child of $u$ in $\tT$ and $v \apath'$ verifies $w'$; or
\item $w=[\varphi] w'$, $\apath = u \apath'$ and $u \in \dbracket{\varphi}^\tT$ and $\apath$ verifies $w'$.
\end{itemize}

By the characterizations of regular languages, for each $i\leq n$, there exists
a finite monoid\footnote{Unlike in Section~\ref{sec:discussion-BUDA} we work
  here with an
  automata model with $\epsilon$ transitions. Therefore it is more convenient
  to use monoids instead of semigroups.} $\Nonoid_i$ and a
homomorphism $\norphism_i: \A_{\eta,i}^* \to \Nonoid_i$ such that for every
$\alpha \in \psubf_i(\eta)$ there is a set $S_\alpha \subseteq \Nonoid_i$ so
that $w \in \A_{\eta,i}^*$ is recognized by $\alpha$ if{f} $\norphism_i(w) \in
S_\alpha$. Let us denote by $1_{\Nonoid_i}$ the neutral element of $\Nonoid_i$
and by $\aNonoid_i, \aNonoid'_i$ the elements of $\Nonoid_i$.

The internal alphabet of $\anAut_\eta$ is $\B =
\subsets(\Nonoid_0)\times\cdots\times\subsets(\Nonoid_n)$.  Intuitively,
$\anAut_\eta$ accepts trees $\tT \otimes \bB$ so that for each $i\leq n$ and
any node $x$, the $i^{th}$ component of $\bB(x)$, denoted $\bB_i(x)$ in the
sequel, contains the set of all $\aNonoid_i \in \Nonoid_i$ so that there is $w
\in \A_{\eta,i}^*$ where
\begin{enumerate}
\item\label{cond:internal:1}  $\norphism_i(w) = \aNonoid_i$, and
\item\label{cond:internal:2} there is a non-ascending looping path $\apath$ of $\tT$ so that $\apath$ starts and ends in $x$ and  verifies $w$.
\end{enumerate}
In other words, $\bB(x)$ contains all the information about the non-ascending
looping paths at $x$ and one of the chief tasks of $\anAut_\eta$ is to ensure
that the $\bB(x)$ are properly set. For this, we have in the set of states $Q$
of $\anAut_\eta$ a state $\llpar \varphi \rrpar$ for any
$\varphi\in\nsubf^\lnot(\eta)$. We will design $\anAut_\eta$ such that in an
accepting run, if a thread in state $\llpar \varphi \rrpar$ is started at a node
$x\in\tT$ then $x \in \dbracket{\varphi}^\tT$.
When this is the case, properties \ref{cond:internal:1} and
\ref{cond:internal:2} are enforced by starting a thread at each leaf of $\tT$
with a state $q^i_\B$. A thread in state $q^i_\B$ moves up
in the tree while performing the following tests and actions at any node $x$,
where $\Lambda_i(x)$ is the set of all $\varphi\in\nsubf_i(\eta)$ such that
$x\in\dbracket{\varphi}^\tT$:
\begin{itemize}
\item If $x$ is a leaf, then $\bB_i(x)$ is the submonoid of $M_i$ generated by $\norphism_i(\Lambda_i(x))$. This property can be enforced by guessing a maximally consistent set $L_i \subseteq \nsubf_i(\eta)$ of formulas that hold at the node, testing whether $\bB_i(x)$ has the desired form (\ie, that it is the submonoid generated by $\norphism_i(L_i)$), and verifying that $L_i = \Lambda_i(x)$ by starting a thread with state $\llpar \varphi \rrpar$ for every $\varphi \in L_i$.

\item If $x$ is not a leaf then $\bB_i(x)$ is the submonoid of $M_i$ generated by $\norphism_i(\Lambda_i(x))\cup S_i$ where $S_i=\bigcup_{y \text{ child of } x} \norphism_i(\dow) \bB_i(y) \norphism_i(\upw)$.  This is done by guessing $L_i$ and $S_i$, performing the same tests and actions as above concerning $L_i$ and testing that $S_i$ is correct by testing the internal labels of the children of the current nodes using tests of the form $\tup{(\A\times\B) \conc (\A\times \set b)}$ and $\overline{\tup{(\A\times\B) \conc (\A\times \set b)}}$ for appropriate $b\in\B$.
\end{itemize}

\medskip

Thus, the initial state $q_0$ of $\anAut_\eta$ will simply create $n+2$  threads,  with states $q^i_\B$ for every $i$, and one with state $\llpar \eta \rrpar$.

Intuitively,  the automaton that verifies $\varphi$ starts with one thread in state $\llpar \varphi \rrpar$ at every leaf. The boolean connectives $\land, \lor$ are treated by the alternation/nondeterminism. If $\varphi$ is $\tup{\alpha  = \beta}$, the automaton guesses the witness data value and verifies that we can reach that value through $\alpha$ (resp.\ $\beta$) by going to a state $\llpar \alpha \rrpar^{=}_{1_\Nonoid}$. To verify this, the state $\llpar \alpha \rrpar^{=}_{1_\Nonoid}$ checks that either the data in the register is reachable through $\alpha$ in the subtree and through a test $\tup{e_\alpha}^=$ for a suitable regular expression $e_\alpha$, or it is elsewhere as a result of $\alpha$ starting with an upward axis, and the automaton recursively moves up to some $\llpar \alpha \rrpar^{=}_{\aNonoid}$ remembering the `history' of labels when going up in $\aNonoid$. Tests of the form $\tup{\alpha  = \beta}$ are treated similarly. Finally, for a test $\lnot \tup{\alpha = \beta}$ (resp.\ $\lnot \tup{\alpha \neq \beta}$), the automaton uses the power of unbounded alternation: it creates a new thread for each data value in the subtree reachable through $\alpha$ and passes the control to the states $\llpar \beta \rrpar_{\aNonoid}^{\lnot =}$ (resp.\ $\llpar \beta \rrpar_{\aNonoid}^{\lnot \neq}$), which are in charge of testing that no node can be reached through $\alpha$ with $=$ (resp.\ $\neq$) data value. Again, this has to be repeated for every ancestor as well, taking into account the type of the path in $\aNonoid$. Symmetrical conditions are also requested on $\beta$.

More concretely, the automaton $\anAut_\eta$ uses states of the from $\llpar \psi \rrpar$ or
$\llpar \alpha \rrpar^\odot_\aNonoid$, where $\psi$ is a subformula of
$\varphi$, $\alpha$ is a path expression of $\varphi$, and $\odot \in \set{=,
  \neq, \lnot\!\!=, \lnot\!\!\neq}$ and $\aNonoid \in \bigcup_i \Nonoid_i$.  It now
remains to explain how $\anAut_\eta$ can test whether
$x\in\dbracket{\varphi}^\tT$ by starting a thread with state $\llpar \varphi
\rrpar$ at node $x$. We explain this by induction on the structure of
$\varphi$. Using alternation of the automata model we can easily simulate
disjunctions and conjunctions. Testing node labels is also a simple
task. Altogether it remains to explain how formulas of the form $\tup{\alpha =
  \beta}$ and $\tup{\alpha \neq \beta}$, or their negations can be tested by
$\anAut_\eta$. In order to do so, for any $i\leq n$, $\aNonoid_i \in
\Nonoid_i,\alpha,\beta \in \psubf_i(\eta), \circledast \in
\set{=,\not=,{\lnot}{=},{\lnot}{\not=}},\odot \in
\set{{\lnot}{=},{\lnot}{\not=}}$, we have in $Q$ the states $\llpar \alpha
\rrpar^{\circledast}_{\aNonoid_i}$ and $\llpar \alpha, \beta
\rrpar^{\odot}_{\aNonoid_i}$, where $\alpha, \beta \in \psubf_i(\eta)$.

  We say that a thread $(q,d) \in Q\times\D$ has
an \emph{accepting run from $y$} if there is an accepting run $\rho$ where
instead of requiring that for every leaf $x$ of $\tT$, $\rho(x)$ is initial, we
require that 
\begin{itemize}
\item for every leaf $x$ of $\tT$, $\rho(x)$ contains a thread in state
  $q^i_\B$, where $i$ is the maximum nesting level of the formulas in the state
  $q$ and 
\item $\rho(y)$ contains the thread $(q,d)$.
\end{itemize}
The transitions of $\anAut_\eta$ will be built so that the following conditions
are met, for every $d \in \D$.
\begin{enumerate}[label=(\alph*),ref=\alph*]
\item\label{cond:run:a} A thread $(\llpar \varphi \rrpar,d)$ has an accepting
  run from $x$ implies $x \in \dbracket{\varphi}^\tT$.
\item\label{cond:run:b} A thread $(\llpar\alpha\rrpar^=_{\aNonoid_i},d)$
  (resp. $(\llpar\alpha \rrpar^{\neq}_{\aNonoid_i},d)$) has an accepting run from
  $x$ implies there is a path $\apath$ and a word $w\in\A_{\eta,i}^*$ so that
  \begin{itemize}
  \item $\apath$ starts at $x$ and ends at a node with
    the same (resp. different) data value as $d$, and
  \item $\apath$ verifies $w$ and $\aNonoid_i\conc \norphism_i(w) \in S_\alpha$.
  \end{itemize}
\item\label{cond:run:c} A thread $(\llpar\alpha\rrpar^{\lnot =}_{\aNonoid_i},d)$
  (resp. $(\llpar\alpha \rrpar^{\lnot \neq}_{\aNonoid_i},d)$) has an accepting
  run from $x$ implies there is no path $\apath$ and no word $w\in\A_{\eta,i}^*$ so that
\begin{itemize}
  \item $\apath$ starts at $x$ and ends at a node with
    the same (resp. different) data value as $d$, and
  \item $\apath$ verifies $w$ and $\aNonoid_i\conc \norphism_i(w) \in S_\alpha$.
  \end{itemize}
\item\label{cond:run:d} A thread $(\llpar\alpha,\beta\rrpar^{\lnot
    =}_{\aNonoid_i},d)$ (resp. $(\llpar\alpha,\beta\rrpar^{\lnot
    \neq}_{\aNonoid_i},d)$) has an accepting run from $x$ implies there are
  no paths $\apath, \apath'$ and no words $w,w'\in\A_{\eta,i}^*$ so that
\begin{itemize}
\item both $\apath$ and $\apath'$ start at $x$ and end at nodes with the same
  (resp. different) data value, and
\item $\apath$ verifies $w$ where $\aNonoid_i\conc \norphism_i(w) \in S_\alpha$,
  and $\apath'$ verifies $w'$ where $\aNonoid_i\conc \norphism_i(w') \in S_\beta$.
  \end{itemize}
\end{enumerate}

Note that condition~\eqref{cond:run:a} is enough to conclude the correctness of
the construction. The other conditions are here to help enforcing this
condition.

It now remains to set the transition function of $\anAut_\eta$ in order to
ensure all the properties stated above. First note the following 
observation.  For every $i\leq n$ and $\aNonoid_i \in \Nonoid_i$ and $\alpha \in \psubf_i(\eta)$
the set of all words $(a_1,b_1), \dotsc, (a_m,b_m)$ of $(\A \times \B)^+$ such
that there are $\aNonoid_{i,1}\in b_1, \dotsc, \aNonoid_{i,m} \in b_m$ with \[\aNonoid_i
\conc (\aNonoid_{i,1}\conc \norphism_i(\dow) \conc \aNonoid_{i,2} \conc
\norphism_i(\dow)\conc \dotsb \conc \aNonoid_{i,m-1} \conc \norphism_i(\dow)\conc
\aNonoid_{i,m}) \in S_\alpha\] is a regular language and we denote by
$\rexp(\alpha,\aNonoid_i)$ the corresponding regular expression.

We now describe the behavior of each thread at a node $x$ depending on
its state.  In this description we assume that $\alpha$ and $\beta$ are in
$\psubf_i(\eta)$.

\begin{description}
\item [$\bullet$ $\llpar \tup{\alpha = \beta} \rrpar$] In this case $\anAut_\eta$ guesses
  a data value, stores it in its register using $\opguess$ and continues the
  execution with both states $\llpar \alpha\rrpar^=_{1_{\Nonoid_i}}$ and $\llpar
  \beta \rrpar^=_{1_{\Nonoid,i}}$, that will test if there exist two nodes
  accessible by $\alpha$ and $\beta$ respectively such that both carry the data
  value of the register.

\item [$\bullet$ $\llpar \tup{\alpha \neq \beta}\rrpar$] Similarly as above,
  $\anAut_\eta$ guesses a data value, stores it in its register and continues
  the execution with both states $\llpar \alpha\rrpar^=_{1_{\Nonoid_i}}$ and $\llpar
  \beta \rrpar^{\neq}_{1_{\Nonoid_i}}$, which are responsible of testing that there
  is a $\alpha$ path leading to a node with the same data value as the one in
  the register and a $\beta$ path leading to a node with a different data
  value.

\item [$\bullet$ $\llpar \alpha \rrpar^=_{\aNonoid_i}$  or $\llpar \beta
  \rrpar^{\neq}_{\aNonoid_i}$] We denote by $\odot$ the symbol $=$ or
  $\neq$ occurring in superscript. In this case $\anAut_\eta$ chooses
  non-deterministically between one of the following actions.
  \begin{itemize}
  \item It checks that  the
    required data value is already in the subtree, making the test
    $\tup{\rexp(\alpha,\aNonoid_i)}^\odot$.

  \item It moves up and switches state to $\llpar \alpha
    \rrpar^\odot_{\aNonoid_i\conc \aNonoid'_i \conc\norphism_i(\upw)}$ for some $\aNonoid'_i\in\bB_i(x)$.
  \end{itemize}

\medskip

\item [$\bullet$ $\llpar \lnot \tup{\alpha = \beta} \rrpar$ or $\llpar \lnot
  \tup{\alpha \neq \beta} \rrpar$] We denote by $\odot$ the symbol $=$ or
  $\neq$ occurring in the middle. In this case $\anAut_\eta$ continues the
  execution with state $\llpar \alpha, \beta \rrpar^{\lnot \odot}_{1_{\Nonoid_i}}$.

\item [$\bullet$ $\llpar \alpha, \beta \rrpar^{\lnot =}_{\aNonoid_i}$ or $\llpar
  \alpha, \beta \rrpar^{\lnot \neq}_{\aNonoid_i}$] We denote by $\odot$ the symbol
  $=$ or $\neq$ occurring in superscript. In this case $\anAut_\eta$ performs
  all the following actions using alternation:
  \begin{itemize}
  \item If the test $\optestnroot$ succeeds, it moves up, and creates a thread in state $\llpar  \alpha, \beta
  \rrpar^{\lnot \odot}_{\aNonoid_i\conc\aNonoid'_i\conc \norphism_i(\upw)}$ for each $\aNonoid'_i\in\bB_i(x)$.
\item For every data value $d$ in the subtree that can be reached via
  $\rexp(\alpha,\aNonoid_i)$ it moves to state $\llpar \beta \rrpar^{\lnot
    \odot}_{\aNonoid_i}$ with data value $d$. This can be achieved by performing a
  $\opuniv$ operation and then by choosing non-deterministically to perform one
  of the following transitions
    \begin{itemize}
    \item test $\tup{\rexp(\alpha,\aNonoid_i)}^=$ and move to state $\llpar \beta
      \rrpar^{\lnot \odot}_{\aNonoid_i}$, or
    \item test $\overline{\tup{\rexp(\alpha,\aNonoid_i)}^=}$ and accept.
    \end{itemize}
  \item Analogously, for every data value $d$ in the subtree that can be
    reached via $\rexp(\beta,\aNonoid_i)$, it moves to state $\llpar \alpha
    \rrpar^{\lnot \odot}_{\aNonoid_i}$ with data value $d$.
  \end{itemize}

\item [$\bullet$ $\llpar \alpha \rrpar^{\lnot =}_{\aNonoid_i}$ or $\llpar \alpha
  \rrpar^{\lnot \neq}_{\aNonoid_i}$] We denote by $\odot$ the symbol $=$ or $\neq$
  occurring in superscript. In this case $\anAut_\eta$ performs all the
  following actions using alternation:
\begin{itemize}
\item if the test $\optestnroot$ succeeds, for all $\aNonoid'_i\in\bB_i(x)$, it
  starts a new thread in state to $\llpar \alpha \rrpar^{\lnot
    \odot}_{\aNonoid_i\conc\aNonoid'_i\conc\norphism_i(\upw)}$ at the parent of
  the current node.
\item tests that
  $\overline{\tup{\rexp(\alpha,\aNonoid_i)}^\odot}$ holds.
\end{itemize}

\end{description}

\subsection*{Correctness}

We show, for every position $x$ and state that conditions
\eqref{cond:run:a}--\eqref{cond:run:d} hold. We proceed by induction on the
nesting level of the expressions involved. Recall that once the node
expressions of nesting level $i$ are correctly enforced by a thread of the form
$\llpar \varphi \rrpar$ then the behavior of $\anAut_\eta$ enforces that
$\bB_i(x)$ contains the correct information for all $x$ (i.e verify
\ref{cond:internal:1} and \ref{cond:internal:2}). Therefore
$\rexp(\alpha,\aNonoid_i)$ does find a downward path evaluating to $\aNonoid'_i$
such that $\aNonoid_i\conc\aNonoid'_i \in S_\alpha$.

The base case is the nesting level $0$. At this level node expressions only
test the labels of the current node and this is exactly what the automaton
does. Hence $\bB_0(x)$ verifies \ref{cond:internal:1} and
\ref{cond:internal:2}.

We now assume a correct behavior for nesting level $i$ and show the same for
level $i+1$. For this we need to show that~\eqref{cond:run:a} holds for
$\varphi\in\nsubf_{i+1}(\eta)$ and that~\eqref{cond:run:b}--\eqref{cond:run:d} holds
for $\alpha,\beta\in\psubf_{i}(\eta)$ and $\aNonoid_{i}\in\Nonoid_{i}$. 

\begin{itemize}
\item[] We prove~\eqref{cond:run:b} by induction on the depth of $x$ starting
  from the root.

  Suppose that there is an accepting run of $(\llpar\alpha\rrpar_{\aNonoid_i}^=
  , d)$ from $x$ on $\aA \otimes \bB \otimes \dD$. By definition of the
  transition set, one of the following must hold.
\begin{itemize}
\item The test $\tup{\rexp(\alpha,\aNonoid_i)}^=$ succeeded. Then, by inductive
    hypothesis on $i$, there is a downward path $\apath$ starting at $x$ and ending at some
    position with data value $d$ so that the path verifies some $w \in
    \A_{\eta,i}^*$ with $\aNonoid_i \cdot \norphism_i(w) \in S_\alpha$ and we
    are done. Note that this is the only possible case at the root of $\tT$,
    hence proving the base case.

  \item $(\llpar \alpha \rrpar^=_{\aNonoid_i \cdot \aNonoid'_i\cdot
      \norphism_i(\upw)},d)$ has an accepting run from the parent $y$ of
    $x$. By inductive hypothesis on $i$ there is a non-ascending looping path
    $\apath''$ that starts and ends at $x$, and verifying a word $w''\in\A_{\eta,i}$ so that
    $\norphism_i(w'')=\aNonoid'_i$.  By inductive hypothesis on the depth of
    $x$, this means that there is a path $\apath$ that starts at $y$ and ends
    at a node with data value $d$, so that $\apath$ verifies some $w \in
    \A_{\eta,i}^*$ where $\aNonoid_i \cdot \aNonoid'_i \cdot \norphism_i(\upw)
    \cdot \norphism_i(w) \in S_\alpha$. Therefore the path $\apath' = \apath''
     \cdot
    \apath$ starts at $x$ and ends at a node with data value $d$, and $\apath'$
    verifies $w'= w'' \upw w$, where $\aNonoid_i \cdot \norphism_i(w')
    \in S_\alpha$.
\end{itemize}

Suppose now there is a path $\apath$ starting in $x$ and ending at a node with
a data value $d$, verifying some $w \in \A_{\eta,i}^*$ so that $\aNonoid_i
\cdot \norphism_i(w) \in S_\alpha$. We want to show that the thread $(\llpar \alpha
\rrpar^=_{\aNonoid_i},d)$ has an accepting run starting at $x$.
Then either
\begin{itemize}
\item $\apath$ is non-ascending, then by induction on $i$, $w
  \in\rexp(\alpha,\aNonoid_i)$ and the test $\tup{\rexp(\alpha,\aNonoid_i)}^=$ succeeds.

\item $\apath=\apath'\cdot \apath''$ where $\apath'$ is the maximal prefix of
  $\apath$ that is looping and non-ascending and $\apath''$ starts at the
  parent $y$ of $x$. Let $w'$ be such that $w=w'w''$, $\apath'$ verifies $w'$
  and $\apath''$ verifies $w''$. Let $\aNonoid'_i=\norphism_i(w')$.  By
  induction on the depth of $x$, we know that the thread $(\llpar \alpha
  \rrpar^=_{\aNonoid_i\conc\aNonoid'_i\conc \norphism_i(\upw)},d)$ has an accepting run starting at
  $y$.
\end{itemize}

The case of  $\llpar \alpha \rrpar^{\neq}_{\aNonoid_i}$ is treated similarly.

\medskip

The cases~\eqref{cond:run:c} and~\eqref{cond:run:d} are treated similarly.

\medskip

\item [] Consider for example~\eqref{cond:run:c}.  The proof is again by induction on
  the depth of $x$ starting from the root. We do only the case
  $\llpar\alpha\rrpar_{\aNonoid_i}^{\neg =}$ as the other one is proven identically.

  Suppose that there is an accepting run of
  $(\llpar\alpha\rrpar_{\aNonoid_i}^{\neg =}
  , d)$ from $x$ on $\aA \otimes \bB \otimes \dD$. By definition of the
  transition set, all of the following must hold.
\begin{itemize}
\item The test $\overline{\tup{\rexp(\alpha,\aNonoid_i)}^=}$ succeeded. Then,
  by inductive hypothesis on $i$, there is no downward path $\apath$ starting
  at $x$ and ending at some position with data value $d$ so that the path
  verifies some $w \in \A_{\eta,i}^*$ with $\aNonoid_i \cdot \norphism_i(w) \in
  S_\alpha$. Note that this is the only possible case at the root of $\tT$,
  hence proving the base case.

\item $(\llpar \alpha \rrpar^{\lnot =}_{\aNonoid_i \cdot \aNonoid'_i\cdot
    \norphism_i(\upw)},d)$ has an accepting run from the parent $y$ of $x$ for
  any $\aNonoid'_i\in\bB_i(x)$. By inductive hypothesis on the depth of $x$,
  this means that there is no path $\apath$ that starts at $y$ and ends at a
  node with data value $d$, so that $\apath$ verifies some $w \in
  \A_{\eta,i}^*$ where $\aNonoid_i \cdot \aNonoid'_i \cdot \norphism_i(\upw)
  \cdot \norphism_i(w) \in S_\alpha$. Assume now that there was a path
  $\apath'$ starting from $x$ and ending at some position with data value $d$
  so that the path verifies some $w \in \A_{\eta,i}^*$ with $\aNonoid_i \cdot
  \norphism_i(w) \in S_\alpha$. If this path is not ascending then it is a
  contradiction with the previous case. If this path is ascending then it
  starts with a non-ascending loop at $a$ and the continue from $y$. By
  induction on $i$ the non-ascending loop part evaluates to
  $\aNonoid\in\bB(x)$ and therefore we also get a contradiction with the the
  fact that $(\llpar \alpha \rrpar^{\lnot =}_{\aNonoid_i \cdot \aNonoid'_i\cdot
    \norphism_i(\upw)},d)$ has an accepting run from $y$.
\end{itemize}

The converse direction is treated similarly.

\medskip

\item [] A similar reasoning shows the case~\eqref{cond:run:d} and is omitted here.

\medskip

\item [] Property~\eqref{cond:run:a} is now a simple consequence of~\eqref{cond:run:b}--\eqref{cond:run:d}.\qedhere
\end{itemize}
\end{proof}


\section{Concluding remarks}

We have exhibited a decidable class of automata over data trees. This automaton
model is powerful enough to code node expressions of $\rxpath(\frak V, =)$. Therefore, since
these expressions are closed under negation, we have shown
decidability of the satisfiability, containment and equivalence problems for node
expressions of $\rxpath(\frak V, =)$.

Consider the containment problem for \emph{path} expressions as the problem of, given two path expressions $\alpha$ and $\beta$ of our logic, whether $\dbracket{\alpha}^\tT \subseteq \dbracket{\beta}^\tT$ for all data trees $\tT$.
It is straigtforward that the technique used in \cite[Proposition 4]{CateL09} to reduce containment of path expresions into the satisfiability problem for node expressions works also in our context. Indeed, this technique is independent of having data equality tests and only requires that the logic be closed under boolean connectives. We therefore obtain a decision procedure for this problem as a corollary of Theorem~\ref{thm:rvxpath-decidable}.
\begin{proposition}
The containment problem for path expressions of $\rxpath(\frak V, =)$ is decidable.
\end{proposition}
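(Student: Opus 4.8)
The plan is to reduce the containment problem for path expressions to the satisfiability problem for node expressions, which is decidable by Theorem~\ref{thm:rvxpath-decidable}; this is exactly the reduction of \cite[Proposition~4]{CateL09}, which, as noted there, relies only on closure of the logic under boolean connectives and hence is oblivious to the presence of data equality tests.

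First, given $\alpha,\beta \in \rxpath(\frak V,=)$ over a finite alphabet $\A$, I would introduce a fresh marker by passing to the alphabet $\A' = \A \times \set{0,1}$: let $t$ abbreviate the node expression ``the second component equals $1$'' (a disjunction of label tests over $\A'$), and rewrite every label test $a \in \A$ occurring in $\alpha$ or $\beta$ as $(a,0) \lor (a,1)$, so that $\alpha$ and $\beta$ become path expressions of $\rxpath(\frak V,=)$ over $\A'$ whose denotation on a data tree over $\A'$ coincides with their denotation over $\A$ on the projection that forgets the marker.

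Second, I would form the node expression $\varphi \coloneqq \tup{\alpha[t]} \land \lnot \tup{\beta[t]}$ and prove that $\dbracket{\alpha}^\tT \subseteq \dbracket{\beta}^\tT$ for every data tree $\tT$ over $\A$ if and only if $\dbracket{\varphi}^{\tT'} = \emptyset$ for every data tree $\tT'$ over $\A'$. For the nontrivial direction, suppose $x \in \dbracket{\varphi}^{\tT'}$ for some data tree $\tT'$ over $\A'$ and some node $x$; then $x \in \dbracket{\tup{\alpha[t]}}^{\tT'}$ yields a node $y$ with $(x,y) \in \dbracket{\alpha}^{\tT'}$ and $y$ marked, while $x \notin \dbracket{\tup{\beta[t]}}^{\tT'}$ says that no marked node is $\beta$-reachable from $x$, so in particular $(x,y) \notin \dbracket{\beta}^{\tT'}$; forgetting the marker then gives a data tree over $\A$ witnessing $(x,y) \in \dbracket{\alpha} \setminus \dbracket{\beta}$. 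Conversely, from any $(\tT,x,y)$ with $(x,y) \in \dbracket{\alpha}^\tT \setminus \dbracket{\beta}^\tT$ over $\A$ one builds $\tT'$ by marking precisely the node $y$, and then $x \in \dbracket{\varphi}^{\tT'}$. Note that one never needs the marker to be placed at a \emph{unique} node: the conjunct $\lnot \tup{\beta[t]}$ already excludes \emph{all} marked $\beta$-successors of $x$.

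Finally, since $\rxpath(\frak V,=)$ is closed under $\lnot$, $\land$, and under $\tup{\cdot}$ applied to path expressions, $\varphi$ is a genuine node expression, so Theorem~\ref{thm:rvxpath-decidable} decides whether $\dbracket{\varphi}^{\tT'} \neq \emptyset$ for some data tree $\tT'$, which by the equivalence above decides whether $\alpha \subseteq \beta$. I do not anticipate a genuine obstacle here: the only points that need attention are the routine bookkeeping of the alphabet extension and the label-test rewriting, and the elementary observation that the marker need not be made unique — which is precisely what keeps the whole construction inside the boolean-closed fragment of the logic and lets the argument of \cite{CateL09} carry over verbatim.
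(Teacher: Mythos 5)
Your proof is correct and follows essentially the same route as the paper, which simply invokes the marking reduction of \cite[Proposition 4]{CateL09} to node-expression satisfiability without spelling out the details. Your explicit construction of the marked alphabet, the rewriting of label tests, and the formula $\tup{\alpha[t]} \land \lnot\tup{\beta[t]}$ (together with the observation that the marker need not be unique) is exactly the content of that reduction, so the argument goes through as you describe.
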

 
Our decision algorithm relies heavily on the fact that we work
with {\bf unranked} data trees. As already shown in~\cite{FS09} without this assumption
\vxpath would be undecidable. In particular if we further impose the presence
of a DTD, \vxpath becomes undecidable, unless the DTD is simple enough for
being expressible as a \budta.




 \bibliographystyle{alpha}
 \bibliography{biblio}

\begin{thebibliography}{BMSS09}

\bibitem[BFG08]{BFG08}
Michael Benedikt, Wenfei Fan, and Floris Geerts.
\newblock {XP}ath satisfiability in the presence of {DTD}s.
\newblock {\em Journal of the ACM}, 55(2):1--79, 2008.

\bibitem[BK08]{BK08}
Michael Benedikt and Christoph Koch.
\newblock {XP}ath leashed.
\newblock {\em ACM Computing Surveys}, 41(1), 2008.

\bibitem[BMSS09]{BMSS09:xml:jacm}
Miko{\l}aj Boja{\'n}czyk, Anca Muscholl, Thomas Schwentick, and Luc Segoufin.
\newblock Two-variable logic on data trees and {XML} reasoning.
\newblock {\em Journal of the ACM}, 56(3):1--48, 2009.

\bibitem[Dic13]{dicksonslem}
Leonard~E. Dickson.
\newblock Finiteness of the odd perfect and primitive abundant numbers with n
  distinct prime factors.
\newblock {\em The American Journal of Mathematics}, 35(4):413--422, 1913.

\bibitem[Fig09]{F09}
Diego Figueira.
\newblock Satisfiability of downward {XP}ath with data equality tests.
\newblock In {\em ACM Symposium on Principles of Database Systems (PODS'09)},
  2009.

\bibitem[Fig10]{Fig10}
Diego Figueira.
\newblock Forward-{XP}ath and extended register automata on data-trees.
\newblock In {\em International Conference on Database Theory (ICDT'10)}, 2010.

\bibitem[Fig11]{Fig11}
Diego Figueira.
\newblock A decidable two-way logic on data words.
\newblock In {\em Annual IEEE Symposium on Logic in Computer Science
  (LICS'11)}, Toronto, Canada, 2011.

\bibitem[Fig12]{Fig12b}
Diego Figueira.
\newblock Alternating register automata on finite data words and trees.
\newblock {\em Logical Methods in Computer Science}, 8(1:22), 2012.

\bibitem[Fig13]{Fig13}
Diego Figueira.
\newblock On {XP}ath with transitive axes and data tests.
\newblock In Wenfei Fan, editor, {\em ACM Symposium on Principles of Database
  Systems (PODS'13)}, 2013.

\bibitem[FS01]{FS01}
Alain Finkel and {\relax Ph}ilippe Schnoebelen.
\newblock Well-structured transition systems everywhere!
\newblock {\em Theoretical Computer Science}, 256(1-2):63--92, 2001.

\bibitem[FS09]{FS09}
Diego Figueira and Luc Segoufin.
\newblock Future-looking logics on data words and trees.
\newblock In {\em Intl. Symp. on Mathematical Foundations of Computer Science
  (MFCS'09)}, 2009.

\bibitem[FS11]{FigueiraS11}
Diego Figueira and Luc Segoufin.
\newblock Bottom-up automata on data trees and vertical {XPath}.
\newblock In {\em International Symposium on Theoretical Aspects of Computer
  Science}, pages 93--104, 2011.

\bibitem[GF05]{GF05}
Floris Geerts and Wenfei Fan.
\newblock Satisfiability of {XPath} queries with sibling axes.
\newblock In {\em Intl. Symp. on Database Programming Languages (DBPL'05)},
  2005.

\bibitem[GKP05]{GKP05}
Georg Gottlob, Christoph Koch, and Reinhard Pichler.
\newblock Efficient algorithms for processing {XPath} queries.
\newblock {\em ACM Transactions on Database Systems}, 30(2):444--491, 2005.

\bibitem[JL11]{JL11}
Marcin Jurdzi{\'n}ski and Ranko Lazi{\'c}.
\newblock Alternating automata on data trees and {XPath} satisfiability.
\newblock {\em ACM Transactions on Computational Logic}, 2(3), 2011.

\bibitem[Min67]{MM67}
Marvin~L. Minsky.
\newblock {\em Computation: finite and infinite machines}.
\newblock Prentice-Hall, Inc., 1967.

\bibitem[tCL09]{CateL09}
Balder ten Cate and Carsten Lutz.
\newblock The complexity of query containment in expressive fragments of {XPath
  2.0}.
\newblock {\em Journal of the ACM}, 56(6), 2009.

\end{thebibliography}


 \end{document}